\documentclass[a4paper,UKenglish,numberwithinsect]{lipics-v2021}

\usepackage{booktabs}
\usepackage{float}
\usepackage{tikz}
\usetikzlibrary{arrows.meta,positioning}
\usepackage{pgfplots}
\usepgfplotslibrary{groupplots}
\pgfplotsset{compat=1.18}

\definecolor{batchrouteblue}{RGB}{34,113,179}
\definecolor{batchroutefill}{RGB}{226,239,250}
\definecolor{batchcellfill}{RGB}{255,235,170}
\definecolor{batchcellline}{RGB}{218,142,0}
\definecolor{batchroutepoint}{RGB}{190,48,48}
\definecolor{cyclecellblue}{RGB}{216,233,247}
\definecolor{cyclelineblue}{RGB}{86,139,173}
\definecolor{cyclewarmfill}{RGB}{255,239,215}
\definecolor{cyclewarmline}{RGB}{230,126,34}
\definecolor{cyclegrayfill}{RGB}{240,242,244}
\definecolor{cycleredline}{RGB}{201,55,48}
\definecolor{cycletextgray}{RGB}{82,91,99}
\definecolor{staircellblue}{RGB}{216,233,247}
\definecolor{stairlineblue}{RGB}{86,139,173}
\definecolor{stairblockgreen}{RGB}{220,244,224}
\definecolor{stairgreenline}{RGB}{38,145,71}
\definecolor{stairblockgray}{RGB}{238,240,242}
\definecolor{stairgrayline}{RGB}{112,120,128}
\definecolor{stairwarmline}{RGB}{230,126,34}
\definecolor{stairredline}{RGB}{201,55,48}
\definecolor{stairtextgray}{RGB}{82,91,99}
\definecolor{benchmarkproposed}{RGB}{215,48,39}
\definecolor{benchmarkbaseline}{RGB}{117,81,166}
\definecolor{benchmarkgrid}{RGB}{205,210,215}

\hideLIPIcs
\makeatletter
\apptocmd{\thebibliography}{\fontsize{9}{11}\selectfont}{}{}
\makeatother

\floatstyle{plain}
\newfloat{algorithm}{tbp}{loa}
\floatname{algorithm}{Algorithm}
\newcounter{algline}
\newenvironment{algorithmic}{%
  \setcounter{algline}{0}%
  \begin{tabular}{@{}r@{\hspace{0.7em}}p{\dimexpr\linewidth-3em\relax}@{}}%
}{\end{tabular}}
\newcommand{\AlgLine}[2]{%
  \stepcounter{algline}%
  \makebox[1.5em][r]{\scriptsize\thealgline} & \hspace*{#1em}#2\\%
}

\newcommand{\floor}[1]{\left\lfloor #1\right\rfloor}
\newcommand{\ceil}[1]{\left\lceil #1\right\rceil}
\newcommand{\MZ}{\mathsf M_{\mathbb Z}}
\DeclareMathOperator{\mult}{\mathcal W}

\title{Counting Lattice Rectangles in \texorpdfstring{$O(n\log n)$}{O(n log n)} Arithmetic Operations}
\titlerunning{Counting Lattice Rectangles in $O(n\log n)$ Operations}

\author{Dmitry Babichev}{Independent researcher, France}{}{}{}
\author{Tatiana Shpakova}{Independent researcher, France}{}{}{}
\authorrunning{Dmitry Babichev and Tatiana Shpakova}
\Copyright{Dmitry Babichev and Tatiana Shpakova}

\ccsdesc[500]{Theory of computation~Design and analysis of algorithms}
\ccsdesc[300]{Theory of computation~Computational geometry}
\keywords{Lattice rectangles, exact counting, floor sums, M\"obius inversion, Euclidean algorithm}

\modulolinenumbers[5]

\begin{document}
\nolinenumbers
\hypersetup{pdfsubject={Preprint}}
\raggedbottom
\maketitle

\begin{abstract}
Let $F(n)$ be the number of rectangles, not necessarily axis-parallel, whose
vertices belong to the $n\times n$ square grid of lattice points.  We give an
exact algorithm that computes one prescribed value $F(n)$ in
$O(n\log n)$ arithmetic operations and $O(n^{3/4})$ arithmetic words of
working memory.  The algorithm decomposes the count into M\"obius divisor
layers, partitions weighted floor-moment queries by a truncated Euclidean
coefficient-cone recursion, and reuses uniform marker grids along common
coefficient paths.  Each marker requires only its uniform cell and constant-size
corrections at nearby boundaries, which select an exact precompiled cell
operator.  All integer operands have $O(\log n)$ bits.  An exact 128-bit C++
implementation for the reported input range is compared experimentally with the previous
$O(n\log^2 n)$ algorithm.
\end{abstract}

\section{Introduction}

Let $F(n)$ denote the number of rectangles, axis-parallel or oblique, whose
four vertices belong to the point grid $\{0,\ldots,n-1\}^2$.  This is OEIS
A085582~\cite{oeis}.  Previous work by Babichev and Babichev~\cite{paper}
computes one exact value in $O(n\log^2 n)$ arithmetic operations.  We reduce
this bound to $O(n\log n)$ while retaining sublinear working memory.

The problem combines lattice-point enumeration with primitive-direction
counting.  General rational-polyhedral methods are developed by Beck and
Robins~\cite{beck-robins} and Barvinok~\cite{barvinok}; primitive planar
lattice points have also been studied analytically~\cite{huxley-nowak} and
algorithmically~\cite{pawlewicz-patrascu}.  Here the orthogonality and
side-multiplicity structure of rectangles leads to a more specialized
floor-moment problem.

Weighted floor sums and reciprocity also occur in integer-point transforms of
rational cones and in power sums of floor functions, notably through
Dedekind--Carlitz and Rademacher--Carlitz polynomials and generalized Dedekind
sums~\cite{beck-haase-matthews,beck-kohl,brown-floor-powers}; the continued-fraction and
continuant facts used below are standard~\cite{rockett-szusz}.  Reusing initial
Euclidean quotient sequences goes back at least to Lehmer's large-integer GCD
method and its modern fast-arithmetic
descendants~\cite{lehmer-gcd,brent-zimmermann}.  Those methods accelerate one large GCD
instance.  Here common prefixes are instead shared across the
$\Theta(N)$ related coefficient pairs of a layer of size $N$ and are coupled to
two finite-endpoint marker traces.

The first step expresses the non-axis-parallel contribution through primitive
direction vectors and then through M\"obius divisor layers.  For one layer of
size $N$, a half-domain reduction leaves six weighted floor moments.  A
truncated Euclidean recursion groups their parameters by coefficient paths.
Uniform marker grids compile the common part of each path, while a constant
menu of local boundary corrections identifies the exact cell operator for
every query.  Choosing the three internal scales as
$N^{1/2},N^{1/4},N^{1/4}$ gives linear arithmetic work per distinct divisor
layer and $O(N^{3/4})$ words of peak storage.  Summing the distinct quotient
layers yields the claimed $O(n\log n)$ total.

The geometric parametrization, M\"obius divisor-layer identity, and pointwise
six-moment Euclidean kernel build on~\cite{paper}.  We restate those reductions
in the exact form needed here.  The new contribution starts at batch
evaluation: coefficient cones compile common coefficient traces, modular-orbit
marker grids classify both finite endpoints, and constant-size correction
menus select an exact precompiled cell operator.  These ingredients replace
an independent $O(\log N)$ recursion for each coefficient pair by linear work
per layer.  We give the complete algorithm and prove its $O(n\log n)$ time and
$O(n^{3/4})$ memory bounds.  A companion paper studies the complementary
all-values problem of computing the complete prefix $F(1),\ldots,F(N)$ by
rational generating series~\cite{companion-all-values}.

We distinguish arithmetic-operation bounds from bit complexity.  An
\emph{arithmetic word} stores a signed integer of $O(\log n)$ bits, and one
arithmetic operation is an exact addition, subtraction, multiplication,
Euclidean division or remainder, or comparison on such words; random-access
table lookup costs $O(1)$.  GCD and modular-inverse construction costs are
accounted for explicitly where they occur; see also the standard computational
algebra model in~\cite{modern-computer-algebra}.  Lemma~\ref{lem:one-value-operands}
proves that every exact operand fits this model.  Thus the arithmetic bound implies
$O(n\log n\,\MZ(\log n))$ bit operations, where $\MZ(b)$ is the cost of
multiplying two $b$-bit integers.  Fast exact division and remainder have the
same asymptotic cost at this scale~\cite{brent-zimmermann}.  The classical
Sch\"onhage--Strassen bound is $\MZ(b)=O(b\log b\log\log b)$~\cite{schoenhage-strassen};
using the current $\MZ(b)=O(b\log b)$ bound~\cite{harvey-vdh} gives
$O(n\log^2 n\log\log n)$ bit operations.

The paper first derives the geometric and moment reductions, then outlines the
batching architecture and develops its Euclidean state recursion, marker
corrections, cell operators, and complete algorithm.  The experiments follow
the main theorem.  Lengthier identities and proofs are collected in the appendices.

\section{Geometric reduction and M\"obius divisor layers}
\label{sec:geometric-one}\label{sec:mobius-layers}

Every lattice rectangle has two perpendicular side vectors $x(u,v)$ and
$y(-v,u)$, where $x,y$ are positive integers and $(u,v)$ is a primitive
integer vector.
Reflecting and interchanging the sides gives a unique representative with
$u\geqslant v\geqslant0$ and $x\geqslant y$.  Its axis-aligned bounding box
has width $xu+yv$ and height $xv+yu$, hence it has
$(n-xu-yv)(n-xv-yu)$ translations in the $n\times n$ point grid.  For
$0<v<u$, the reflection orbit has size two when $x=y$ and four when $x>y$;
set $\mult(x,y)=2$ when $x=y$ and $\mult(x,y)=4$ when $x>y$.

There are two boundary direction classes.  If $v=0$, primitivity forces
$(u,v)=(1,0)$ and gives the axis-parallel rectangles.  If $u=v$, primitivity
forces $(u,v)=(1,1)$ and gives the rectangles whose sides have slopes $1$ and
$-1$.  Writing $s=x+y$ in this diagonal class, their total is
\begin{equation}\label{eq:boundary-count}
F_0(n)=\binom n2^2+\sum_{s=2}^{n-1}(s-1)(n-s)^2
=\frac{n(n-1)^2(2n-1)}6.
\end{equation}
The remaining representatives have $u>v>0$, so
\begin{equation}\label{eq:geometric}
F_1(n)=
\sum_{\substack{u>v>0,\ x\geqslant y\geqslant1\\
(u,v)=1,\ xu+yv\leqslant n}}
\mult(x,y)(n-xu-yv)(n-xv-yu).
\end{equation}
Here $(xu+yv)-(xv+yu)=(u-v)(x-y)\geqslant0$, so the first
bounding-box dimension dominates the second and the single inequality in
\eqref{eq:geometric} is sufficient.
The preceding classification is exhaustive and its representatives are
disjoint, hence
\begin{equation}\label{eq:full-geometric-count}
F(n)=F_0(n)+F_1(n).
\end{equation}

Use the standard M\"obius identity
$\mathbf 1_{(u,v)=1}=\sum_{d\mid u,\,d\mid v}\mu(d)$
\cite[Chapter~2]{apostol}, where $\mathbf 1_E$ is the indicator of $E$ and
$\mu$ is the M\"obius function.  Write $u=da$, $v=db$, and put
\begin{equation}
N_d=\floor{n/d},\qquad
S_d(n)=
\sum_{\substack{a>b\geqslant1,\ x\geqslant y\geqslant1\\ax+by\leqslant N_d}}
\mult(x,y)
\bigl(n-d(ax+by)\bigr)\bigl(n-d(bx+ay)\bigr).
\end{equation}
Then
\begin{equation}\label{eq:layers}
F_1(n)=\sum_{d\leqslant n}\mu(d)S_d(n).
\end{equation}
For a fixed layer size $N$, let
\[
\Omega_N=
\{(a,b,x,y):a>b\geqslant1,\ x\geqslant y\geqslant1,\ ax+by\leqslant N\}
\]
and define
\begin{align*}
A(N)&=\sum_{\Omega_N}\mult(x,y),\\
B(N)&=\sum_{\Omega_N}\mult(x,y)(a+b)(x+y),\\
C(N)&=\sum_{\Omega_N}\mult(x,y)
\bigl(ab(x^2+y^2)+(a^2+b^2)xy\bigr).
\end{align*}
We call the coefficient triple $(A(N),B(N),C(N))$ the \emph{divisor layer of
size $N$}.
Since $(a-b)(x-y)\geqslant0$, the displayed constraint also enforces the
second bounding-box inequality.  Expanding the two linear factors gives
\begin{equation}\label{eq:layer-coefficients}
S_d(n)=n^2A(N_d)-ndB(N_d)+d^2C(N_d).
\end{equation}
Hence quotient blocking evaluates $(A(N),B(N),C(N))$ once and reuses it for
all divisors with the same $N_d$.

\section{Weighted floor moments}\label{sec:weighted-floor-moments}

This section converts the four-dimensional layer sum into at most one
six-moment floor query per coefficient pair, plus formulas involving ordinary
power sums only.  The half-domain symmetry is the interface between the
geometric counting problem and the recursive Euclidean transducer developed
below.
For background on floor functions and their role in lattice-point
reciprocity, see~\cite{concrete-math,beck-robins}.

For $0\leqslant b,\beta<a$ and $q\geqslant0$, put
$f(t)=\floor{(bt+\beta)/a}$ for $0\leqslant t<q$.
A query with $q=0$ has an empty summation range.  A normalized query with
$b=0$ is also identically zero because $0\leqslant\beta<a$.  Both cases return the
zero moment vector before any reciprocal step.  In the sequel a
\emph{nontrivial query} means $q>0$ and $0<b<a$.  Write
\begin{equation}\label{eq:sixmom}
 \Phi_{ij}(f):=\sum_{0\leqslant t<q}t^i f(t)^j,\qquad
 \boldsymbol\Phi(f):=
 (\Phi_{01},\Phi_{11},\Phi_{21},\Phi_{02},\Phi_{12},\Phi_{03})^{\mathsf T}.
\end{equation}
This six-moment vector is required by the geometric upper-limit formulas in
Appendix~\ref{app:half-domain-moments}.  The Euclidean recurrence, however,
has a simpler closed action on the six lattice moments
\[
 \mathsf L_{ij}(f):=\sum_{0\leqslant t<q}
 \sum_{1\leqslant s\leqslant f(t)}t^is^j,\qquad
 \mathbf L(f):=
 (\mathsf L_{00},\mathsf L_{10},\mathsf L_{20},
  \mathsf L_{01},\mathsf L_{11},\mathsf L_{02})^{\mathsf T},
\]
by the explicit identities
\begin{equation}\label{eq:floor-lattice-conversion}
\begin{gathered}
\Phi_{01}=\mathsf L_{00},\quad \Phi_{11}=\mathsf L_{10},\quad
\Phi_{21}=\mathsf L_{20},\\
\Phi_{02}=2\mathsf L_{01}-\mathsf L_{00},\quad
\Phi_{12}=2\mathsf L_{11}-\mathsf L_{10},\quad
\Phi_{03}=3\mathsf L_{02}-3\mathsf L_{01}+\mathsf L_{00}.
\end{gathered}
\end{equation}
The vector $\mathbf L$ is the internal state
transported by every compiled operator in Section~\ref{sec:recursive-operators};
the algorithm converts $\mathbf L$ to the required $\Phi$-moments only once,
after the root staircase has been evaluated.

Put $D=\floor{\sqrt N}$.  For a fixed direction pair $a>b\geqslant1$ and a
summed side pair $x>y\geqslant1$, define
\[
\mathbf w_{a,b}(x,y)=
\left(
1, (a+b)(x+y),
ab(x^2+y^2)+(a^2+b^2)xy
\right).
\]
This is the contribution to $(A,B,C)$ before the side-multiplicity factor.

\begin{lemma}[Half-domain moment reduction]\label{lem:momentreduction}
The triple $(A(N),B(N),C(N))$ is given by the formulas in
Appendix~\ref{app:half-domain-moments}.
The reduction uses at most one six-moment floor query for every
$a>b\geqslant1$ with $a\leqslant D$, $O(1)$ power-sum evaluations per pair,
and an $O(N)$ diagonal sum.  In particular, one layer contains at most
$D(D-1)/2$ nonempty floor queries and is generated in $O(N)$ operations.
\end{lemma}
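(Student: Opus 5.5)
The plan is to split $\Omega_N$ by comparing the direction index with the side index, in the spirit of the Dirichlet hyperbola method.

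\textbf{Splitting the domain.} First separate the diagonal terms $x=y$, which carry weight $\mult=2$. For them the constraint reads $(a+b)x\leqslant N$. With $s=a+b$ fixed, the number of pairs $a>b\geqslant1$ and the power sums in $a,b$ are closed-form polynomials in $s$. The sum over $x\leqslant\floor{N/s}$ is likewise a closed-form power sum. So the diagonal contributes $O(N)$ operations in total, one block of $O(1)$ work per $s\leqslant N$; this is the announced diagonal sum. For the off-diagonal terms $x>y\geqslant1$ (weight $4$), note that $ax+by\leqslant N$ with $a>b$ and $x>y$ forces $\min(a,x)^2<ax\leqslant N$. Hence either $a\leqslant D$ or $x\leqslant D$. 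By the symmetry $(a,b)\leftrightarrow(x,y)$, the weight vector $\mathbf w_{a,b}(x,y)$ and the constraint are both invariant: $B$ is symmetric in the two pairs, and the $C$-term $ab(x^2+y^2)+(a^2+b^2)xy$ is symmetric under the swap. Inclusion--exclusion therefore gives
\[
\sum_{\text{off-diag}} = 2\sum_{a\leqslant D}\ -\ \sum_{a\leqslant D,\ x\leqslant D}.
\]
The correction term ranges over a region of size $O(D^2)\cdot O(1)$ pairs of pairs. I would rather absorb it: for fixed $(a,b)$, restrict $x$ to $x>D$ in one copy. This amounts to evaluating, for each fixed $(a,b)$ with $a\leqslant D$, a sum over $(x,y)$ with an extra lower bound on $x$.

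\textbf{The per-pair query.} Fix $a>b\geqslant1$ with $a\leqslant D$. Summing over $x$ for fixed $y$ is awkward, so sum instead over $y$ for fixed $x$. The constraints $y<x$ and $y\leqslant (N-ax)/b$ give an upper limit $\min(x-1,\floor{(N-ax)/b})$. The crossover occurs at $x\approx (N+b)/(a+b)$. Below it the limit is $x-1$ and only power sums are needed. Above it the limit is a floor of an affine function of $x$. Reparametrize $x=x_0+t$ for $0\leqslant t<q$ and reduce $\floor{(N-ax_0-at)/b}$ to normalized form: write $-a=-\lceil a/b\rceil b+(b\lceil a/b\rceil-a)$, so the floor becomes $c-kt-\floor{(b't+\beta)/b}$ with $0\leqslant b'<b$ and $0\leqslant\beta<b$. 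Substituting into the polynomial weights (degree at most $2$ in $x$, and at most $3$ in $y$ after summing $1$, $y$, $y^2$ over $y\leqslant f$) produces exactly the moments $\Phi_{01},\Phi_{11},\Phi_{21},\Phi_{02},\Phi_{12},\Phi_{03}$ of one normalized floor function. The remaining terms are ordinary power sums in $t$. This uses one query per pair, $O(1)$ power sums, and zero queries when $q=0$ or $b'=0$. The cutoff $x>D$ from the hyperbola split only shifts $x_0$ and adds power-sum terms.

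\textbf{Counting and the main obstacle.} The pair count is $\#\{a>b\geqslant1,\ a\leqslant D\}=D(D-1)/2$, and generating each query costs $O(1)$, so this part is $O(D^2)=O(N)$. Together with the diagonal sum, the layer is generated in $O(N)$ operations. The main obstacle is bookkeeping rather than conceptual. I need to check that after the index shift and sign reversal ($c-kt-f$), the expansion of the cubic-degree weights in $y$ still involves only the six listed moments and none such as $\Phi_{22}$. The weights have $x$-degree plus $y$-degree at most $2$ before summing over $y$, and summing raises the $y$-degree by one. So the total degree stays at most $3$, with $t$-degree $\leqslant 2$ and $f$-degree $\leqslant 3$, mixed only as $t^if^j$ with $i+j\leqslant3$ and $j\geqslant1$. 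That is exactly the set of six moments. I would verify this, and the boundary cases where the crossover lies outside $(D,\infty)$, explicitly against the formulas in Appendix~\ref{app:half-domain-moments}.
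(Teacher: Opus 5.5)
Your identity $\sum_{\mathrm{off}}=\sum_{a\leqslant D}+\sum_{a\leqslant D,\,x>D}$ is correct. The step that fails is ``the cutoff $x>D$ only shifts $x_0$ and adds power-sum terms''. For a fixed pair $(a,b)$ your decomposition gives every column $x\leqslant D$ weight $4$ and every column $x>D$ weight $8$. The value $D$ can lie strictly inside the nonempty floor regime, i.e.\ the range of $x$ with $1\leqslant\floor{(N-ax)/b}<x-1$. For example, with $N=100$, $D=10$ and $(a,b)=(7,6)$, this regime is $9\leqslant x\leqslant 13$. The cutoff splits it into $\{9,10\}$ with weight $4$ and $\{11,12,13\}$ with weight $8$. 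Floor moments over a proper subinterval cannot be obtained from the whole-interval moments by power-sum corrections; they form another floor query with a shifted intercept. So such a pair needs a second six-moment query. This happens for a positive proportion of the pairs (roughly those with $a<D<a+b$), often with both pieces of length $\Theta(D)$. Your route therefore still gives $O(N)$ generation, but not the statement's ``at most one query per pair'' or ``at most $D(D-1)/2$ nonempty floor queries''.

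The missing idea is to cut along the exchange symmetry itself rather than at the fixed value $D$: $\sum_{\mathrm{off}}=2\sum_{a\leqslant x}-\sum_{a=x}$. Then $a^2\leqslant ax\leqslant N$ gives $a\leqslant D$ automatically, and the equality term $x=a$ is a single column in closed form. The lower bound $x\geqslant a$ only moves the start of one floor interval of constant weight, so one query per pair suffices. The paper also sums over $x$ for fixed $y$ rather than over $y$ for fixed $x$. The admissible $x$ then form $[\max(a,y+1),\floor{(N-by)/a}]$, and the lower end contributes only power sums. Reversing $y=q-t$ with $N-bq=\eta a+\beta$ gives $\eta+\floor{(bt+\beta)/a}$, already normalized with coefficient pair exactly $(a,b)$. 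These are the appendix formulas the statement cites, and the cone batching of Section~\ref{sec:terminal-coordinates} relies on the query following the Euclidean path of $(a,b)$. Your orientation instead gives modulus $b$ and coefficient $\ceil{a/b}b-a$, so that batching would have to be re-derived.

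There are also three minor slips. First, only $\min(a,x)^2\leqslant ax$ holds, with equality at $a=x$. Second, with $k=\ceil{a/b}$ the normalized floor enters with a plus sign, $c-kt+\floor{(b't+\beta)/b}$. Third, the abandoned region $a,x\leqslant D$ has $\Theta(D^4)$ quadruples, not $O(D^2)$. Your degree count yielding exactly the six moments, and your diagonal sum, are fine.
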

\begin{proof}[Proof sketch]
Exchange the direction pair $(a,b)$ with the side pair $(x,y)$ and keep
the half with $a\leqslant x$.  Then $a\leqslant\sqrt N$, which leaves only
$O(N)$ direction pairs.  For a fixed pair, the admissible values of $x$ form
one interval for every $y$.  Reversing the $y$-order turns its upper endpoint
into the single normalized floor query $f(t)=\floor{(bt+\beta)/a}$,
$0\leqslant t<q$.
Its six moments give all degree-two sums over the interval.  The lower
endpoint, the equality boundary $a=x$, and the side diagonal $x=y$ are
ordinary power sums.  The exact interval endpoints, six moment conversions,
and boundary formulas are collected in
Appendix~\ref{app:half-domain-moments}.
\end{proof}

For each coefficient pair, call its nonempty normalized floor query together
with the data needed to recover its contribution to $(A(N),B(N),C(N))$ a
\emph{floor-moment record}.  The formulas in
Appendix~\ref{app:half-domain-moments} completely specify record generation.
There are at most $D(D-1)/2=O(N)$ six-moment queries in one layer.

The six moments are closed under extracting integral parts and under the
reciprocal floor transformation.  A pointwise query therefore costs
$O(\log a)$; evaluating all queries pointwise would cost $O(N\log N)$ for
the layer.
The next section outlines how common quotient prefixes and uniform marker grids
remove this logarithm.  All running times below use the arithmetic-word model
fixed in the introduction.

\section{Idea of the \texorpdfstring{$O(n\log n)$}{O(n log n)} algorithm}
\label{sec:one-value-overview}

For a fixed divisor layer of size $N$, Lemma~\ref{lem:momentreduction}
produces a linear-size family of closely related floor staircases.  The
algorithm shares their Euclidean work at two levels.  Recall that
$D=\floor{\sqrt N}$ is the coefficient cap, and let
$\tau=\Theta(\sqrt D)$ be the path threshold.  We write $P$ for a
coefficient path of length $d$ and $L$ for its number of marker slots.
Nonempty queries with root length $q\leqslant5\tau$ use the pointwise
recursion; the remaining nonempty queries are called \emph{long}.  The indices $i,j$ label exact
marker slots on $P$.

First, coefficient pairs with the same initial Euclidean divisions form one
batch.  Their common coefficient path is identified once, and each original
pair is left with only a small terminal pair $(M,R)$.

Within one batch, the cell operators also share prefixes.  If its $L^2$ cell
operators were compiled independently, replaying a path of length $d$ for
every cell would cost $O(L^2d)$.  Since $d$ can be logarithmic, this would add
one logarithm to the final running time.  Instead, a prefix tree merges cell
traces while they are equal and compiles all $L^2$ operators in $O(L^2)$
work.

The endpoint classification is shared at the batch level as well.  Each batch
receives two uniform grids for the endpoint markers $u$
and $v$.  The exact boundaries move slightly with $(M,R)$, but inside one
uniform cell only three nearby boundaries can matter.  We test those few
boundaries and record a local correction in $\{0,1,2,3\}$.

For two markers, the two uniform cells give a rectangle $\mathcal V_{xy}$,
and the two local corrections $(\alpha_u,\alpha_v)$ select one of at most sixteen
conceptual entries.  That entry identifies the exact rectangle operator.  The
reference implementation evaluates this constant-size map implicitly from
prefix counts and the local comparisons, then indexes the compiled operator
directly; it does not store a separate $4\times4$ table for every rectangle.  The operator
sends the record to a small terminal staircase, which is answered from a
shared table and lifted back to the six required moments.

The grids and their correction data are built once per coefficient path.  Building a separate
marker partition for every one of the $\Theta(D^2)$ coefficient pairs would
cost $O(D^2\tau)=O(N^{5/4})$; sharing them keeps the divisor layer linear.

\begin{figure}[H]
  \centering
  \begin{tikzpicture}[
    x=0.68cm,y=0.68cm,
    every node/.style={font=\normalsize,align=center},
    batchbox/.style={draw=batchrouteblue,rounded corners=2pt,thick,
                    fill=batchroutefill,inner sep=5pt},
    menuarrow/.style={-{Latex[length=1.6mm,width=1.1mm]},
                     line width=0.7pt,draw=batchrouteblue},
    recordarrow/.style={-{Latex[length=1.6mm,width=1.1mm]},
                       line width=0.7pt,draw=batchroutepoint}
  ]
    \node at (3.00,5.48) {uniform marker grid\\for one path $P$};
    \foreach \x in {1,2,3,4,5} {
      \draw[batchrouteblue!65,thin] (\x,0.85)--(\x,4.85);
    }
    \foreach \y in {0.85,1.85,2.85,3.85,4.85} {
      \draw[batchrouteblue!65,thin] (1,\y)--(5,\y);
    }
    \fill[batchcellfill] (2,2.85) rectangle (3,3.85);
    \draw[batchcellline,very thick] (2,2.85) rectangle (3,3.85);
    \node[anchor=south,text=batchcellline,fill=white,inner sep=1pt,
          font=\small] at (2.50,3.90)
      {$\mathcal V_{xy}$};

    \foreach \d in {0.25,0.50,0.75} {
      \draw[batchcellline,dashed] ({2+\d},2.85)--({2+\d},3.85);
      \draw[cyclewarmline,dashed] (2,{2.85+\d})--(3,{2.85+\d});
    }
    \coordinate (recordpoint) at (2.625,3.475);
    \fill[batchroutepoint] (recordpoint) circle (2.5pt);
    \node[anchor=south,text=batchroutepoint,font=\small,
          fill=white,inner sep=1.2pt]
      at (6.35,3.98) {one record};

    \draw[-{Latex[length=1.5mm,width=1.0mm]},thick,draw=batchrouteblue]
      (1,0.48)--(5.35,0.48);
    \node[anchor=west] at (5.42,0.48) {$u$};
    \draw[-{Latex[length=1.5mm,width=1.0mm]},thick,draw=batchrouteblue]
      (0.62,0.85)--(0.62,5.10);
    \node[anchor=east] at (0.50,4.90) {$v$};

    \node[anchor=south,align=center] at (9.96,4.80)
      {local correction menu\\[2pt]at most $16$ entries};
    \foreach \i in {0,1,2,3} {
      \foreach \j in {0,1,2,3} {
        \fill[batchroutefill] ({9.00+0.48*\i},{2.72+0.48*\j})
          rectangle +(0.48,0.48);
        \draw[batchrouteblue!75,thin] ({9.00+0.48*\i},{2.72+0.48*\j})
          rectangle +(0.48,0.48);
      }
    }
    \fill[batchcellfill] ({9.00+0.48*2},{2.72+0.48*2})
      rectangle +(0.48,0.48);
    \draw[batchcellline,very thick] ({9.00+0.48*2},{2.72+0.48*2})
      rectangle +(0.48,0.48);
    \node[anchor=north] at (9.96,2.62) {$\alpha_u$};
    \node[anchor=east] at (8.80,4.22) {$\alpha_v$};
    \node[batchbox,minimum width=2.90cm,minimum height=1.15cm]
      (operator) at (14.40,3.92)
      {compiled operator\\$\mathsf{Op}_{P;i,j}$};
    \node[batchbox,minimum width=3.05cm,minimum height=1.15cm]
      (terminal) at (14.40,1.62)
      {terminal lookup\\and operator lift};
    \node[batchbox,minimum width=1.78cm,minimum height=1.15cm]
      (moments) at (19.00,1.62) {six root\\moments};

    \draw[recordarrow,shorten >=-0.7pt]
      (recordpoint) -- (9.00,3.92);
    \node[anchor=north,text=batchroutepoint,font=\small]
      at (7.00,3.04)
      {local corrections\\$(\alpha_u,\alpha_v)=(2,2)$};
    \draw[menuarrow,shorten >=-0.7pt] (10.92,3.92)--(operator.west);
    \draw[menuarrow,shorten >=-0.7pt] (operator.south)--(terminal.north);
    \draw[menuarrow,shorten >=-0.7pt] (terminal.east)--(moments.west);
  \end{tikzpicture}
  \caption{A record first enters the uniform rectangle
  $\mathcal V_{xy}$.  Only the nearby dashed true boundaries matter inside
  that rectangle; they give the local corrections $(\alpha_u,\alpha_v)$.  The selected
  conceptual menu entry identifies the exact compiled operator.}
  \label{fig:query-batching}
\end{figure}

Four facts give linear work in one divisor layer: the coefficient batches
partition all direction pairs; each path has only quadratically many cell
operators and correction-menu entries; their total number over all paths
is linear in the layer size; and every long record performs constant work.
Quotient blocking over the divisor layers then gives the claimed
$O(n\log n)$ running time.

The remaining construction is organized as follows.
Section~\ref{sec:normalized-state} derives the exact reciprocal rule for one
query.  Section~\ref{sec:terminal-coordinates} defines coefficient batches and
terminal coordinates.  Section~\ref{sec:marker-orbit} defines the uniform-grid
corrections, and Section~\ref{sec:recursive-operators} compiles the true
operators and specifies the correction map.  The technical proofs are
collected in Appendix~\ref{app:part1-batching}; the marker-grid construction
is detailed in Appendix~\ref{app:marker-orbit-proof}.

\section{One-query Euclidean cycle and affine lift}\label{sec:normalized-state}

Section~\ref{sec:weighted-floor-moments} reduces one divisor layer to $O(N)$
floor-moment queries.  Evaluating every query by its own Euclidean recursion
would cost $O(N\log N)$.  This section derives the local reciprocal cycle for
one query; Sections~\ref{sec:terminal-coordinates}--\ref{sec:recursive-operators}
then compile and reuse it across a batch.  A normalized query has four
independent parameters, but the recurrence becomes affine after two endpoint
data are made explicit.  The direction pair
$(a,b)$ determines the ordinary Euclidean quotients, whereas the intercept
$\beta$ and the interval length $q$ determine where the finite floor staircase
begins and ends.

\subsection{The staircase and its two boundary markers}

After the zero cases above have been removed, every nontrivial floor query
has the normalized
form
\begin{equation}\label{eq:floor-query}
f(t)=\floor{\frac{bt+\beta}{a}},\qquad 0\leqslant t<q,\qquad
q>0,\quad 0<b<a,\quad 0\leqslant\beta<a.
\end{equation}
Its lattice staircase is
\begin{equation}\label{eq:lattice-staircase}
\Lambda_f=\{(t,s)\in\mathbb Z^2:0\leqslant t<q,\ 1\leqslant s\leqslant f(t)\}.
\end{equation}
Its graph is a staircase inside a rectangle of width $q$.  Its height is
$h=f(q-1)=\floor{(b(q-1)+\beta)/a}$.
The standard reciprocal transformation counts the complementary staircase
after interchanging the two coordinate directions.  The transformed floor
function is
\begin{equation}\label{eq:reciprocal-query}
\widehat f(k)=\floor{\frac{ak+u}{b}},\qquad
0\leqslant k<h,\qquad u=a-\beta-1.
\end{equation}
Thus $u$ records the left boundary of the complementary staircase.  At the
right boundary, write the endpoint numerator as $b(q-1)+\beta=ah+v$.
The remainder $v$ records where the final step meets that boundary.

\begin{definition}[Affine query state]
\label{def:affine-query-state}
The \emph{affine query state} of the normalized floor query
\eqref{eq:floor-query} is the six-coordinate column vector
\begin{equation}\label{eq:state}
\mathbf s(a,b,q,\beta):=(a,b;q,h;u,v)^{\mathsf T},\qquad
u=a-\beta-1,\quad v=b(q-1)+\beta-ah.
\end{equation}
The semicolons group its coordinates as
\[
\underbrace{(a,b)}_{\text{direction coefficients}}\ ;\quad
\underbrace{(q,h)}_{\text{staircase dimensions}}\ ;\quad
\underbrace{(u,v)}_{\text{boundary markers}}.
\]
For a sequence of reciprocal cycles, $\mathbf s_{\mathrm{root}}$ denotes its
initial state and $\mathbf s_{\mathrm{term}}$ the state after the last cycle.
\end{definition}

\begin{definition}[Boundary markers and quotient trace]
\label{def:boundary-markers}
For a normalized finite staircase, the remainders $u$ and $v$ in
\eqref{eq:state} are its \emph{left} and \emph{right boundary markers}.  They
record where the complementary staircase starts and where the final step
meets the opposite boundary.  At reciprocal cycle $i$, write the current state
as $(a_i,b_i;q_i,h_i;u_i,v_i)$ and define
$A_i=\floor{a_i/b_i}$, $U_i=\floor{u_i/b_i}$, and
$V_i=\floor{v_i/b_i}$.  These are respectively the coefficient, left-marker,
and right-marker quotients.  A \emph{word} here is simply a finite sequence of
quotients.  Thus
$(A_i)_i$ is the \emph{coefficient word}, while $(U_i)_i$ and $(V_i)_i$ are
the two \emph{marker words}.  The sequence of quotient triples
$((A_i,U_i,V_i))_i$ is the \emph{complete quotient trace} of the query.
\end{definition}

The coefficient word depends only on $(a,b)$.  The marker words additionally
depend on the finite endpoints of the staircase.  A shared operator is
therefore indexed by the coefficient word and both marker words.

Although $(a,b,q,\beta)$ determines the query and $\beta=a-u-1$, the lifted
state retains $h$, because it becomes the next interval length, and retains
$u,v$, because $U=\floor{u/b}$ and $V=\floor{v/b}$ are classified separately.
We therefore retain the affine lift throughout.  Its bounds are
$0<b<a$, $0\leqslant u,v<a$, $q>0$, and $h\geqslant0$.
If $h=0$,
monotonicity of $f$ shows that all its values are zero; such a state is
terminal and is never used as the parent of another reciprocal cycle.

The affine lift describes one finite staircase, whereas the six-moment vector
in~\eqref{eq:sixmom} contains six aggregate sums over that staircase.  A compiled operator
updates the state affinely and transports the moment vector linearly, with a
polynomial boundary correction.

\subsection{One complete Euclidean cycle}

Assume in this subsection that $h>0$, so another cycle is required.  The
reciprocal query~\eqref{eq:reciprocal-query} is not yet normalized because
its coefficient $a$ can exceed its modulus $b$.  Extract the three quotients
$A=\floor{a/b}$, $U=\floor{u/b}$, and $V=\floor{v/b}$.  Put
$b'=a-Ab$ and $\beta'=u-Ub$.  Then the normalized child is
$f'(k)=\floor{(b'k+\beta')/b}$, and
$\widehat f(k)=Ak+U+f'(k)$.
Write the current and child states explicitly as
\[
 \mathbf s=(a,b;q,h;u,v)^{\mathsf T}
 \xrightarrow{\ \mathcal T_{A,U,V}\ }
 \mathbf s'=(a',b';q',h';u',v')^{\mathsf T}.
\]
The reciprocal transformation followed by normalization is the coordinate
form of this affine state transition:
\begin{align}
a'&=b,& b'&=a-Ab,\nonumber\\
q'&=h,& h'&=q-Ah-(U+V+2-A),\label{eq:cycle}\\
u'&=(U+1)b-u-1,&v'&=(V+1)b-v-1.\nonumber
\end{align}
Its direct derivation from the reciprocal floor function is given in
Appendix~\ref{app:cycle-derivation}.

Equation~\eqref{eq:cycle} is the payoff for the lift: once the quotient triple
$(A,U,V)$ is fixed, one complete cycle is a fixed integral affine map.
The reciprocal and polynomial moment identities are fixed as well.  Hence a
whole common quotient trace can be composed in advance into
one constant-size operator and then applied to every query in the
corresponding piece of the parameter space.

\begin{figure}[!tbp]
  \centering
  \resizebox{1.0\textwidth}{!}{%
  \begin{tikzpicture}[
    x=1cm,y=1cm,
    every node/.style={font=\normalsize,align=center},
    cycleaxis/.style={-{Latex[length=2.2mm]},thick},
    cycletransfer/.style={-{Latex[length=2.6mm]},thick,draw=cyclewarmline}
  ]
    \path[use as bounding box] (0,0) rectangle (13.72,6.75);
    \node at (3.10,6.40) {(a) one normalized query};
    \node at (3.10,5.98) {$f(t)=\floor{(8t+3)/17}$};
    \node[text=cycletextgray] at (3.10,5.56)
      {$0\leq t<18$, $\mathbf s=(17,8;18,8;13,3)^{\mathsf T}$};
    \begin{scope}[shift={(0.85,1.65)}]
      \foreach \x in {0,...,17}{
        \pgfmathtruncatemacro{\hh}{floor((8*\x+3)/17)}
        \foreach \rr in {0,...,7}{
          \ifnum\rr<\hh
            \fill[cyclecellblue] ({0.25*\x},{0.25*\rr}) rectangle
              ({0.25*(\x+1)},{0.25*(\rr+1)});
          \else
            \fill[cyclegrayfill] ({0.25*\x},{0.25*\rr}) rectangle
              ({0.25*(\x+1)},{0.25*(\rr+1)});
          \fi
        }
      }
      \draw[cyclelineblue!55,xstep=0.25cm,ystep=0.25cm] (0,0) grid (4.50,2.00);
      \draw[thick] (0,0) rectangle (4.50,2.00);
      \foreach \pp in {(0.625,0.125),(1.125,0.375),(1.625,0.625),(2.375,0.875),
                       (2.875,1.125),(3.375,1.375),(3.875,1.625),(4.375,1.875)}
        \fill[cyclewarmline] \pp circle (1.7pt);
      \draw[cycleaxis] (0,-0.28)--(4.86,-0.28) node[right] {$t$};
      \draw[cycleaxis] (-0.28,0)--(-0.28,2.36) node[above] {$s$};
      \node[below] at (2.25,-0.46) {$q=18$};
    \end{scope}
    \draw[cycletransfer] (5.85,3.20)--(8.75,3.20);
    \node[above,text width=3.4cm] at (7.30,3.35)
      {$\mathbf s\mapsto\mathbf s'$\\transpose the complement};
    \node[below,text width=3.4cm] at (7.30,3.05)
      {extract\\$(A,U,V)=(2,1,0)$};
    \node[below,text=cycletextgray] at (7.30,2.12)
      {$\widehat f(k)=2k+1+f'(k)$};
    \node at (10.80,6.40) {(b) reciprocal query};
    \node at (10.80,5.98) {$\widehat f(k)=\floor{(17k+13)/8}$};
    \node[text=cycletextgray] at (10.80,5.42)
      {$\mathbf s'=(8,1;8,1;2,4)^{\mathsf T}$};
    \begin{scope}[shift={(9.65,1.10)}]
      \foreach \XX in {0,...,7}{
        \pgfmathtruncatemacro{\base}{2*\XX+1}
        \pgfmathtruncatemacro{\child}{floor((\XX+5)/8)}
        \pgfmathtruncatemacro{\total}{\base+\child}
        \foreach \rr in {0,...,15}{
          \ifnum\rr<\base
            \fill[cyclewarmfill] ({0.25*\XX},{0.25*\rr}) rectangle
              ({0.25*(\XX+1)},{0.25*(\rr+1)});
          \else
            \ifnum\rr<\total
              \fill[cyclecellblue] ({0.25*\XX},{0.25*\rr}) rectangle
                ({0.25*(\XX+1)},{0.25*(\rr+1)});
            \else
              \fill[cyclegrayfill] ({0.25*\XX},{0.25*\rr}) rectangle
                ({0.25*(\XX+1)},{0.25*(\rr+1)});
            \fi
          \fi
        }
      }
      \draw[cyclelineblue!55,xstep=0.25cm,ystep=0.25cm] (0,0) grid (2.00,4.00);
      \draw[thick] (0,0) rectangle (2.00,4.00);
      \draw[cycleredline,thick]
        (0,0)--(0,0.25)--(0.25,0.25)--(0.25,0.75)--
        (0.50,0.75)--(0.50,1.25)--(0.75,1.25)--(0.75,2.00)--
        (1.00,2.00)--(1.00,2.50)--(1.25,2.50)--(1.25,3.00)--
        (1.50,3.00)--(1.50,3.50)--(1.75,3.50)--(1.75,4.00)--
        (2.00,4.00);
      \draw[cycleaxis] (0,-0.28)--(2.36,-0.28) node[right] {$k$};
      \draw[cycleaxis] (-0.28,0)--(-0.28,4.12);
      \node[left] at (-0.28,2.00) {$\widehat f$};
    \end{scope}
    \node[anchor=west] at (8.90,0.50)
      {\tikz\fill[cyclewarmfill,draw=cyclewarmline]
        (0,0) rectangle (0.28,0.20);\ affine part $2k+1$};
    \node[anchor=west] at (8.90,0.14)
      {\tikz\fill[cyclecellblue,draw=cyclelineblue]
        (0,0) rectangle (0.28,0.20);\ child $f'(k)$};
  \end{tikzpicture}%
  }
  \caption{One reciprocal cycle for a normalized query.  All displayed cells
  are equal squares.  Left: $\Lambda_f$ (blue) and its complement (gray).
  Right: the transposed complement splits into the affine part (orange) and
  the recursive child (blue); the red staircase is the graph of $\widehat f$.}
  \label{fig:one-cycle}
\end{figure}

In the left panel of Figure~\ref{fig:one-cycle}, the orange point in row $s$
marks $t=\widehat f(s-1)+1$, the first admitted column.  In the right panel,
transposition gives $\widehat f(k)=2k+1+f'(k)$ and the state transition
$\mathbf s\mapsto\mathbf s'=(8,1;8,1;2,4)^{\mathsf T}$.  This is the local
identity from which an operator is built.  The construction of true marker
rectangles, uniform-grid correction maps, and their operators is given in
Sections~\ref{sec:marker-orbit} and~\ref{sec:recursive-operators}.

\subsection{Why the one-query rule can be reused}

For the displayed query $(a,b,q,\beta)=(17,8,18,3)$, the lift is
$(17,8;18,8;13,3)$ and $(A,U,V)=(2,1,0)$.  Changing only the intercept to
$\beta=4$ gives the different lift $(17,8;18,8;12,4)$ but the same quotient triple.
Equation~\eqref{eq:cycle} therefore applies the same affine state update to
both queries; only the numerical marker values differ.  More generally, once
the complete quotient trace $((A_i,U_i,V_i))_i$ is fixed, its cycle maps can be
composed once.  The paired construction attaches that operator to every true
rectangle on which the sequence is fixed, and the local correction menu
selects it from the two marker codes.

\section{Coefficient-cone recursion and terminal coordinates}
\label{sec:terminal-coordinates}

Fix one layer and put
\begin{equation}\label{eq:layer-scales}
 D=\floor{\sqrt N},\qquad
 \sigma=\ceil{\sqrt D},\qquad
 \tau=\ceil{3\sigma/8},\qquad
 \kappa=\max\!\left(\tau,\ceil{D/\tau}\right).
\end{equation}

\begin{definition}[Layer and batching scales]
\label{def:layer-scales}
$D$ is the largest coefficient occurring in the layer.  The auxiliary scale
$\sigma$ is its square-root scale.  The \emph{path threshold} $\tau$ limits
the common Euclidean prefix compiled for one coefficient batch.  The
\emph{terminal threshold} $\kappa$ limits the moduli stored in the shared
small-modulus table of Section~\ref{sec:periodic-table}.  Since
$\kappa\geqslant\ceil{D/\tau}$, we have $\tau\kappa\geqslant D$, while both
$\tau$ and $\kappa$ are $\Theta(\sqrt D)$.  The product inequality guarantees
  that when a large pair with $R>0$ cannot extend its compiled path, one
  further Euclidean step reaches a modulus below $\kappa$; the case $R=0$ is
  already terminal.  This is formalized in
Lemma~\ref{lem:onestep}.
\end{definition}

One Euclidean division $a_i=A_i b_i+r_i$ replaces $(a_i,b_i)$ by
$(b_i,r_i)$.  A coefficient path of length $d$ consists of $d$ such
divisions, with quotients $A_0,\ldots,A_{d-1}$.  The matrix
\[
 Q(A):=\begin{pmatrix}A&1\\1&0\end{pmatrix},\qquad
 \binom{a_i}{b_i}=Q(A_i)\binom{b_i}{r_i},
\]
reverses one step.  For the full length-$d$ prefix define $P_0=I$,
$P_{i+1}=P_iQ(A_i)$, and $P=P_d$.
Thus $P_i$ maps the coefficient pair after $i$ divisions back to the root
pair, and $P=(\pi_{ij})_{1\leqslant i,j\leqslant2}$ is the \emph{path
matrix} of the full prefix.

\begin{definition}[Coefficient path and cone]
The coefficient word $A_0,\ldots,A_{d-1}$ is an \emph{accepted coefficient
path} if every entry of its path matrix $P$ is at most $\tau$.  Its
\emph{coefficient cone}
is the set of direction pairs $(a,b)$ whose canonical Euclidean divisions begin
with this word.  After these divisions the remaining pair $(M,R)$ is called
the \emph{terminal pair}, or the terminal coordinates of $(a,b)$ on the path.
The root and terminal pairs satisfy
\begin{equation}\label{eq:terminal-regions}
 \binom ab=P\binom MR,
 \qquad M>R\geqslant0,
 \qquad 1\leqslant b<a\leqslant D.
\end{equation}
\end{definition}

The generator starts with the empty path $P=I$.  At a stored path it tries
the canonical next quotient $A=\floor{M/R}$, whenever $R>0$.  The child
$PQ(A)$ is accepted exactly when its four entries remain at most $\tau$.
Geometrically this splits one rational cone in the coefficient triangle into
subcones with a common longer prefix.

\begin{definition}[Stopping rule and coefficient batch]
For an accepted path $P$ define
$\operatorname{ext}(P)=\max\{A\geqslant1:
\max_{i,j}(PQ(A))_{ij}\leqslant\tau\}$, with value zero if no child is
accepted.  A root pair stops at its current
path when $M\leqslant\tau$, $R=0$, or $R>0$ and
$\floor{M/R}>\operatorname{ext}(P)$.  All root pairs that stop at the same path
form one \emph{coefficient batch}.  The path matrix and
coefficient word are shared by the whole batch; only $(M,R)$ varies.
\end{definition}

For direct implementation the maximum in this definition is the explicit
integer expression
\begin{equation}\label{eq:extension-limit-explicit}
 \operatorname{ext}(P)=
 \min\!\left\{
 \floor{\frac{\tau-\pi_{12}}{\pi_{11}}},
 \floor{\frac{\tau-\pi_{22}}{\pi_{21}}}
 \right\},
\end{equation}
where the second term is omitted when $\pi_{21}=0$.  Indeed, the only entries
of $PQ(A)$ that depend on $A$ are
$\pi_{11}A+\pi_{12}$ and $\pi_{21}A+\pi_{22}$; the other two are already
entries of the accepted matrix $P$.  Thus no search over candidate quotients
is required.

For direct integer enumeration at a stored node, let $A_{\rm last}$ be the
incoming quotient, absent at the root.  The stopped terminal pairs are exactly
the following disjoint pieces:
\begin{enumerate}
\item $M\leqslant\tau$, with either $P=I$ or
      $A_{\rm last}M+R>\tau$;
\item at a nonroot node, $M>\tau$, $R=0$, and
      $A_{\rm last}\geqslant2$;
\item $M>\tau$, $R>0$, and
      $M\geqslant(\operatorname{ext}(P)+1)R$.
\end{enumerate}
The predecessor guard in the first piece enforces that the recursive parent
was active.  The condition in the second piece is the standard convention for
the finite Euclidean algorithm: a zero remainder is never represented by an artificial
trailing quotient one.  Denote this stopped set by $\mathcal T(P)$.

\begin{lemma}[One-step terminal]\label{lem:onestep}
At a stopped coefficient cone, either $R=0$, or $M\leqslant\kappa$, or one
further ordinary Euclidean step reaches a positive modulus below $\kappa$.
\end{lemma}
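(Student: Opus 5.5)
The plan is a case analysis over the three disjoint stopped pieces of $\mathcal T(P)$ listed before the lemma, followed by a bound on the root coefficient $a$ in the only nontrivial piece. In the first piece we have $M\leqslant\tau$. Since $\kappa=\max(\tau,\ceil{D/\tau})\geqslant\tau$, this gives $M\leqslant\kappa$. In the second piece $R=0$ by definition. So only the third piece needs work: there $M>\tau$, $R>0$ and $M\geqslant(\operatorname{ext}(P)+1)R$. The one further Euclidean step sends $(M,R)$ to $(R,M-\floor{M/R}R)$, so its new modulus is $R>0$. It therefore suffices to show $R<\kappa$.

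Put $A=\floor{M/R}$. The third piece gives $A\geqslant\operatorname{ext}(P)+1$. By the definition of $\operatorname{ext}(P)$, equivalently by the explicit formula \eqref{eq:extension-limit-explicit}, the child $PQ(A)$ is not accepted. The entries of $PQ(A)$ other than $\pi_{11}A+\pi_{12}$ and $\pi_{21}A+\pi_{22}$ are entries of the accepted matrix $P$, so they are at most $\tau$. Hence
\[
\pi_{11}A+\pi_{12}>\tau\quad\text{or}\quad\pi_{21}A+\pi_{22}>\tau .
\]
This also covers $\operatorname{ext}(P)=0$, the root $P=I$, and the omitted term when $\pi_{21}=0$, because the argument uses only the non-acceptance of $PQ(A)$.

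Next use the cone parametrization \eqref{eq:terminal-regions}. It gives $a=\pi_{11}M+\pi_{12}R$ and $b=\pi_{21}M+\pi_{22}R$, with all $\pi_{ij}\geqslant0$ as products of the nonnegative matrices $Q(A_i)$. Since $M\geqslant AR$,
\[
a\geqslant R(\pi_{11}A+\pi_{12}),\qquad b\geqslant R(\pi_{21}A+\pi_{22}).
\]
Together with $b<a\leqslant D$, whichever entry exceeds $\tau$ yields $D\geqslant\max(a,b)>\tau R$. Therefore $R<D/\tau\leqslant\ceil{D/\tau}\leqslant\kappa$, as required.

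The only delicate point is applying the non-acceptance correctly. The lemma must use that $PQ(A)$ fails for the actual quotient $A=\floor{M/R}$, not merely for $\operatorname{ext}(P)+1$. This follows because the two varying entries are nondecreasing in $A$, since $\pi_{11},\pi_{21}\geqslant0$. The remaining bookkeeping is to keep the strict inequality $R<D/\tau$, so that the modulus lands strictly below $\kappa$.
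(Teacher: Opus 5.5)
Your proof is correct and follows the paper's argument: non-acceptance of $PQ(\lfloor M/R\rfloor)$ forces a root coefficient to exceed $\tau R$, and $a\leqslant D\leqslant\tau\kappa$ then gives $R<\kappa$. The only cosmetic difference is how a second-row overflow is handled. You use $b<a\leqslant D$, whereas the paper reduces to the first row via the componentwise dominance of the first row of a continuant product and treats $P=I$ separately.
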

The short inequality proof is given in
Appendix~\ref{app:coefficient-cones}.

\begin{lemma}[Cone partition and enumeration]\label{lem:terminal-bijection}
The map $(P,M,R)\longmapsto(a,b)=P(M,R)$, with
$(M,R)\in\mathcal T(P)$, is a bijection onto the $D(D-1)/2$ pairs
$1\leqslant b<a\leqslant D$.  The recursion has $O(\tau^2)$ matrices, and all
stopped pairs can be enumerated in $O(D^2+\tau^4)$ arithmetic operations.
\end{lemma}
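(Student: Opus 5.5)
The bijection follows by running the canonical Euclidean algorithm on each pair and identifying the unique node of the coefficient-cone recursion at which it stops. Fix $1\leqslant b<a\leqslant D$ and run the generator: start at $P=I$ with $(M,R)=(a,b)$. At each stored node, either the stopping rule fires, or $R>0$ and the canonical quotient $A=\floor{M/R}$ satisfies $A\leqslant\operatorname{ext}(P)$. In the second case we descend to $PQ(A)$ with the new terminal pair $(R,M-AR)$, which keeps the invariant $(a,b)^{\mathsf T}=P(M,R)^{\mathsf T}$. Descent strictly decreases $M$, so each pair stops at exactly one node. This gives a well-defined map $(a,b)\mapsto(P,M,R)$.

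To show this map is the inverse of $(P,M,R)\mapsto P(M,R)$ on $\bigsqcup_P\mathcal T(P)$, I will check two things. First, the three listed pieces describe exactly the pairs that reach $P$ and stop there. A pair reaches a nonroot node with incoming quotient $A_{\rm last}$ exactly when its parent was active, i.e.\ the parent's modulus $A_{\rm last}M+R$ exceeded $\tau$, and the canonical-quotient conditions hold. The canonical conditions are $M>R\geqslant0$, together with $A_{\rm last}\geqslant2$ if $R=0$, the finite Euclidean convention. Second, the stopping rule is split into disjoint cases $M\leqslant\tau$; $M>\tau,R=0$; and $M>\tau,R>0,\floor{M/R}>\operatorname{ext}(P)$. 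The last case is equivalent to $M\geqslant(\operatorname{ext}(P)+1)R$. Injectivity holds because $P$ is unimodular up to sign and canonical continued-fraction prefixes are unique. Concretely, given $(a,b)$, the word and hence $P$ are forced by the Euclidean algorithm, and $(M,R)=P^{-1}(a,b)$. The predecessor guard in piece 1 is the delicate point. I must check that $A_{\rm last}M+R>\tau$ is exactly the condition that the parent's modulus exceeds $\tau$. The parent's modulus is $M_{\rm parent}=A_{\rm last}M+R$, since the parent's pair was $(A_{\rm last}M+R,M)$. Pieces 2 and 3 need no guard, because there $M>\tau$ already forces the parent modulus to exceed $\tau$. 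The image lies in the triangle because $P$ has nonnegative entries and the root constraints are built in, and every triangle pair is hit. This gives a bijection onto all $D(D-1)/2$ pairs.

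For the count of matrices, each accepted $P$ is a product of $Q(A_i)$ with $A_i\geqslant1$. Its first column $(\pi_{11},\pi_{21})$ is a coprime pair of positive integers, or $(1,0)$ at the root, with entries at most $\tau$. The canonical word is recovered from $P$, since the columns are consecutive continuants and the word is the continued fraction of $\pi_{11}/\pi_{21}$, up to the usual last-quotient ambiguity. So $P\mapsto$ first column is at most two-to-one, and there are $O(\tau^2)$ accepted matrices.

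For the enumeration cost, at each node $\operatorname{ext}(P)$ is computed in $O(1)$ by \eqref{eq:extension-limit-explicit}. The stopped pairs are listed by looping over $M$ and, for each $M$, over the $R$-interval cut out by the piece conditions and by $P(M,R)\in[1,D]^2$, with $b<a$. Each output pair costs $O(1)$, for a total of $O(D^2)$ over all nodes by the bijection. The overhead consists of empty $M$-values and interval computations. I will bound it by restricting $M\leqslant D/\pi_{11}$ (using $a\geqslant\pi_{11}M$). I expect this to be the main obstacle: showing that the sum over nodes of $D/\pi_{11}+1$ is $O(D^2+\tau^4)$. A crude bound $O(\tau^2)\cdot D\leqslant O(\tau^4+D^2)$ by AM–GM already suffices, since $\tau^2D\leqslant\tau^4+D^2$. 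This closes the argument.
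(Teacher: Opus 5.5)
Your proof is correct. The bijection argument matches the paper's. You follow the canonical Euclidean divisions and stop at the first failed continuation. You check that the predecessor guard $A_{\rm last}M+R>\tau$ and the zero-remainder condition $A_{\rm last}\geqslant2$ carve out exactly the pairs that reach $P$ and stop there, and you invert by unimodularity. Your enumeration closes with the same estimate $D\tau^2\leqslant D^2+\tau^4$. The only organizational difference there is that you sweep $M\leqslant D/\pi_{11}$ uniformly for all three pieces. The paper instead scans the $\tau\times\tau$ box for the small-modulus piece and sweeps $R\leqslant D/(\pi_{11}A_{\rm stop}+\pi_{12})$ for the overflow piece.

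The node count is where you genuinely differ.
\begin{itemize}
\item The paper fixes the first \emph{row} $(p_1,p_2)$ of $P$. It combines the continuant dominance $0\leqslant r_i\leqslant p_i$ with the B\'ezout parametrization of $p_1r_2-p_2r_1=\pm1$ to allow at most two completing second rows per sign. It then needs a separate backward-division step, $A=\min\{\floor{p_1/p_2},\floor{r_1/r_2}\}$, to show that no matrix is reached through two recursion words.
\item You send each accepted word to the first \emph{column} $(\pi_{11},\pi_{21})$. This column is coprime, and $\pi_{11}/\pi_{21}=[A_0;A_1,\ldots,A_{d-1}]$. Uniqueness of continued fractions up to the trailing-$1$ ambiguity then makes this map at most two-to-one.
\end{itemize}
Your route counts recursion nodes (words) directly and gets word recovery for free, so it is shorter and tailored to the recursion. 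The paper's route bounds a set of integer matrices defined only by sign, determinant, entry-size and dominance conditions, independent of how they arise. It then transfers that bound to words.
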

The bijection follows by stopping each canonical Euclidean path at its first
failed continuation.  The matrix count and the explicit integer intervals
used for enumeration are proved in Appendix~\ref{app:coefficient-cones}.

\section{Uniform marker grids and local corrections}
\label{sec:marker-orbit}

Fix one coefficient path $P$, one stopped terminal pair $(M,R)$, and its root
pair $(a,b)=P(M,R)$.  Put $H=a$.  As a marker $t$ ranges over $[0,H)$, its
complete quotient trace changes only at finitely many boundary values.
Between consecutive boundaries the same compiled operator applies.  Thus the
two markers $u$ and $v$ divide their square into rectangles on which one
operator is valid.

\begin{definition}[Boundary multiset and true marker rectangles]
\label{def:marker-intervals}
Let $L$ be the number of one-dimensional marker slots for the current
coefficient path.  Their ordered boundaries form the nondecreasing list
$0=e_0\leqslant e_1\leqslant\cdots\leqslant e_L=H$.
Repetitions are retained, so the same $L$ is used throughout the coefficient
batch even when some boundaries coincide.  The true marker slots are
$I_j=[e_j,e_{j+1})$ for $0\leqslant j<L$, with product rectangles
$\mathcal C_{ij}:=I_i\times I_j$.
A repeated boundary gives an empty slot.  Every nonempty $I_j$ is a maximal
interval on which the marker trace, and hence the compiled operator, is fixed.
\end{definition}

For a numerical marker $t$, let
\begin{equation}\label{eq:marker-slot}
 \iota(t)=\max\{j:0\leqslant j<L,\ e_j\leqslant t\}
\end{equation}
be its true slot.

The true boundaries depend on $(M,R)$, so constructing this partition anew
for every stopped pair would be too expensive.  We start instead with two
uniform grids $\mathcal G_0(P)$ and $\mathcal G_1(P)$ of $[0,H)$, each with
$O(\tau)$ intervals.  The first grid agrees with the true boundaries at
$R=0$, and the second agrees with them at the limiting endpoint $R=M$.  For a given terminal pair
we choose the closer reference: $\nu=0$ when $2R\leqslant M$ and $\nu=1$
otherwise.  Every true boundary then moves by less than one uniform cell.

The true boundaries are not exactly the grid lines, but only a few of them can
affect the answer inside one uniform cell.  The grid construction attaches
those candidate boundaries to the cell.  If $x$ is the cell of $t$ in the
selected grid $\nu$, let $\alpha_\nu(t)$ be the number of its candidates lying at or
before $t$.  Computing $\alpha_\nu(t)$ is the small local correction to the
uniform partition.

\begin{definition}[Uniform cell and local correction]
\label{def:virtual-marker-codes}
The \emph{marker code} $\operatorname{code}_\nu(t):=(x,\alpha_\nu(t))$
consists of the uniform cell $x$ and the local correction
$\alpha_\nu(t)$.
\end{definition}

For two markers, write $\alpha_u:=\alpha_\nu(u)$ and
$\alpha_v:=\alpha_\nu(v)$.  Their uniform cells $(x,y)$ form a uniform
rectangle $\mathcal V_{xy}$.  Each marker has at most four possible corrections, so
$\mathcal V_{xy}$ needs at most $4\times4$ entries.  Entry
$(\alpha_u,\alpha_v)$ points to the exact true rectangle
$\mathcal C_{\iota(u),\iota(v)}$, its compiled operator, and its terminal data.
Denote this menu by $\mathcal M_{\nu,P}(x,y)$.
Figure~\ref{fig:query-batching} shows this single local correction step.

\begin{theorem}[Uniform-grid correction]
\label{thm:virtual-rectangle-menu}
For every accepted path, the two uniform grids and their conceptual local correction
menus have $O(\tau^2)$ entries and can be constructed in $O(\tau^2)$ time.
Inside a uniform cell, at most three true boundaries need to be tested, so
$\alpha_\nu(t)\in\{0,1,2,3\}$.  The marker code determines the exact true
slot~\eqref{eq:marker-slot}; hence two marker codes select the exact true
rectangle and its operator in constant time.
\end{theorem}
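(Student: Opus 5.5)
The plan is to write the true marker boundaries explicitly as functions of the terminal pair and to show that each is affine in $R$ with slope bounded by the path-matrix entries. Along an accepted path the marker quotients $U_i,V_i$ change exactly when the marker crosses a value of the form $t=\lambda_{i,k}(M,R)$. Unrolling the marker update $u'=(U+1)b-u-1$ in~\eqref{eq:cycle} over the prefix, each such threshold is an integer combination of the intermediate moduli $b_i$. By the relation $\binom ab=P\binom MR$ and its prefix analogues $P_i$, every $b_i$ is an integral linear form $\pi M+\pi' R$ whose coefficients are entries of some $P_i$, hence at most $\tau$.

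Consequently each boundary $e_j(M,R)$ is a monotone affine function of $R$ for fixed $M$. As $R$ ranges over $[0,M]$ it moves between its endpoint values at $R=0$ and $R=M$, and these define $\mathcal G_0(P)$ and $\mathcal G_1(P)$. The first step is to fix the uniform cells. I would rescale the marker square $[0,H)$ by $H=a$ so that the $L=O(\tau)$ boundaries become fixed rational positions, namely the grid points determined by the path matrix alone. I would then take the uniform cells of width $\Theta(H/\tau)$. Because the path has accepted entries at most $\tau$, consecutive boundaries differ by at least a fixed fraction of the cell width, counting multiplicity only through the retained repeated boundaries. The count $L=O(\tau)$ follows from bounding the sum of partial quotients along the path by the entries of $P$.

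The second step is the displacement bound. Choosing $\nu=0$ when $2R\leqslant M$ and $\nu=1$ otherwise, the displacement of every boundary from its reference position is at most $(R/M)$, respectively $(1-R/M)$, times the slope. This is less than half of the full endpoint-to-endpoint shift, and I would show it is strictly below one uniform cell width.

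The third step uses these two bounds together. The separation of consecutive boundaries, compared with the cell width, shows that at most three moved boundaries can lie in one uniform cell: one entering from each neighbour and one native to the cell. Hence $\alpha_\nu(t)$ takes values in $\{0,1,2,3\}$. Given the code $(x,\alpha)$, the true slot is $\iota(t)=c_\nu(x)+\alpha$, where $c_\nu(x)$ is a prefix count of boundaries lying entirely before cell $x$ for every admissible $R$. Each candidate test is one comparison with an affine expression in $(M,R)$ whose coefficients are stored with the cell, so the test costs $O(1)$.

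For construction time, the grids have $O(\tau)$ cells and each cell stores $O(1)$ candidates and one prefix count. These are obtained by a single merge pass over the sorted boundaries, which costs $O(\tau)$ per grid. The conceptual menu over pairs of cells has $O(\tau^2)\cdot 16$ entries. Each entry is the pair of slot indices $(c_\nu(x)+\alpha_u,\,c_\nu(y)+\alpha_v)$, which indexes the compiled operator table $\mathsf{Op}_{P;i,j}$, so it is produced in $O(1)$ time and the whole menu in $O(\tau^2)$.

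The main obstacle is the uniform separation claim, namely that consecutive true boundaries are at least about a third of a cell apart uniformly in $(M,R)$ within the batch. Large partial quotients produce clusters of boundaries, such as the multiples inside one Euclidean step. I would first try to handle these by showing that within a single quotient step the boundaries are equally spaced at spacing $b_i\geqslant H/\tau$, which follows from $a=\pi_{11}M+\pi_{12}R\leqslant\tau b_i$ by the entry bound. If exact separation fails at the junctions between steps, I would fall back on allowing at most three candidates per cell, which is precisely the constant the statement permits.
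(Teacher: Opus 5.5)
Your overall architecture matches the paper's: affine boundaries, two reference grids selected by $2R\leqslant M$, sub-cell displacement, prefix count plus local rank, and sixteen entries per uniform rectangle. But the step that should produce the constant three fails, and your fallback does not replace it.

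\textbf{The separation claim is false.} Consecutive true boundaries are not uniformly separated over the batch.
\begin{itemize}
\item For $P=Q(1)Q(1)$ one has $H=2M+R$. Folding the marker directly gives the true boundaries $0,R,M+R,M+2R$.
\item For $P=Q(1)^4$, with $H=5M+3R$, the boundaries near the origin are $0,R,2R$.
\end{itemize}
With $R=1$ and $M$ large these are far closer than any fixed fraction of a cell, and they coalesce at $R=0$. The close pairs come from \emph{different} Euclidean steps, so equal spacing $b_i$ within one step cannot help. The picture ``one native boundary plus one entering from each neighbour'' is also wrong, since a single grid line can carry three boundaries. Saying you would then ``allow at most three candidates'' assumes what must be proved.

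Two further steps fail as stated:
\begin{itemize}
\item The number of cells must be exactly $T=\pi_{11}$ (respectively $\pi_{11}+\pi_{12}$), not $\Theta(\tau)$. A boundary's displacement from its reference line can approach $H/\pi_{11}$, which spans about $\tau/\pi_{11}$ of your cells when $\pi_{11}\ll\tau$.
\item Measuring displacement by the slope in $R$ at fixed $M$ ignores that the grid lines $xH/T$ also move with $R$. In absolute terms, different boundaries can even move in opposite directions.
\end{itemize}

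\textbf{The missing idea} is an exact identity relative to the moving grid, combined with a congruence count. Use the orbit description $e_j=k_jB_0-c_{k_j}H=\lambda_jM+\omega_jR$ with $0\leqslant k_j<L\leqslant2\pi_{11}+\pi_{12}$. A one-line computation gives $\pi_{11}\omega_j-\pi_{12}\lambda_j=k_j\det P$. Hence
\[
\pi_{11}e_j=\lambda_jH+k_j\det P\,R,\qquad
(\pi_{11}+\pi_{12})e_j=(\lambda_j+\omega_j)H-k_j\det P\,(M-R).
\]
So every boundary is pinned to an integer line of the selected grid and displaced in one common direction. The displacement is less than one cell because $k_jR<(2\pi_{11}+\pi_{12})R\leqslant H$ when $2R\leqslant M$, and similarly $k_j(M-R)<H$ otherwise.

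The bound three then comes from counting, not spacing. Equal $\lambda_j$ forces $k_j\equiv k_{j'}\pmod{\pi_{11}}$, because $\gcd(\pi_{11},\pi_{21})=1$. Since $L\leqslant3\pi_{11}$, at most three indices share a line of the first grid, and at most two on the second. Because the displacement is one-sided, only the group attached to one bounding line of a cell needs testing. This gives $\alpha_\nu\in\{0,1,2,3\}$ and the slot formula $\operatorname{pref}_\nu(x-1)+\alpha-1$ (or $\operatorname{pref}_\nu(x)+\alpha-1$ when $\varepsilon=-1$).

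Two smaller points:
\begin{itemize}
\item $L-1=\sum_m\chi_m$ is a sum of continuants, not of partial quotients. For an all-ones path the two differ exponentially.
\item Treating ``the $j$-th boundary'' as one affine form on the whole batch requires the residue order and the floors $c_k$ to be constant for $0<R<M$. The paper proves this with the Farey-neighbour argument.
\end{itemize}
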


Appendix~\ref{app:marker-orbit-proof} gives the grid construction, the
explicit correction formula, and the proof.  The main algorithm needs only
the resulting constant-size menu.

\section{Cell operators and small terminal base cases}
\label{sec:recursive-operators}

For every nonempty true slot, Appendix~\ref{app:marker-orbit-proof} supplies
its marker trace.  Pairing two slot lists then composes the reciprocal
operator for every true rectangle.  The first two subsections compile these
cell operators; the third explains how the remaining small terminal staircase
is answered, and the final subsection assembles the evaluation of one record.
We first record the constant-size algebraic form of one cell operator.

\subsection{Operator attached to one true marker rectangle}

Use the lattice staircase $\Lambda_f$ from~\eqref{eq:lattice-staircase} and its
six degree-two moments
\[
\mathsf L_{ij}(f)=\sum_{(t,s)\in\Lambda_f}t^is^j,
\qquad
(i,j)\in\mathscr D_2:=\{(0,0),(1,0),(2,0),(0,1),(1,1),(0,2)\}.
\]
This is the coordinate order fixed in the definition of $\mathbf L(f)$ in
Section~\ref{sec:weighted-floor-moments}.
Their exact integral conversion to the floor-power moments is
\eqref{eq:floor-lattice-conversion}; the reverse direction follows by summing
$1,s,s^2$ over $1\leqslant s\leqslant f(t)$.

Recall from Section~\ref{sec:normalized-state} that one normalized reciprocal
step writes $\widehat f(k)=Ak+U+f'(k)$, where $f'$ is the child floor query.
For $z\geqslant0$ write
$\operatorname{Pow}_i(z)=\sum_{t=0}^{z}t^i$.  The exact one-step lattice
identity is as follows.  Put $k=s-1$.  For
$0\leqslant k<h$, the condition $s\leqslant f(t)$ is equivalent to
$t\geqslant\widehat f(k)+1=Ak+U+f'(k)+1$.
Consequently, for every $(i,j)\in\mathscr D_2$,
\begin{equation}
\mathsf L_{ij}(f)=\sum_{k=0}^{h-1}\!\left(
\operatorname{Pow}_i(q-1)-\operatorname{Pow}_i(Ak+U)
-\sum_{r=1}^{f'(k)}(Ak+r+U)^i\right)(k+1)^j.
\label{eq:one-step-lattice}
\end{equation}
The inner sum is empty when $f'(k)=0$.
Thus the child coordinates $(k,r)$ enter the root coordinates through
\begin{equation}\label{eq:affinecoords}
t=Ak+r+U,\qquad s=k+1.
\end{equation}
The first two terms inside the parentheses in~\eqref{eq:one-step-lattice}
are explicit power sums; the inner sum is the affine image of the child with
a minus sign.  Thus the child-to-root moment map is the signed action induced
by the affine coordinate map, and~\eqref{eq:one-step-lattice} completely
specifies its one-step action on all six moments.

Because the lengths $(q,h)$ vary across a rectangle, the explicit part of
\eqref{eq:one-step-lattice} is retained as a degree-four polynomial in those
two lengths.  This fixed polynomial space, the six moment coordinates, and
the affine state map are all closed under composition.  The exact monomial
basis, coefficient counts, and common integral scaling are recorded in
Appendix~\ref{app:operator-representation}.

\begin{definition}[Operator of a true marker rectangle]
\label{def:rectangle-operator}
Fix an accepted coefficient path $P$ and a true marker rectangle
$\mathcal C_{ij}=I_i\times I_j$.  The quotient triple at every step is then
fixed.  Their composition is the \emph{rectangle operator}
$\mathsf{Op}_{P;i,j}:=(\mathsf A_{P;i,j},\mathsf U_{P;i,j},\Pi_{P;i,j})$.
It depends only on the path $P$ and the rectangle indices $(i,j)$, equivalently
on the fixed quotient trace.  Its stored coefficients do not depend on the
numerical terminal pair $(M,R)$, the marker values $(u,v)$, or the root lengths
$(q,h)$.  Here $\mathsf A_{P;i,j}$ is the $7\times7$ homogeneous integral
matrix mapping the affine query state of
Definition~\ref{def:affine-query-state}, augmented by a constant coordinate,
from root to terminal; $\mathsf U_{P;i,j}$ is the
$6\times6$ matrix transporting the six lattice moments; and
$\Pi_{P;i,j}(q,h)$ is the six-component polynomial boundary correction.
\end{definition}

\begin{lemma}[Operator closure]\label{lem:closure}
For every fixed coefficient prefix and marker rectangle, let
$\mathbf s_{\mathrm{root}}$ and $\mathbf s_{\mathrm{term}}$ be the initial and
final affine query states of Definition~\ref{def:affine-query-state}.  Then
the terminal state and all six root lattice moments have the forms
\[
\binom{\mathbf s_{\mathrm{term}}}{1}
=\mathsf A_{P;i,j}\binom{\mathbf s_{\mathrm{root}}}{1},
\qquad
\mathbf L_{\rm root}=\mathsf U_{P;i,j}\mathbf L_{\mathrm{term}}
              +\Pi_{P;i,j}(q,h),
\]
where $\mathsf U_{P;i,j}$ is induced by one affine coordinate map and one sign,
and every component of $\Pi_{P;i,j}$ is a bivariate polynomial of total
degree at most four.
Both composition and application cost $O(1)$ arithmetic operations.
\end{lemma}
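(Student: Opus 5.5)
The plan is induction on the number of reciprocal cycles in the fixed quotient trace, showing that one cycle already has the stated form and that the form is closed under composition.

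\textbf{State map.} For one cycle with fixed triple $(A,U,V)$, equation~\eqref{eq:cycle} expresses every coordinate of $\mathbf s'$ as an integral affine function of $\mathbf s=(a,b;q,h;u,v)^{\mathsf T}$. For example, $h'=q-Ah-(U+V+2-A)$ and $u'=(U+1)b-u-1$. Adjoining the constant coordinate makes this a $7\times7$ homogeneous integral matrix $\mathsf A^{(1)}$, whose last row is $(0,\dots,0,1)$. A product of such matrices again has this shape, so the terminal state is $\mathsf A_{P;i,j}=\mathsf A^{(d)}\cdots\mathsf A^{(1)}$ applied to the root state. On a true rectangle $\mathcal C_{ij}$ the triple at every step is fixed, so this product does not depend on the query.

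\textbf{Moments for one cycle.} Expand~\eqref{eq:one-step-lattice}. The inner term $-\sum_k\sum_{r=1}^{f'(k)}(Ak+r+U)^i(k+1)^j$ is the pullback of the monomial $t^is^j$ under the affine map~\eqref{eq:affinecoords}, summed over the child staircase $\Lambda_{f'}$ with a minus sign. Expanding $(Ak+r+U)^i(k+1)^j$ in child monomials $k^{i'}r^{j'}$ with $i'+j'\le i+j\le2$ gives $-\sum c_{i'j'}\mathsf L_{i'j'}(f')$, where the coefficients come from a fixed $6\times6$ matrix $\mathsf U^{(1)}$. Here the child's horizontal coordinate is $k$ and its vertical coordinate is $r$, so the map is one affine coordinate map with one sign. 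The explicit part $\sum_{k<h}\bigl(\operatorname{Pow}_i(q-1)-\operatorname{Pow}_i(Ak+U)\bigr)(k+1)^j$ has two pieces:
\begin{itemize}
\item $\operatorname{Pow}_i(q-1)$ is a polynomial in $q$ of degree $i+1$, and $\sum_{k<h}(k+1)^j$ is a polynomial in $h$ of degree $j+1$, so their product has total degree $i+j+2\le4$;
\item $\sum_{k<h}\operatorname{Pow}_i(Ak+U)(k+1)^j$ is a polynomial in $h$ of degree at most $i+j+2\le4$, with coefficients depending on the constants $A$ and $U$.
\end{itemize}
This gives $\mathbf L=\mathsf U^{(1)}\mathbf L'+\Pi^{(1)}(q,h)$ with $\deg\Pi^{(1)}\le4$.

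\textbf{Composition.} Suppose $\mathbf L_{\rm root}=\mathsf U\mathbf L_{\rm mid}+\Pi(q,h)$ and $\mathbf L_{\rm mid}=\mathsf U'\mathbf L_{\rm term}+\Pi'(q_{\rm mid},h_{\rm mid})$. Substituting the second into the first gives
\[
\mathbf L_{\rm root}=\mathsf U\mathsf U'\mathbf L_{\rm term}+\Pi(q,h)+\mathsf U\,\Pi'(q_{\rm mid},h_{\rm mid}).
\]
The product $\mathsf U\mathsf U'$ is induced by the composite of two affine coordinate maps, which is again affine, and its sign is the product of the two signs.

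The main obstacle is keeping the boundary correction at total degree at most four in the root lengths. The intermediate lengths $(q_{\rm mid},h_{\rm mid})$ come from $\mathsf A$, and I must show they are affine in the root pair $(q,h)$ alone. From~\eqref{eq:cycle}, $q'=h$ and $h'=q-Ah-\mathrm{const}$, and the $(q,h)$ block of the state map never reads $a,b,u,v$. The only dependence on the markers is through the constant $U+V+2-A$, and the quotients $U,V$ are fixed on the rectangle, so this is a constant. By induction every intermediate length is an integral affine function of $(q,h)$ with path-dependent constants. An affine substitution preserves total degree, so $\Pi'(q_{\rm mid},h_{\rm mid})$ is still a polynomial of degree at most four in $(q,h)$, and so is the composed correction.

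\textbf{Cost.} Each composition multiplies $7\times7$ and $6\times6$ matrices and substitutes an affine map into a fixed-size polynomial space of dimension $15$, which is $O(1)$ arithmetic operations. Application evaluates one matrix–vector product and six bivariate quartics, also $O(1)$. Using the common integral scaling of Appendix~\ref{app:operator-representation} keeps all coefficients integral.
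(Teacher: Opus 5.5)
Your proposal is correct and follows essentially the same route as the paper: integral affine state maps closed under composition, the degree-two monomial basis closed under the affine substitution \eqref{eq:affinecoords}, one-step boundary terms of total degree at most four, and induction using the fact that the intermediate lengths are affine in $(q,h)$ alone. The paper also argues one thing you only cite. It shows that $72$ is a common denominator, coming from $\operatorname{lcm}(2,4,6)=12$ together with a factor $6$ from the moment conversion, so the scaled correction coefficients remain integral under composition.
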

Closure follows from affine substitution in the six degree-two monomials and
in the fixed degree-four boundary space.  The full proof is in
Appendix~\ref{app:operator-representation}.

\subsection{Compiling the cell operators}

Recall that $L$ is the number of one-dimensional true marker slots for the
current coefficient path.  The cell operator depends on $P$ and the two slot
indices, not on the numerical terminal pair $(M,R)$; all stopped pairs on the
path reuse it.  Pairing the $L$ slots of $u$ with the $L$ slots of $v$ gives
exactly $L^2$ true rectangles and hence $L^2$ cell operators.  Compile them
by sharing their common trace prefixes.  At each Euclidean step, marker slots
with the same trace so far form one group.  For every pair of groups, extend
the current operator once by their next quotient triple $(A,U,V)$ from
Definition~\ref{def:boundary-markers}.  When the traces finish, every pair
$(I_i,I_j)$ has its operator $\mathsf{Op}_{P;i,j}$.

\begin{lemma}[Shared-prefix compilation bound]
\label{lem:tree}
For a coefficient path with $L$ true marker slots, all $L^2$ cell operators
can be compiled in $O(L^2)$ arithmetic operations and stored in $O(L^2)$
arithmetic words.
\end{lemma}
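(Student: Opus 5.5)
We organize the compilation as a synchronized traversal of two prefix trees, one for each marker, and charge the work to the nodes of their product. For a single marker, consider the $L$ true slots of Definition~\ref{def:marker-intervals} and their marker traces $(U_0,U_1,\ldots)$ along the fixed coefficient word $A_0,\ldots,A_{d-1}$ supplied by Appendix~\ref{app:marker-orbit-proof}. At depth $k$, call two slots equivalent if their first $k$ marker quotients agree. This gives a rooted tree $\mathcal P_u$ whose depth-$k$ nodes are the equivalence classes and whose leaves are the slots. Define $\mathcal P_v$ in the same way. The compilation visits the product tree: a node is a pair $(g,g')$ of depth-$k$ groups, it stores the composed operator of Lemma~\ref{lem:closure} for the common prefix of length $k$, and each child $(g_1,g_1')$ is obtained by one composition with the single-cycle operator of the triple $(A_k,U,V)$ from~\eqref{eq:cycle} and~\eqref{eq:one-step-lattice}. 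By Lemma~\ref{lem:closure}, each composition costs $O(1)$ operations, so the total cost is $O(1)$ times the number of product-tree edges.

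The key step is to bound this count by $O(L^2)$ rather than $O(L^2 d)$. First, I would compress unary chains. If a group does not split at a given step, the one-step operators along its chain depend only on the shared quotients, so a chain of pairs in which neither coordinate branches can be pre-composed once. The essential point is that such chains are shared: the composition of a non-branching $u$-chain against a non-branching $v$-chain is performed once for the product edge, not once per leaf pair. After compression, $\mathcal P_u$ has at most $2L-1$ nodes, and so does $\mathcal P_v$. The compressed product tree then has at most $(2L-1)^2=O(L^2)$ branching nodes. A compressed edge whose two coordinate chains have lengths $\ell,\ell'$ is processed by walking the pair jointly, so its cost is $\min(\ell,\ell')$ plus the cost of composing the tail. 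I therefore need to bound the total chain length.

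This is where I expect the main obstacle, since a single long chain of length up to $d=O(\log D)$ repeated across $L^2$ pairs would reintroduce the logarithm. I would first try to use the structure of the markers. Along an accepted path the marker intervals refine monotonically, so each step's quotient $U$ splits the current groups according to the finitely many boundaries it introduces, and the boundaries inherited from depth $k$ form a subset of the final list $e_0\leqslant\cdots\leqslant e_L$. If a step splits no group, every slot has the same quotient, and the one-step operator depends only on the step, not on the pair. Such steps can therefore be folded into a single per-path prefix operator, composed once in $O(d)$ work, which is $O(\tau)$ per path. A genuine per-pair step then occurs only at depths where at least one of the two groups branches, and the number of such depths along a root-to-leaf path of the product tree is at most the sum of the two branching depths. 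Summed over product nodes, this is $O(\#\text{nodes of }\mathcal P_u\cdot\#\text{nodes of }\mathcal P_v)=O(L^2)$.

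Storage follows directly. Each of the $O(L^2)$ product nodes holds one constant-size operator (a $7\times7$ matrix, a $6\times6$ matrix and six degree-four bivariate polynomials, by Lemma~\ref{lem:closure}). The traversal can be carried out depth-first, discarding interior operators after their children are formed. Only the $L^2$ leaf operators $\mathsf{Op}_{P;i,j}$ are kept, which gives $O(L^2)$ words. Empty slots created by repeated boundaries are leaves with a copied operator and cost $O(1)$ each.
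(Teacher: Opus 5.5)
There is a genuine gap in your counting step. Your claim that a step which splits no group gives every slot the same quotient is false. Non-splitting only means that each group keeps one quotient; different groups still carry different values of $U$ (and of $V$). In this setting each step even adds $\chi_{k+1}\ge 1$ new cut edges, so globally non-splitting steps never occur.

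Consequently, at a product node $(g,g')$ where neither coordinate branches, the next factor is still the pair-specific cycle operator for $(A_k,U_g,V_{g'})$, and it must be composed separately for that pair. Compressing unary chains saves nothing, because a chain composite belongs to a single product node. The true cost of your traversal is the number of uncompressed product nodes, $\sum_k n_k^2$, where $n_k$ is the number of depth-$k$ groups. The fact that each compressed tree has at most $2L-1$ nodes does not bound this sum. For a caterpillar-shaped tree in which one group splits at each level, both compressed trees have $O(L)$ nodes, yet $\sum_k n_k^2=\Theta(L^3)$. So your final ``summed over product nodes'' estimate is exactly the $O(L^2d)$ obstacle you identified, not a resolution of it.

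The missing idea, which is the paper's argument, is that the group counts here grow geometrically, so no compression is needed. Apply Theorem~\ref{thm:marker-orbit} and \eqref{eq:orbit-count-bound} to the prefix matrix $P_k$. This gives $n_k=1+\sum_{m=1}^{k}\chi_m\le 2\chi_k+\chi_{k-1}\le 3\chi_k$. The continuant recurrence gives $\chi_{k+2}\ge 2\chi_k$, so the even and odd subsequences of $(\chi_k^2)$, read backward, are dominated by geometric series. Since $\chi_d\le L$, this yields $\sum_k n_k^2\le 9\sum_k\chi_k^2=O(\chi_d^2)=O(L^2)$. The plain level-by-level product traversal, with one constant-cost composition per active pair of groups, therefore already meets the bound. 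Your storage argument is fine.
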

The proof, based on the growth of the continuants along the prefix tree, is in
Appendix~\ref{app:operator-proofs}.

For the analysis, attach the constant-size correction menu of
Theorem~\ref{thm:virtual-rectangle-menu} to every uniform rectangle.  Its
entries identify the required cell operator and affine terminal data.  In the
reference implementation, equations~\eqref{eq:virtual-local-rank} and
\eqref{eq:virtual-code-to-slot} compute the two slot indices on demand, after
which the corresponding member of the $L^2$ operator array is accessed directly.
It suffices to retain these operators and the grid prefix data while processing
the stopped pairs of the current path and to discard them afterward; together
they require $O(\tau^2)$ path-local words.

\subsection{Small-modulus terminal lookup}
\label{sec:periodic-table}

The compiled operator removes the long common quotient trace but leaves a
small terminal staircase.  If its width or height is zero, or if $R=0$, all
six moments vanish.  If $M\leqslant\kappa$, the terminal query is read from the table
directly.  Otherwise Lemma~\ref{lem:onestep} permits one additional reciprocal
cycle, whose child coefficients are $(R,M\bmod R)$ and are both below
$\kappa$; after the child lookup, identity~\eqref{eq:one-step-lattice} lifts
its moments back to the original terminal query.

The lookup table stores one normalized staircase for every coprime pair of
coefficients at most $\kappa$.  The stored staircase has zero intercept, and we
keep prefixes of the six floor-power moments in~\eqref{eq:sixmom} over two
consecutive periods.  Equation~\eqref{eq:floor-lattice-conversion} then gives
the six terminal lattice moments required by the compiled operator.

A query whose coefficients are at most $\kappa$ is reduced to this stored
form in three constant-time steps.  First divide its two coefficients by their
greatest common divisor.  Second, use one precomputed modular inverse to turn
the intercept into a cyclic shift of the zero-intercept staircase; this shift
changes only the origin and adds a constant to the floor values.  Third, split
the requested length into complete periods and one final fragment.  The
fragment uses at most two stored prefixes, while all complete periods are
combined by ordinary power sums.

Horizontal and vertical shifts act affinely on the same six-moment family, so
the shifted prefixes are converted back by one fixed constant-size
calculation.
Thus the table contains no separate copy for every intercept: all intercepts
of one reduced coefficient pair reuse the same zero-intercept data.  The exact
shift identities are proved in Appendix~\ref{app:periodic-table}.

\begin{lemma}[Periodic table size]\label{lem:periodic}
The reduced periodic table, including the divisor lookup described in the
appendix, has
$O(\kappa^3)$ entries, is constructed in $O(\kappa^3)$ arithmetic operations,
and answers every terminal query in $O(1)$ operations.
\end{lemma}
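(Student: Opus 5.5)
The plan is to count the table directly, bound its construction by the period lengths, and then check the three-step query reduction of the preceding subsection. For every reduced pair $(a,b)$ with $1\leqslant b<a\leqslant\kappa$ and $\gcd(a,b)=1$, the zero-intercept staircase $g(t)=\floor{bt/a}$ has period $a$ in the sense that $g(t+a)=g(t)+b$. We store the six prefix sums of $\Phi_{ij}$ over $0\leqslant t<2a$, which is $O(a)$ words per pair. Summing over at most $\kappa^2$ pairs gives $\sum_{a\leqslant\kappa}a\cdot a=O(\kappa^3)$ entries. Each prefix entry is obtained from the previous one by adding $t^ig(t)^j$, and $g(t)$ is updated incrementally by a single comparison. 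Construction is therefore $O(1)$ per entry, hence $O(\kappa^3)$ in total.

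The auxiliary data are also within budget. A gcd table for all pairs up to $\kappa$ (the divisor lookup) costs $O(\kappa^2)$ entries, filled by the recursion $\gcd(a,b)=\gcd(b,a\bmod b)$ in increasing order. Modular inverses $b^{-1}\bmod a$ for all coprime pairs cost $O(\kappa^2)$ entries. They can be obtained by scanning $c=0,\ldots,a-1$ for each $a$ and recording $c$ as the inverse of $b=c^{-1}$, or by scanning products, which is $O(\kappa^3)$ overall. Both fit in the stated bound.

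For a query $\floor{(b t+\beta)/a}$ with $a\leqslant\kappa$, we proceed in three steps.
\begin{enumerate}
\item Put $g=\gcd(a,b)$ and write $a=ga_0$, $b=gb_0$. Then $\floor{(bt+\beta)/a}=\floor{(b_0t+\floor{\beta/g})/a_0}$, so only the reduced pair matters.
\item Choose $t_0\equiv-\floor{\beta/g}\,b_0^{-1}\pmod{a_0}$. Then $b_0t_0+\floor{\beta/g}=a_0c+\rho$ with $0\leqslant\rho<a_0$. Since $\rho<a_0$ and the residues $b_0 t$ cover all classes, the query becomes $\floor{b_0(t+t_0)/a_0}-c$, possibly after an adjustment: the residue $\rho$ shifts the staircase by less than one step, so one must choose $t_0$ with $b_0t_0\equiv -\beta' \pmod{a_0}$ up to the floor. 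I expect this to be the main obstacle. I would first verify the exact identity $\floor{(b_0t+\beta')/a_0}=\floor{(b_0(t+t_0))/a_0}-c$, where $t_0$ is the least $t_0$ with $(b_0t_0\bmod a_0)\geqslant a_0-\ldots$. This is where the careful residue argument is needed; the appendix treats it by intercept-class analysis.
\item With $q=ma_0+r$, the shifted window $[t_0,t_0+q)$ splits into $m$ full periods and one fragment lying inside $[t_0\bmod a_0,\,(t_0\bmod a_0)+a_0)\subset[0,2a_0)$. The fragment is a difference of two stored prefixes.
\end{enumerate}

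The full periods reduce by $g(t+ka_0)=g(t)+kb_0$ to one stored period moment, transformed by the substitution $t\mapsto t+ka_0$, $f\mapsto f+kb_0$. Summing this substitution over $k<m$ gives the polynomial in $m$ obtained from ordinary power sums $\sum_k k^e$ with $e\leqslant3$. The horizontal shift by $t_0$ and the vertical shift by $-c$ act on $(\Phi_{ij})$ through binomial expansion in degrees $\leqslant3$. This is a fixed constant-size linear map, so the whole reassembly costs $O(1)$ operations. Converting to $\mathbf L$ by \eqref{eq:floor-lattice-conversion} adds another $O(1)$ operations.

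Finally, the table is consistent with the use made of it in this section. Terminal queries arise either with $M\leqslant\kappa$ directly, or after the extra cycle of Lemma~\ref{lem:onestep}, which gives coefficients $(R,M\bmod R)$ below $\kappa$ and is lifted back by \eqref{eq:one-step-lattice} in $O(1)$ operations. Hence every terminal query is answered in $O(1)$ operations.
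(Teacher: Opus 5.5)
Your overall plan follows the paper's proof: coprime pairs with prefixes over two periods, giving $\sum_{a\leqslant\kappa}2a\varphi(a)=O(\kappa^3)$ entries, an $O(\kappa^2)$ gcd table, period splitting with binomial shift formulas, and the optional reciprocal cycle. There is a genuine gap, however, in the intercept-removal step. That step is exactly what lets one zero-intercept staircase serve every intercept, and it is not established; the identity you write down is false for your choice of $t_0$.

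With $\beta'=\floor{\beta/g}$ and $t_0\equiv-\beta' b_0^{-1}\pmod{a_0}$, you get $b_0t_0+\beta'\equiv0$. So $\rho=0$ always, and $\floor{b_0(t+t_0)/a_0}-c=\floor{(b_0t-\beta')/a_0}$, not $\floor{(b_0t+\beta')/a_0}$. For example, $a_0=5$, $b_0=2$, $\beta'=3$ give $t_0=1$ and $c=1$. At $t=0$ your formula returns $-1$, while the query value is $\floor{3/5}=0$. The ``adjustment'' and the ``least $t_0$ with $(b_0t_0\bmod a_0)\geqslant a_0-\ldots$'' are left unspecified. As written, the $O(1)$ query claim for a general intercept therefore has no proof.

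The fix is to take the opposite sign. Since $0\leqslant\beta<a$ gives $0\leqslant\beta'<a_0$, and $b_0$ is invertible modulo $a_0$, choose $t_0\equiv\beta' b_0^{-1}\pmod{a_0}$ in $[0,a_0)$. Then $b_0t_0=ca_0+\beta'$ exactly, so $\floor{b_0(t+t_0)/a_0}=\floor{(b_0t+\beta')/a_0}+c$ for every $t$. No residue or case analysis remains. This is the paper's phase shift $f(t)=f_0(t+\zeta)-z_0$ with $\bar R\zeta=z_0\bar M+c_\beta$.

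With this correction the rest of your argument goes through. The window $[t_0,t_0+q)$ splits into full periods and one fragment, each a difference of two stored prefixes inside $[0,2a_0)$. One further point: your first description of the inverse computation (``recording $c$ as the inverse of $b=c^{-1}$'') is circular. Testing all pairs $(b,c)$ modulo each $a$ costs $\sum_{a\leqslant\kappa}a^2=O(\kappa^3)$ and is fine, as is the paper's extended Euclid in $O(\kappa^2\log\kappa)$.
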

The construction and size count are also proved in
Appendix~\ref{app:periodic-table}.

\subsection{Evaluation of one record}

This subsection applies to a long record, so its root length satisfies
$q>5\tau$.  Let
$\mathbf s_{\mathrm{root}}=(a,b;q,h;u,v)^{\mathsf T}$ be its affine query
state from Definition~\ref{def:affine-query-state}.  Select the first uniform
grid when $2R\leqslant M$ and the second one otherwise, and compute
$\operatorname{code}_\nu(u)=(x,\alpha_u)$ and
$\operatorname{code}_\nu(v)=(y,\alpha_v)$.
The menu entry $\mathcal M_{\nu,P}(x,y)[\alpha_u,\alpha_v]$ returns the exact true
indices $(i,j)$ and the operator $\mathsf{Op}_{P;i,j}$.  Write the terminal
state as $\mathbf s_{\mathrm{term}}=(M,R;q_{\mathrm{term}},h_{\mathrm{term}};
u_{\mathrm{term}},v_{\mathrm{term}})^{\mathsf T}$.
The affine component of the stored operator supplies it in homogeneous
coordinates as
$\binom{\mathbf s_{\mathrm{term}}}{1}
=\mathsf A_{P;i,j}\binom{\mathbf s_{\mathrm{root}}}{1}$.
If $R=0$, $q_{\mathrm{term}}=0$, or $h_{\mathrm{term}}=0$, the terminal
moments vanish.  Otherwise
Lemma~\ref{lem:onestep} permits at most one additional reciprocal step, and
Lemma~\ref{lem:periodic} returns in $O(1)$ operations the six lattice moments
$\mathbf L_{\mathrm{term}}$ of the original terminal query, including the
lift through this optional step.  Finally,
$\mathbf L_{\rm root}=\mathsf U_{P;i,j}\mathbf L_{\mathrm{term}}
+\Pi_{P;i,j}(q,h)$.
First convert $\mathbf L_{\rm root}$ to the six required $\Phi$-moments by
\eqref{eq:floor-lattice-conversion}.  Then
Equations~\eqref{eq:upper-moment-map}--\eqref{eq:moment-to-abc} add the record
to the layer accumulator.  Thus a long record is processed and accumulated
immediately during the visit to its coefficient path.
\section{Short-query fallback}\label{sec:early-termination}

Lemma~\ref{lem:momentreduction} emits at most one recursive floor query for
each coefficient pair.  In the normalized notation of~\eqref{eq:floor-query},
it is $\floor{(bt+\beta)/a}$ for $0\leqslant t<q$, where
$\beta\equiv N-bq\pmod a$ and $0\leqslant\beta<a$.
All other terms of the moment reduction are explicit power-sum expressions.
Call a nonempty query \emph{short} when $0<q\leqslant5\tau$, with $\tau$ from
\eqref{eq:layer-scales}.  Every short query is sent directly to the pointwise
recursion; every remaining nonempty query follows its compiled coefficient
path to the terminal table.

A short query is evaluated without a precompiled operator by the direct
six-floor-moment recurrence
\eqref{eq:floor-kernel-affine}--\eqref{eq:floor-kernel-reciprocal}.  It returns
$\boldsymbol\Phi$ itself, so no intermediate lattice-moment conversion is
needed.  The base cases $q=0$, $b=0$, and zero floor height return six zeros.
Every other call first extracts the integral coefficient and intercept parts;
the normalized reciprocal branch then exchanges the two positive
coefficients.  Thus the modulus sequence is the ordinary Euclidean sequence
and the recursion terminates.  The executable formulas and exact recursive
argument order are given in Appendix~\ref{app:operator-representation}.

\begin{theorem}[Short-query bound]\label{thm:early}
In a layer of size $N$, with the layer scale $D$ and path threshold $\tau$
from~\eqref{eq:layer-scales}, there are
$O(D\tau)=O(N^{3/4})$ short queries, and
their total evaluation time is
\[
 T_{\rm short}=O(D\tau\log(2+\tau))=O(N^{3/4}\log N)=o(N).
\]
Moreover, every query whose floor height would become zero before the end of
its compiled path is short.  Hence every nonempty non-short query can safely
follow the entire compiled path.
\end{theorem}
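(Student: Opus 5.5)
Three claims need proof: the number of short queries, their total pointwise cost, and the fact that a nonempty query whose floor height dies before the end of its compiled path is short.

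\textbf{Counting short queries.} By Lemma~\ref{lem:momentreduction}, each coefficient pair $a>b\geqslant1$ with $a\leqslant D$ emits at most one floor query, with length $q$ determined by $(a,b)$. I would read off the explicit length from Appendix~\ref{app:half-domain-moments}. After reversing the $y$-order, $q$ is essentially the number of admissible $y$-values, roughly $\floor{(N-a^2)/(a+b)}$ or a comparable quantity of order $N/a$ with an $a^2$ correction.

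The condition $q\leqslant5\tau$ then forces $N-a^2=O(a\tau)$, that is, $a\geqslant\sqrt N-O(\tau)$. So $a$ ranges over $O(\tau)$ values near $D$. Each such $a$ has at most $D$ choices of $b$, giving $O(D\tau)=O(N^{3/4})$ short queries.

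\textbf{Cost of short queries.} The direct recurrence costs $O(1)$ per reciprocal cycle. The height satisfies $h\leqslant bq/a+1$, and each cycle passes from $(q,h)$ to $(h,h')$ with $h'\leqslant h$ roughly scaling by $b'/b$. So the interval length after two cycles is at most about half, or the recursion has already hit a zero height or $b=0$.

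Hence the number of cycles is $O(\log(2+q))=O(\log(2+\tau))$. This bound uses the length rather than the $O(\log a)$ Euclidean bound, and it is the key point. I would verify it from \eqref{eq:cycle}: $q'=h\leqslant (b(q-1)+\beta)/a<q$, and two steps reduce the length by a factor related to the continuant growth, so the length decays geometrically. Multiplying gives $T_{\rm short}=O(D\tau\log(2+\tau))=O(N^{3/4}\log N)=o(N)$.

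\textbf{Height does not die early for long queries.} This is the main obstacle. Along the compiled path, \eqref{eq:cycle} makes the lengths $(q_i,h_i)$ an affine function of the root state. Up to bounded additive errors, the length ratios follow the coefficient ratios: $h_i\approx q_i b_i/a_i$, and iterating gives $q_i\approx q\,b_0\cdots/a_0\cdots$. The precise form is $q_i\geqslant q/\pi-c\,i$ for suitable path-matrix entries $\pi\leqslant\tau$.

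I would prove a clean inequality from the path matrix: the state after $i$ cycles satisfies $q_{i}\geqslant (q-\text{const}\cdot\pi)/\pi$, with each entry of $P_i$ at most $\tau$. The cumulative marker losses $U+V+2-A$ are bounded by the continuants, so the total additive loss is at most $4\tau$-ish. The constant $5$ in $5\tau$ is designed to absorb this.

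If some $h_i=0$ occurs before the path ends, then $q\leqslant 5\tau$, i.e.\ the query is short. The delicate part is tracking the additive corrections $-(U+V+2-A)$ through the product $P$. I would handle this by writing the composed affine map for $(q,h)$ as $q\mapsto$ (continuant ratio)$\cdot q$ minus an error, bounding the error by $\sum_i$ (continuant of the prefix) $\leqslant 2\max\pi_{ij}+O(1)$ via the telescoping identity for continuants. This shows every nonempty non-short query safely traverses the whole path.
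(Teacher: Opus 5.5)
Your counting step and your cost step are sound. Your guessed formula for $q$ is not the one in Appendix~\ref{app:half-domain-moments}, but in both of its regimes $q\leqslant5\tau$ forces $N-a^2<2a(5\tau+1)$, hence $D-a<5\tau+1$, so the $O(D\tau)$ count holds. Your length-halving bound on the depth is valid but unnecessary: since $\tau=\Theta(\sqrt D)$, the ordinary Euclidean depth $O(\log a)\leqslant O(\log D)$ is already $O(\log(2+\tau))$, and that is all the paper uses.

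The genuine gap is in the third claim, which is the real content of the theorem. At the first state $k$ with $h_k=0$, you know only that $b_k(q_k-1)+\beta_k<a_k$, i.e.\ $q_k\leqslant A_k+1$, and $A_k$ is not bounded by a constant. Your inequality, in the form $q\leqslant\pi q_k+\mathrm{const}\cdot\pi$ with $\pi=\chi_k:=(P_k)_{11}\leqslant\tau$, therefore gives only $q=O(\tau(A_k+1))$. Your stated information bounds this only by $O(\tau^2)$. Your first variant $q_i\geqslant q/\pi-c\,i$ is weaker still: it gives $O(\tau\log\tau)$ even when $q_i=O(1)$.

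The missing idea is to use the hypothesis that the height dies \emph{before the end} of the compiled path. Then $A_k$ is itself a quotient of the accepted path, so $P_{k+1}=P_kQ(A_k)$ is accepted and $(A_k+1)\chi_k\leqslant\chi_{k+1}+\chi_k\leqslant2\tau$. Your sketch uses this hypothesis only through the bound $\pi\leqslant\tau$ on entries of $P_i$, which is one step too early.

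You also do not need the approximate forward relations $h_i\approx q_ib_i/a_i$; their floor errors are what produce your $c\,i$ term. By \eqref{eq:cycle} the update of $\ell_i=(q_i,h_i)^{\mathsf T}$ is exactly affine. Reversing it gives $\ell_i=Q(A_i)\ell_{i+1}+d_ie_1$ with $e_1=(1,0)^{\mathsf T}$ and $d_i=U_i+V_i+2-A_i\leqslant A_i+2$, since $0\leqslant U_i,V_i\leqslant A_i$. Hence, exactly, $\ell_0=P_k\ell_k+\sum_{i<k}d_i\mathbf c_i$, where $\mathbf c_i=P_ie_1$, $\mathbf c_{-1}=(0,1)^{\mathsf T}$, and $\ell_k=(q_k,0)^{\mathsf T}$.

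The first term is $q_k\mathbf c_k\leqslant(A_k+1)\mathbf c_k\leqslant\mathbf c_{k+1}+\mathbf c_k\leqslant2\tau$ componentwise, by the observation above. In the second term the weights $d_i$ are unbounded, so a bound on the plain sum of continuants (your ``$\leqslant2\max\pi_{ij}+O(1)$'') does not suffice. You need the telescoping $\sum_{i<k}A_i\mathbf c_i=\mathbf c_k+\mathbf c_{k-1}-\mathbf c_0-\mathbf c_{-1}$ together with $2\sum_{i<k}\mathbf c_i\leqslant2(\mathbf c_k+\mathbf c_{k-1})$. This bounds the second term by $3\mathbf c_{k+1}\leqslant3\tau$, and altogether $q_0,h_0\leqslant5\tau$.
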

For a query that would terminate along its compiled path, reversing the length
recurrence bounds both root lengths by $5\tau$.  Counting all roots with
$0<q\leqslant5\tau$ gives $O(D\tau)$ queries, and their pointwise-recursion depth
is $O(\log D)=O(\log(2+\tau))$.  The full continuant calculation is in
Appendix~\ref{app:early-queries}.
\section{Assembling the complete \texorpdfstring{$O(n\log n)$}{O(n log n)} algorithm}
\label{sec:algorithmic-form}

The preceding sections provide all reusable components: the geometric
reduction creates one floor-moment record for each direction pair, the
coefficient path and uniform-grid correction choose its compiled operator, and
the small-modulus table supplies the terminal moments.  Their assembly has
two levels.

Inside one divisor layer, enumerate every direction pair and immediately add
the part of its contribution given by explicit power-sum expressions.  A nonempty
staircase also emits one record.  Records with the same coefficient path form
one batch.  For that batch construct the true marker slots, compile the cell
operators through their shared prefix tree, and prepare the two uniform grids
and their prefix data.  Each long record then selects one exact cell operator, and
its small terminal staircase is answered by the periodic table.  Short
records are handled by the pointwise recurrence.  The returned six moments
supply the deferred part of the geometric contribution.  After all records
of the path have been processed, discard its operators and grid data.

Outside the layer routine, quotient blocking groups all divisors that produce
the same layer size.  Three weighted M\"obius prefixes provide the weights of
each block, and one read-only periodic table is reused by every layer.  Thus
each distinct layer is constructed once and its three coefficients are added
to the final answer with the corresponding block weights.

The exact two-stage contribution formulas, the weighted M\"obius recurrence,
the cone-visitor pseudocode, a worked end-to-end record, and the formal
correctness chain are given in Appendix~\ref{app:complete-algorithm}.  All expensive objects are shared at the
path or global level, so processing one long record after its path is known
takes constant time.

\section{Linear layer and main theorem}\label{sec:linear-layer}

\begin{theorem}[Linear divisor layer]\label{thm:layer}
The coefficient triple $(A(N),B(N),C(N))$ can be computed exactly in
$O(N)$ arithmetic operations and $O(N^{3/4})$ arithmetic words of working
memory.
\end{theorem}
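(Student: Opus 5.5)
The plan is to add up the costs of the components built in Sections~\ref{sec:weighted-floor-moments}--\ref{sec:algorithmic-form}, with $D=\floor{\sqrt N}$ and $\tau,\kappa=\Theta(\sqrt D)=\Theta(N^{1/4})$ as in \eqref{eq:layer-scales}. Correctness comes directly from the earlier results. Lemma~\ref{lem:momentreduction} writes $(A(N),B(N),C(N))$ as explicit power sums plus at most one six-moment floor query for each pair $1\leqslant b<a\leqslant D$. Lemma~\ref{lem:terminal-bijection} assigns each such pair to exactly one coefficient batch. Theorem~\ref{thm:early} shows that a non-short query never hits zero height along its compiled path, so the path operator applies. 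Theorem~\ref{thm:virtual-rectangle-menu} selects the exact true rectangle. Lemma~\ref{lem:closure} then reproduces the root lattice moments from the terminal ones, and Lemmas~\ref{lem:onestep} and~\ref{lem:periodic} supply the terminal moments exactly. Finally, \eqref{eq:floor-lattice-conversion} converts $\mathbf L$ to $\boldsymbol\Phi$. Short queries are computed directly by the pointwise recursion.

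\textbf{Time.} I would split the cost into five parts.
\begin{enumerate}
\item Record generation and the explicit power-sum and diagonal parts cost $O(N)$ by Lemma~\ref{lem:momentreduction}.
\item Enumerating all coefficient paths and stopped pairs costs $O(D^2+\tau^4)=O(N)$ by Lemma~\ref{lem:terminal-bijection}.
\item The periodic table costs $O(\kappa^3)=O(N^{3/4})$ by Lemma~\ref{lem:periodic}.
\item Short queries cost $o(N)$ by Theorem~\ref{thm:early}.
\item Every long record costs $O(1)$: a constant-time code lookup, one operator application, one table query with at most one lift, and the conversion. Summed over at most $D(D-1)/2$ records this is $O(N)$.
\end{enumerate}
What remains is the per-path preprocessing. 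By Theorem~\ref{thm:virtual-rectangle-menu} the grids and menus cost $O(\tau^2)$ per path, and by Lemma~\ref{lem:tree} operator compilation costs $O(L^2)$. I need $L=O(\tau)$, which the uniform grids with $O(\tau)$ intervals and the three-candidate correction bound should imply (I will cite this from Appendix~\ref{app:marker-orbit-proof}). With $O(\tau^2)$ paths (Lemma~\ref{lem:terminal-bijection}), the preprocessing totals $O(\tau^4)=O(D^2)=O(N)$.

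\textbf{Memory.} Records are processed path by path, and each path's operators and grid data are discarded afterward. At any moment we therefore hold:
\begin{itemize}
\item the read-only table, $O(\kappa^3)=O(N^{3/4})$ words;
\item one path's local data, $O(\tau^2)=O(N^{1/2})$ words;
\item the recursion stack for the cone enumeration and for a short query, $O(\log N)$ words;
\item the accumulators, $O(1)$ words.
\end{itemize}
Records must not be materialized all at once, since that would need $\Theta(N)$ words. So I will organize the traversal as a depth-first visit of the cone tree that generates each stopped pair's record on the fly, via $(a,b)=P(M,R)$ and the explicit intervals of Appendix~\ref{app:coefficient-cones}, and then processes and accumulates it immediately.

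\textbf{Main obstacle.} I expect the hardest step to be bounding the total preprocessing $\sum_P(\tau^2+L_P^2)$ linearly. The per-path $O(\tau^2)$ bound times the $O(\tau^2)$ path count gives exactly $\Theta(D^2)$, so I must confirm two things:
\begin{itemize}
\item the matrix count in Lemma~\ref{lem:terminal-bijection} is genuinely $O(\tau^2)$ rather than $O(\tau^2\log\tau)$;
\item $L_P=O(\tau)$ holds uniformly in $P$.
\end{itemize}
If either fails, the fallback is a finer bound: path matrices with entries at most $\tau$ correspond to pairs of consecutive continuants, so summing $O(\tau^2)$ per path may be replaced by a weighted count, for instance one where $L_P$ is bounded in terms of the path's largest entry. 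I would then sum that weighted count over the coprime-pair enumeration to recover $O(\tau^4)$.
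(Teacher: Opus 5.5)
Your proposal is correct and follows the paper's own proof, which is exactly this cost accounting: $O(\tau^2)$ path matrices from Lemma~\ref{lem:terminal-bijection}, $O(\tau^2)$ preprocessing per path, $O(D^2)$ for record generation and constant-time long records, $O(\kappa^3)$ for the terminal table (which also dominates memory under path-by-path processing), and $o(N)$ for short queries. The obstacle you flag is already settled in the appendix, so no weighted fallback is needed. First, \eqref{eq:orbit-count-bound} gives $L\leqslant2\chi_d+\chi_{d-1}=2\pi_{11}+\pi_{12}\leqslant3\tau$ directly; cite it rather than the three-candidate bound, since that bound is itself derived from this inequality. Second, the first-row counting argument in the proof of Lemma~\ref{lem:terminal-bijection} gives a genuine $O(\tau^2)$ matrix count.
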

With $D=\Theta(N^{1/2})$ and
$\tau,\kappa=\Theta(N^{1/4})$, record generation and application cost
$O(D^2)$, constructing all cell operators and correction data costs $O(\tau^4)$, the
terminal table costs $O(\kappa^3)$, and short queries are lower order.  Hence the time is $O(N)$;
the terminal table dominates memory at $O(N^{3/4})$.  The complete cost table
and the reason the three scales cannot all be decreased are in
Appendix~\ref{app:one-value-complexity}.

\begin{lemma}[Distinct-layer sum]\label{lem:distinctlayers}
Let $\mathcal V_n=\{\floor{n/d}:1\leqslant d\leqslant n\}$ be the layer
sizes evaluated after quotient blocking.  Then
$\sum_{N\in\mathcal V_n}N=O(n\log n)$.
\end{lemma}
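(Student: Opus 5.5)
We split the set $\mathcal V_n$ at the threshold $\sqrt n$ and bound the two parts separately, using the standard description of the values $\floor{n/d}$. Throughout put $m=\floor{\sqrt n}$.

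\emph{Small divisors.}  Every $N\in\mathcal V_n$ with $N>m$ arises as $N=\floor{n/d}$ for some $d\leqslant m$.  Indeed, if $d>m$ then $d\geqslant\sqrt n$, so $n/d\leqslant\sqrt n$ and $\floor{n/d}\leqslant m$.  Distinct values are counted at most once, so dropping the distinctness can only increase the sum:
\[
\sum_{\substack{N\in\mathcal V_n\\ N>m}}N\;\leqslant\;\sum_{d=1}^{m}\floor{\frac nd}\;\leqslant\; n\sum_{d=1}^{m}\frac1d\;\leqslant\; n(1+\ln m)=O(n\log n).
\]

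\emph{Large divisors.}  The values $N\leqslant m$ are distinct integers in $\{1,\ldots,m\}$, so
\[
\sum_{\substack{N\in\mathcal V_n\\ N\leqslant m}}N\;\leqslant\;\frac{m(m+1)}2\;=\;O(n).
\]

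Adding the two bounds gives $\sum_{N\in\mathcal V_n}N=O(n\log n)$.  There is no real obstacle here.  The one point that needs care is that the sum runs over \emph{distinct} values, so each $N>m$ must be charged to a single $d\leqslant m$, and the injection $d\mapsto\floor{n/d}$ is irrelevant.  This matters because the naive sum $\sum_{d\leqslant n}\floor{n/d}$ is itself $O(n\log n)$, so either accounting works.  Indeed, even the crude bound $\sum_{N\in\mathcal V_n}N\leqslant\sum_{d=1}^{n}\floor{n/d}\leqslant nH_n$ suffices directly.

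Combined with Theorem~\ref{thm:layer}, this gives total time $O(n\log n)$.  The $O(n)$ cost of the weighted M\"obius prefixes and the shared table built at scale $\kappa$ for the largest layer are lower order.
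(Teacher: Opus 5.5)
Your proof is correct and is essentially the paper's argument. Split at $\lfloor\sqrt n\rfloor$: the small quotients contribute $O(n)$, and each larger quotient is charged to some $d\le\sqrt n$, which gives a harmonic sum of order $O(n\log n)$. As you note at the end, the cruder bound $\sum_{N\in\mathcal V_n}N\le\sum_{d\le n}\lfloor n/d\rfloor\le nH_n$ already suffices by itself, so the split is not actually needed for this lemma.
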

\begin{proof}
Put $D_0=\floor{\sqrt n}$.  Quotients at most $D_0$ contribute
$O(D_0^2)=O(n)$.  Every larger distinct quotient has the form
$\floor{n/d}$ with $d\leqslant D_0$, and
$\sum_{d\leqslant D_0}\floor{n/d}=O(n\log n)$.
\end{proof}

\begin{theorem}[Main theorem]\label{thm:main}
The exact number $F(n)$ of lattice rectangles in an $n\times n$ square grid
can be computed in $O(n\log n)$ arithmetic operations and
$O(n^{3/4})$ arithmetic words of working memory.
\end{theorem}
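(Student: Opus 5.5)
The plan is to assemble the theorem from the decomposition $F(n)=F_0(n)+F_1(n)$ of~\eqref{eq:full-geometric-count} together with the layer identity~\eqref{eq:layers} and~\eqref{eq:layer-coefficients}. First, $F_0(n)$ is the closed form~\eqref{eq:boundary-count} and costs $O(1)$ operations. For $F_1(n)=\sum_{d\leqslant n}\mu(d)\bigl(n^2A(N_d)-ndB(N_d)+d^2C(N_d)\bigr)$, group the divisors $d$ into maximal quotient blocks $[d_1,d_2]$ on which $N_d=\floor{n/d}$ is constant; there are $O(\sqrt n)$ such blocks. On each block the contribution is
\[
A(N)\,n^2\sum_{d=d_1}^{d_2}\mu(d)\;-\;B(N)\,n\sum_{d=d_1}^{d_2}d\mu(d)\;+\;C(N)\sum_{d=d_1}^{d_2}d^2\mu(d),
\]
so only the three weighted M\"obius prefix sums $\sum_{d\leqslant x}d^k\mu(d)$, $k=0,1,2$, are needed at the block endpoints. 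I would compute $\mu(d)$ for all $d\leqslant n$ by a linear sieve and accumulate the three prefixes. This is $O(n)$ arithmetic operations, but stored naively it uses $\Theta(n)$ words, which exceeds the memory budget. To stay within $O(n^{3/4})$ words, I would instead run a segmented sieve over $[1,n]$ in blocks of length $\Theta(\sqrt n)$, using the primes up to $\sqrt n$, and process the quotient blocks in increasing order of $d$ while streaming the prefix sums. Each quotient block spans a contiguous range of $d$, so streaming suffices.

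Second, for each distinct $N\in\mathcal V_n$, invoke Theorem~\ref{thm:layer} to compute $(A(N),B(N),C(N))$ in $O(N)$ operations and $O(N^{3/4})\leqslant O(n^{3/4})$ words. Then multiply by the block weights and add, and discard the layer's workspace before the next layer. By Lemma~\ref{lem:distinctlayers}, the total time is $\sum_{N\in\mathcal V_n}O(N)=O(n\log n)$. The layers must be processed in the order dictated by the streaming over $d$, that is, in decreasing $N$. This is harmless because each layer is independent.

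Third, I would check sharing and word sizes. The periodic table of Lemma~\ref{lem:periodic} depends on $\kappa=\Theta(N^{1/4})$. I would build it once for the largest layer $N=n$, with cost $O(n^{3/4})$ in both time and memory, and reuse it read-only for smaller layers, whose $\kappa$ is no larger. Its construction cost is then counted once, rather than as the $O(N^{3/4})$ per layer absorbed in Theorem~\ref{thm:layer}. Either accounting is fine. Peak memory is the maximum of a single layer's workspace, the table, and the $O(\sqrt n)$ sieve segment, which is $O(n^{3/4})$. Finally, all intermediate integers, including the $S_d(n)$ terms and the final $F(n)=O(n^{6})$-size count, have $O(\log n)$ bits. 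This is exactly what Lemma~\ref{lem:one-value-operands} supplies, so every step is an arithmetic-word operation in the model of the introduction.

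The main obstacle is not the time bound, which is immediate from Theorem~\ref{thm:layer} and Lemma~\ref{lem:distinctlayers}, but keeping the M\"obius weights within $O(n^{3/4})$ memory while still using only $O(n\log n)$ operations. If the segmented sieve proves awkward to synchronize with the quotient blocks, the fallback is Mertens-type recursions for the three weighted prefixes $\sum_{d\leqslant x}d^k\mu(d)$, which need values only at the $O(\sqrt n)$ points $\floor{n/j}$. Those recursions cost $O(n^{2/3})$ time with $O(n^{2/3})$ memory via a small sieve, and both bounds fit comfortably.
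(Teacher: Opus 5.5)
Your proof is correct and has the same skeleton as the paper's: quotient blocking in \eqref{eq:layers}, one application of Theorem~\ref{thm:layer} per $N\in\mathcal V_n$, Lemma~\ref{lem:distinctlayers} for the $O(n\log n)$ total, and sequential layer processing for the $O(n^{3/4})$ peak.

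The only genuine difference is how you obtain the three weighted M\"obius prefixes. The paper uses exactly your fallback: a sieve up to $n^{2/3}$ followed by memoized quotient-block recurrences \eqref{eq:mobius-prefix} at the arguments $\floor{n/j}$. That costs $O(n^{2/3})$ time and words, so this step is sublinear and negligible even next to a single layer. Your primary plan is a segmented sieve streamed in increasing $d$ and synchronized with the quotient blocks. It is also valid and needs only $O(\sqrt n)$ words. However, an Eratosthenes-type segmented sieve costs $O(n\log\log n)$ operations rather than the $O(n)$ you quote for the linear sieve. That is still $o(n\log n)$, but it is no longer sublinear.

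One side remark is false: $\kappa$ is not monotone in the layer size, so a table built for the top layer need not serve the smaller ones. For $n=32$ the top layer has $D=5$, $\tau=2$, $\kappa=3$, whereas the layer $N=16$ has $D=4$, $\tau=1$, $\kappa=4$. Similarly, $\kappa=25$ at $D=100$ but $\kappa=21$ at $D=101$. This is why the paper builds the shared table at $\kappa_{\max}(n)=\max_{D\leqslant\floor{\sqrt n}}\kappa(D)=O(n^{1/4})$. You also allow the per-layer table accounting already contained in Theorem~\ref{thm:layer}, so your proof does not actually depend on this remark.
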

\begin{proof}
Apply Theorem~\ref{thm:layer} inside the divisor-layer formula
\eqref{eq:layers}.  Quotient blocking evaluates each
$N\in\mathcal V_n$ once, and Lemma~\ref{lem:distinctlayers} gives
$O(n\log n)$ total layer work.

The three weighted M\"obius prefixes in~\eqref{eq:mobius-prefix} take
$O(n^{2/3})$ time and storage by a cutoff sieve followed by quotient-block
recurrences; the calculation is given in
Appendix~\ref{app:mobius-prefix-memory}.  This is lower order, while sequential
layer processing uses $O(n^{3/4})$ peak memory.
\end{proof}

\begin{lemma}[One-value operand size]\label{lem:one-value-operands}
Every exact integer stored or multiplied by the one-value algorithm has
$O(\log n)$ bits.
\end{lemma}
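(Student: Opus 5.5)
We bound every quantity by a fixed polynomial in $n$. A polynomial $n^{c}$ with an absolute constant $c$ has $O(\log n)$ bits. The plan is to go through the algorithm component by component: geometric reduction, layer formulas, compiled operators, terminal table, short recursion, and Möbius accumulation. For each component we exhibit such a polynomial bound. All coordinates live in boxes of polynomial size, and every stored coefficient is a fixed-degree polynomial expression in those coordinates.

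\emph{Base quantities.} In every layer $N\leqslant n$ we have $1\leqslant b<a\leqslant D\leqslant\sqrt n$, and the side pairs satisfy $x,y\leqslant N$. Every root query therefore has $q\leqslant N$, $0\leqslant\beta,u,v<a$ and $h\leqslant N$. The reciprocal cycle~\eqref{eq:cycle} reduces $a,b$ by Euclidean steps and passes the lengths down as $q'=h\leqslant q$. Hence all affine query states along any trace have coordinates bounded by $N$ in absolute value.

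\emph{Operators and moments.} Every entry of an accepted path matrix is at most $\tau$, and the quotients are at most $D$. By Definition~\ref{def:rectangle-operator}, the entries of $\mathsf A_{P;i,j}$ are continuant-type expressions, so they are bounded by $\operatorname{poly}(D)$. The moment matrix $\mathsf U_{P;i,j}$ is induced by one affine coordinate map $t=\gamma k+\delta r+\epsilon$ with coefficients at most $D$. Substituting this map into the degree-two monomials gives entries $O(D^{2})$. The boundary polynomial $\Pi_{P;i,j}$ has degree at most four in $(q,h)$. To bound its coefficients we would use the sandwich trick. Evaluating $\Pi_{P;i,j}$ at admissible $(q,h)$ gives $\mathbf L_{\mathrm{root}}-\mathsf U\mathbf L_{\mathrm{term}}$, and both terms are bounded by lattice moments of staircases in boxes of size at most $N\times N$, i.e.\ by $O(N^{5})$. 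We would instead bound the coefficients directly by induction over the composition. Each composition step multiplies by fixed degree-four substitutions with coefficients $\operatorname{poly}(D)$. The path has length $O(\log D)$, so a naive bound would give $D^{O(\log D)}$, which is too large. This is the main obstacle.

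To get around it, express the composed $\Pi_{P;i,j}$ in closed form. It is the sum of the explicit power-sum terms of~\eqref{eq:one-step-lattice}, each pulled back through the composed affine map from its own level to the root. The composed map is $t=\pi_{11}k+\ldots$, so its coefficients are bounded by the continuants, i.e.\ by $\tau$ or $D$, and not by a product of per-step bounds. Each level's term is a power sum $\operatorname{Pow}_i$ evaluated at lengths at most $N$, of degree at most three, times $(k+1)^{j}$. Summing $O(\log D)$ such terms gives coefficients bounded by $\operatorname{poly}(N)$. The common integral scaling from Appendix~\ref{app:operator-representation} multiplies this by a constant. All intermediate products in applying an operator are then $\operatorname{poly}(N)$.

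\emph{Remaining components.} The terminal table stores prefix moments over two periods for moduli at most $\kappa$, which are $O(\kappa^{6})$, together with modular inverses below $\kappa$. The short recursion manipulates six moments of a staircase in an $N\times N$ box, all $O(N^{5})$. Its affine shifts and reciprocal formulas involve products of such moments with polynomially bounded coefficients. The layer triple satisfies $|A|,|B|,|C|\leqslant n^{O(1)}$, since $\Omega_N$ has $O(N^{2}\log N)$ points and each weight is $O(N^{4})$. The weighted Möbius prefixes $\sum\mu(d)d^{k}$ for $k\leqslant2$ are at most $n^{3}$, and the final combination~\eqref{eq:layer-coefficients} together with $F(n)\leqslant n^{8}$ are polynomial as well. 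Every operand is therefore bounded by $n^{c}$ for an absolute constant $c$, and so has $O(\log n)$ bits.
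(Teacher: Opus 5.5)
Your argument is correct. It departs from the paper only at its one nontrivial point, the coefficient growth of the compiled operators.

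\textbf{The paper's route.} The paper keeps the step-by-step induction that you discard. One composition step enlarges coefficients by at most $C(A_m+2)^C$, a polynomial in the \emph{local} quotient rather than in $D$. Along an accepted path $\prod_m A_m\leqslant\pi_{11}\leqslant\tau$, and the depth is $d=O(\log(2+\tau))$. Hence $\prod_m C(A_m+2)^C\leqslant C^d3^{Cd}\tau^C=\tau^{O(1)}$. The $D^{O(\log D)}$ obstacle you identify is therefore only an artifact of bounding every quotient by $D$.

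\textbf{Your route.} You instead unroll the composition rule into
$\Pi_{P;i,j}(q,h)=\sum_{m<d}(-1)^m\mathsf C_0\cdots\mathsf C_{m-1}\,\mathbf E(q_m,h_m;A_m,U_m)$, with $\mathsf C_m=\mathsf C(A_m,U_m)$, so no product of per-step bounds ever appears. This is more structural, matches how the prefix tree actually appends steps, and yields explicit exponents.

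\textbf{What it costs.} Several facts that you only assert must be verified:
\begin{itemize}
\item $\mathsf C_0\cdots\mathsf C_{m-1}$ is induced by the map with linear part $P_m$ and translation $\sum_{j<m}P_j(U_j,1)^{\mathsf T}$. This translation is $O(\tau)$ by $U_j\leqslant A_j$ and the telescoping $\sum_{j<m}A_j\chi_j\leqslant\chi_m+\chi_{m-1}$.
\item The level lengths $\ell_m=P_m^{-1}\bigl(\ell_0-\sum_{j<m}d_j\mathbf c_j\bigr)$ are affine in $(q,h)$ with $\tau^{O(1)}$ coefficients.
\item The coefficients of $\mathbf E$ in $(q_m,h_m)$ are polynomials in $A_m,U_m\leqslant\tau$.
\end{itemize}
Your sentence about ``a power sum evaluated at lengths at most $N$'' bounds values rather than coefficients. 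It should be replaced by this last observation.

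\textbf{Comparison.} The paper's route is shorter and treats the state matrix, the moment matrix, and $\Pi$ uniformly with one product bound. Yours never has to absorb the $C^d$ factor, but it needs these separate checks on the composed maps.
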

All coordinates are polynomially bounded in $n$, and composition has fixed
dimension and logarithmic depth.  The coefficient-growth calculation is in
Appendix~\ref{app:one-value-complexity}.

\section{Experiments}\label{sec:one-value-experiments}

We compare the $O(n\log n)$ algorithm with the exact
$O(n\log^2n)$ divisor-layer implementation from the public code repository
on an AMD Ryzen 9 5950X running Windows.  Both implementations use the same
experimental interface and return exactly one prescribed value $F(n)$ per
invocation.  GCC 8.1.0 from MinGW-w64 compiled both programs with
\texttt{-O3 -DNDEBUG -march=native -flto}.

At each $n=2^k$, both programs are run ten times.  Every timed computation starts with an
empty auxiliary cache and includes all preprocessing performed by the
corresponding implementation; only output is excluded.  Both processes are
pinned to a fixed logical processor, run at high
priority under the Windows high-performance power plan.

Figure~\ref{fig:one-value-timing} reports the arithmetic means of the ten
elapsed times at $n=2^k$.  In the right panel, each running time is normalized
by the corresponding theoretical complexity; values are in seconds.

\begin{figure}[H]
  \centering
  \resizebox{\textwidth}{!}{%
  \begin{tikzpicture}
  \begin{groupplot}[
    group style={group size=2 by 1,horizontal sep=1.75cm},
    width=6.05cm,height=4.00cm,
    xmin=9.5,xmax=25.5,
    xtick={10,11,12,13,14,15,16,17,18,19,20,21,22,23,24,25},
    xticklabels={$2^{10}$,$2^{11}$,$2^{12}$,$2^{13}$,$2^{14}$,
                 $2^{15}$,$2^{16}$,$2^{17}$,$2^{18}$,$2^{19}$,
                 $2^{20}$,$2^{21}$,$2^{22}$,$2^{23}$,$2^{24}$,$2^{25}$},
    xlabel={grid size $n=2^k$},
    grid=major,grid style={benchmarkgrid!65},
    axis line style={black!80},
    x tick label style={font=\scriptsize,rotate=45,anchor=east},
    y tick label style={font=\scriptsize},
    label style={font=\footnotesize},
    xlabel style={yshift=2pt},
    title style={font=\footnotesize,
                 at={(axis description cs:0.5,1.005)},anchor=south},
    legend style={font=\scriptsize,draw=benchmarkgrid,fill=white,
                  fill opacity=0.94,text opacity=1,cells={anchor=west},
                  inner xsep=3pt,inner ysep=2pt}
  ]
    \nextgroupplot[
      title={Running times of the algorithms},
      ylabel={running time (seconds)},
      ymode=log,ymin=0.0003,ymax=100,
      ytick={0.001,0.01,0.1,1,10,100},
      legend pos=north west
    ]
      \addplot[benchmarkbaseline,thick,mark=triangle*,mark size=2.3pt]
        coordinates {
          (10,0.000532890)(11,0.001189880)(12,0.002717190)
          (13,0.005984020)(14,0.014042840)(15,0.031578500)
          (16,0.068049890)(17,0.147840260)(18,0.332526450)
          (19,0.734904960)(20,1.631658100)(21,3.354784790)
          (22,7.269896950)(23,15.796421190)(24,34.597672780)
          (25,73.533618160)
        };
      \addlegendentry{$O(n\log_2^2 n)$}
      \addplot[benchmarkproposed,thick,mark=square*,mark size=2.1pt]
        coordinates {
          (10,0.001447670)(11,0.001744720)(12,0.002785460)
          (13,0.006319030)(14,0.014831100)(15,0.025661360)
          (16,0.042857430)(17,0.104808200)(18,0.200522790)
          (19,0.446458970)(20,0.868196060)(21,1.837163590)
          (22,3.843660720)(23,8.341523900)(24,16.931492340)
          (25,36.384306720)
        };
      \addlegendentry{$O(n\log_2 n)$}

    \nextgroupplot[
      title={Normalized running times},
      ylabel={normalized running time (ns)},
      ymode=log,ymin=3,ymax=170,
      ytick={3,10,30,100},
      yticklabels={$3$,$10$,$30$,$100$},
      legend style={
        at={(axis description cs:0.98,0.38)},
        anchor=east
      }
    ]
      \addplot[benchmarkbaseline,thick,mark=triangle*,mark size=2.3pt]
        coordinates {
          (10,5.204003906)(11,4.801620610)(12,4.606781006)
          (13,4.322314742)(14,4.372994559)(15,4.283108181)
          (16,4.056089520)(17,3.902877755)(18,3.915085616)
          (19,3.882880805)(20,3.890176058)(21,3.627405999)
          (22,3.581153928)(23,3.559697821)(24,3.580177077)
          (25,3.506356152)
        };
      \addlegendentry{$O(n\log_2^2 n)$}
      \addplot[benchmarkproposed,thick,mark=square*,mark size=2.1pt]
        coordinates {
          (10,141.374023)(11,77.446733)(12,56.670329)
          (13,59.335844)(14,64.658465)(15,52.208171)
          (16,40.872030)(17,47.036653)(18,42.496319)
          (19,44.818574)(20,41.398814)(21,41.715613)
          (22,41.654551)(23,43.234223)(24,42.049816)
          (25,43.373474)
        };
      \addlegendentry{$O(n\log_2 n)$}
  \end{groupplot}
  \end{tikzpicture}%
  }
  \caption{Mean one-value running times over ten runs on logarithmic scales.
  Left: elapsed time.  Right: time normalized by the corresponding theoretical
  complexity.}
  \label{fig:one-value-timing}
\end{figure}

On larger inputs, the broadly stable normalized curves support the predicted
scaling.  Small residual high-end fluctuations may reflect growing working
sets crossing cache-capacity thresholds.  Despite its lower $O(n\log n)$
complexity, the new algorithm has a larger constant from storing and applying
relatively heavy compiled kernels; its lower asymptotic growth compensates on
sufficiently large inputs.

\section{Reproducibility}

The reference implementations and build instructions are available at
\url{https://github.com/flykiller/lattice-rectangles-one-value}.

\section{Future work}\label{sec:future-work}

The reference C++ implementation is single-threaded.  Divisor layers and
coefficient paths are independent, but a parallel version should share the
M\"obius prefixes, marker data, and terminal table among workers.

The terminal periodic table causes the $O(n^{3/4})$ working-memory term; the
live data for one coefficient path is smaller.  Achieving the same
$O(n\log n)$ arithmetic bound with a smaller terminal representation remains
open, perhaps by evaluating selected periodic prefixes on demand while
preserving enough sharing to avoid restoring a logarithmic factor in time.

\section{Conclusion}

For a prescribed $n\times n$ grid, the exact lattice-rectangle count is
computable in $O(n\log n)$ arithmetic operations and $O(n^{3/4})$ arithmetic
words of working memory.  The method compiles shared prefixes of weighted
Euclidean floor-moment recurrences, uses uniform marker grids to locate
queries, and resolves constant-size local boundary corrections.  Its operands
have $O(\log n)$ bits, and the implementation exhibits the predicted
$n\log n$ scaling over the tested range.

\clearpage
\appendix

\section{Geometric and moment identities}
\label{app:part1-moments}

\subsection{Full half-domain moment reduction}
\label{app:half-domain-moments}

This section supplies the explicit formulas used by
Lemma~\ref{lem:momentreduction}.  For the strict non-diagonal contribution,
the direction pair $(a,b)$ and the side pair $(x,y)$ both have decreasing
positive coordinates.  The
constraint $ax+by\leqslant N$ and the vector
$\mathbf w_{a,b}(x,y)$ are unchanged when the two pairs are exchanged, and
the multiplier is four.  Pairing $a<x$ with $a>x$ gives
\begin{equation}\label{eq:halfdomain}
4\!\sum_{\substack{a>b\geqslant1,\ x>y\geqslant1\\ax+by\leqslant N}}
\mathbf w_{a,b}(x,y)
=
8\!\sum_{\substack{a>b\geqslant1,\ x>y\geqslant1\\a\leqslant x,\ ax+by\leqslant N}}
\mathbf w_{a,b}(x,y)
-
4\!\sum_{\substack{a>b\geqslant1,\ x>y\geqslant1\\a=x,\ ax+by\leqslant N}}
\mathbf w_{a,b}(x,y).
\end{equation}
In the first sum on the right, $ax\leqslant N$ and $a\leqslant x$, hence
$a\leqslant D=\floor{\sqrt N}$.

Fix $(a,b)$ with $a\leqslant D$.  For a fixed $y$, the admissible $x$ form
the interval
\[
x_{\min}(y)\leqslant x\leqslant x_{\max}(y),\qquad
x_{\max}(y)=\floor{\frac{N-by}{a}},\qquad
x_{\min}(y)=
\begin{cases}
a,&y<a,\\
y+1,&y\geqslant a.
\end{cases}
\]
Put
\[
q_{\mathrm{sq}}=\floor{\frac{N-a^2}{b}},\qquad
q=
\begin{cases}
q_{\mathrm{sq}},&q_{\mathrm{sq}}<a-1,\\
\displaystyle\max\!\left\{a-1,\floor{\frac{N-a}{a+b}}\right\},
&q_{\mathrm{sq}}\geqslant a-1.
\end{cases}
\]
For $y<a$, nonemptiness is $y\leqslant q_{\mathrm{sq}}$; if
$q_{\mathrm{sq}}\geqslant a-1$, the
remaining condition is $(a+b)y+a\leqslant N$.  Hence the admissible $y$
form exactly $1\leqslant y\leqslant q$.

For $s\geqslant0$ define
$\Sigma_s(z)=\sum_{1\leqslant k\leqslant z}k^s$, with value zero for
$z\leqslant0$.  For $i+j\leqslant2$ the required half-domain moments are
\begin{equation}\label{eq:halfmoment}
J_{ij}(a,b;N)=
\sum_{y=1}^{q}y^j
\left[
\Sigma_i\!\left(\floor{\frac{N-by}{a}}\right)
-\Sigma_i(x_{\min}(y)-1)
\right].
\end{equation}
Write $N-bq=\eta a+\beta$, $0\leqslant\beta<a$, and form the single query
\begin{equation}\label{eq:pair-floor-query}
f(t)=\floor{\frac{bt+\beta}{a}},\qquad 0\leqslant t<q.
\end{equation}
Reversing $y=q-t$ gives $x_{\max}(q-t)=\eta+f(t)$.  Write
$\Phi_{ij}=\Phi_{ij}(f)$ for the six moments in~\eqref{eq:sixmom}.  Then the
upper-limit sums are
\begin{align}
J^{\mathrm{upper}}_{00}&=q\eta+\Phi_{01},\notag\\
J^{\mathrm{upper}}_{01}&=\eta\Sigma_1(q)+q\Phi_{01}-\Phi_{11},\notag\\
J^{\mathrm{upper}}_{02}&=\eta\Sigma_2(q)+q^2\Phi_{01}-2q\Phi_{11}+\Phi_{21},\notag\\
J^{\mathrm{upper}}_{10}&=\frac{q(\eta^2+\eta)+(2\eta+1)\Phi_{01}+\Phi_{02}}2,\notag\\
J^{\mathrm{upper}}_{11}&=\frac{(\eta^2+\eta)\Sigma_1(q)
 +(2\eta+1)(q\Phi_{01}-\Phi_{11})
 +q\Phi_{02}-\Phi_{12}}2,\notag\\
J^{\mathrm{upper}}_{20}&=\frac{q(2\eta^3+3\eta^2+\eta)
 +(6\eta^2+6\eta+1)\Phi_{01}
 +(6\eta+3)\Phi_{02}+2\Phi_{03}}6.       \label{eq:upper-moment-map}
\end{align}
The lower-limit sums are
\begin{equation}\label{eq:lower-moment-map}
J^{\mathrm{lower}}_{ij}=\Sigma_i(a-1)\Sigma_j(\min\{q,a-1\})
       +\sum_{y=a}^{q}y^j\Sigma_i(y),
\qquad J_{ij}=J^{\mathrm{upper}}_{ij}-J^{\mathrm{lower}}_{ij}.
\end{equation}
The last sum is empty when $q<a$.  Since $i+j\leqslant2$,
$y^j\Sigma_i(y)$ is a polynomial of degree at most three and is evaluated by
$\Sigma_0,\ldots,\Sigma_3$ in $O(1)$ operations.

For this fixed pair, put
\begin{equation}\label{eq:moment-linear-map}
\mathcal A_{a,b}(J)=\left(
J_{00},
(a+b)(J_{10}+J_{01}),
ab(J_{20}+J_{02})+(a^2+b^2)J_{11}
\right).
\end{equation}
The first sum on the right of~\eqref{eq:halfdomain} contributes
\begin{equation}\label{eq:moment-to-abc}
8\mathcal A_{a,b}(J).
\end{equation}
The equality boundary $a=x$ has
$1\leqslant y\leqslant y_\star$, where
$y_\star=\min\{a-1,q_{\mathrm{sq}}\}$.  Its subtracted vector is
\begin{equation}\label{eq:equality-correction}
E_{a,b}=4\left(
 y_\star,
 (a+b)(y_\star a+\Sigma_1(y_\star)),
 ab(y_\star a^2+\Sigma_2(y_\star))
 +(a^2+b^2)a\Sigma_1(y_\star)
\right).
\end{equation}

It remains to add the side diagonal $x=y=s$, which has multiplicity two.
For $c=a+b$, the number of pairs $a>b\geqslant1$ with sum $c$ is
$\floor{(c-1)/2}$.  Put
\[
\operatorname{Diag}_t(m)=
\sum_{c=3}^{m}c^t\floor{\frac{c-1}{2}},\qquad t=0,1,2.
\]
For $k_+=\floor{m/2}$ and $k_-=\floor{(m-1)/2}$, separating even and odd
$c$ gives
\begin{align}
\operatorname{Diag}_0(m)&=\Sigma_1(k_+)-k_++\Sigma_1(k_-),\notag\\
\operatorname{Diag}_1(m)&=2(\Sigma_2(k_+)-\Sigma_1(k_+))
 +2\Sigma_2(k_-)+\Sigma_1(k_-),\notag\\
\operatorname{Diag}_2(m)&=4(\Sigma_3(k_+)-\Sigma_2(k_+))+4\Sigma_3(k_-)
       +4\Sigma_2(k_-)+\Sigma_1(k_-).             \label{eq:diagonal-prefix}
\end{align}
Therefore the complete diagonal contribution is
\begin{equation}\label{eq:diagonal-contribution}
\mathcal D(N)=
\sum_{s=1}^{\floor{N/3}}
\left(2\operatorname{Diag}_0(\floor{N/s}),
      4s\operatorname{Diag}_1(\floor{N/s}),
      2s^2\operatorname{Diag}_2(\floor{N/s})\right).
\end{equation}
It takes $O(N)$ operations and requires no floor query.  Combining all pieces,
\begin{equation}\label{eq:complete-layer-generation}
(A(N),B(N),C(N))
=\mathcal D(N)+
\sum_{1\leqslant b<a\leqslant D}
\left(8\mathcal A_{a,b}(J)-E_{a,b}\right).
\end{equation}
This proves Lemma~\ref{lem:momentreduction}, including its generation bound.
\section{Technical proofs for Euclidean batching}
\label{app:part1-batching}

\subsection{Derivation of one complete Euclidean cycle}
\label{app:cycle-derivation}

The reciprocal query~\eqref{eq:reciprocal-query} is not yet normalized because
its coefficient $a$ can exceed its modulus $b$.  Divide the three quantities
that control the next step by $b$:
\begin{equation}\label{eq:stepquotients}
a=Ab+b',\qquad u=Ub+\beta',\qquad v=Vb+\gamma,
\end{equation}
where
\[
A=\floor{a/b},\qquad U=\floor{u/b},\qquad
V=\floor{v/b},\qquad 0\leqslant b',\beta',\gamma<b.
\]
Then
\begin{equation}\label{eq:child-query}
\widehat f(k)=Ak+U+f'(k),\qquad
f'(k)=\floor{\frac{b'k+\beta'}{b}},\qquad 0\leqslant k<h.
\end{equation}
The polynomial part $Ak+U$ is handled explicitly by the six-moment affine
formulas.  The only recursive child is $f'$, with
\[
a'=b,\qquad b'=a-Ab,\qquad
\beta'=u-Ub,\qquad q'=h.
\]
Renormalizing it gives
\[
u'=a'-\beta'-1=(U+1)b-u-1,\qquad
v'=b-\gamma-1=(V+1)b-v-1.
\]
Finally, substituting $b(q-1)+\beta=ah+v$ in the endpoint formula for $f'$ yields
\[
h'=q-Ah-(U+V+2-A).
\]
Together these identities give the cycle~\eqref{eq:cycle}.

\subsection{Coefficient cones and terminal coordinates}
\label{app:coefficient-cones}

\begin{proof}[Proof of Lemma~\ref{lem:onestep}]
The assertion is immediate if $R=0$.  Otherwise suppose $M>\kappa$ and let
$A=\floor{M/R}$.  If $(\pi_{11},\pi_{12})$ is the first row of $P$, rejection
of the next child gives
\[
 \pi_{11}A+\pi_{12}>\tau.
\]
For $P=I$ this is simply $A>\tau$.  For a nonempty continuant product, its
first row dominates the second row componentwise, so an overflowing entry in
$PQ(A)$ implies the same displayed inequality.  Therefore
\[
 a=\pi_{11}M+\pi_{12}R
 \geqslant(\pi_{11}A+\pi_{12})R>\tau R.
\]
Since $a\leqslant D\leqslant\tau\kappa$, it follows that $R<\kappa$.
The next Euclidean modulus is $R$.
\end{proof}

\begin{proof}[Proof of Lemma~\ref{lem:terminal-bijection}]
Starting from any root pair, follow its canonical Euclidean divisions.  Enter
a child whenever the current terminal pair has $M>\tau$ and $R>0$ and the
child matrix is accepted.  The first failed condition is uniquely one of the three stop
conditions in Section~\ref{sec:terminal-coordinates}.  Conversely each stopped
pair reverses through $P$ to one root pair.  Unimodularity makes $(M,R)$
unique, proving the bijection.

For the matrix count, an accepted nonidentity matrix has nonnegative entries,
determinant $\pm1$, and entries at most $\tau$.  Write its rows as
$(p_1,p_2)$ and $(r_1,r_2)$.  The first row is primitive, and the continuant
inequalities give
\begin{equation}\label{eq:continuant-dominance}
 0\leqslant r_1\leqslant p_1,\qquad
 0\leqslant r_2\leqslant p_2.
\end{equation}
Fix $(p_1,p_2)$ and a determinant sign $\delta$.  If
$(r_{1,0},r_{2,0})$ is one solution of
$p_1r_2-p_2r_1=\delta$, every solution is
\[
 (r_1,r_2)=(r_{1,0},r_{2,0})+t(p_1,p_2),
 \qquad t\in\mathbb Z.
\]
For a nonidentity path both $p_1$ and $p_2$ are positive.  Each inequality
in~\eqref{eq:continuant-dominance} therefore restricts $t$ to a closed
interval of length at most one, so their intersection contains at most two
integers.  There are $O(\tau^2)$ possible first rows and two determinant
signs.  A recursion word is also recovered uniquely from its matrix.  Indeed,
if $P=P^-Q(A)$ is nonidentity, then
\[
 A=\min\!\left\{\floor{p_1/p_2},\floor{r_1/r_2}\right\},
\]
where a quotient with zero denominator is omitted, and
$P^-$ has first column $(p_2,r_2)^{\mathsf T}$ and second column
$(p_1-Ap_2,r_1-Ar_2)^{\mathsf T}$.  Repeating this step reaches $I$.
Thus a matrix is not counted through two recursion words.  Including $I$
gives $O(\tau^2)$ nodes.

The small-modulus piece tests fewer than $\tau^2$ pairs per node and therefore
costs $O(\tau^4)$.  The zero-remainder piece is one integer interval in $M$.
For the overflow piece put
$A_{\rm stop}=\operatorname{ext}(P)+1$.  For every
\[
 1\leqslant R\leqslant
 \floor{\frac{D}{\pi_{11}A_{\rm stop}+\pi_{12}}},
\]
the admissible $M$ form the explicit interval
\[
 \max\{\tau+1,A_{\rm stop}R,R+1\}
 \leqslant M\leqslant
 \floor{\frac{D-\pi_{12}R}{\pi_{11}}},
\]
intersected with
$1\leqslant\pi_{21}M+\pi_{22}R
 <\pi_{11}M+\pi_{12}R$.
There is one constant-time interval test per candidate $R$.  Summed over all
nodes these tests cost $O(D\tau^2)$, which is
$O(D^2+\tau^4)$ by $2D\tau^2\leqslant D^2+\tau^4$.  The emitted pairs
themselves number exactly $D(D-1)/2$.
\end{proof}

\subsection{Boundary formulas and uniform-grid corrections}
\label{app:marker-orbit-proof}

Fix an accepted coefficient path
\[
 (a_0,b_0),(a_1,b_1),\ldots,(a_d,b_d)=(M,R),
 \qquad a_i=A_i b_i+r_i,
\]
where $(a_{i+1},b_{i+1})=(b_i,r_i)$.  Put $H=a_0$ and $B_0=b_0$.  A root
marker evolves according to
\begin{equation}\label{eq:fold}
 U_i=\floor{u_i/b_i},
 \qquad
 u_{i+1}=b_i-1-(u_i\bmod b_i).
\end{equation}

\begin{theorem}[Modular-orbit boundary formula]
\label{thm:marker-orbit}
Assume $R>0$.
For $1\leqslant m\leqslant d$ let
\[
 P_m=Q(A_0)\cdots Q(A_{m-1}),
 \qquad \chi_m=(P_m)_{11},
 \qquad L=1+\sum_{m=1}^{d}\chi_m.
\]
Then the boundary multiset of Definition~\ref{def:marker-intervals} is
\begin{equation}\label{eq:orbit-boundaries}
 (e_0,\ldots,e_{L-1})
 =\operatorname{sort}\bigl((kB_0\bmod H)_{0\leqslant k<L}\bigr),
 \qquad e_L=H.
\end{equation}
The listed residues are distinct.
\end{theorem}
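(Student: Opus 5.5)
The plan is to argue by induction on the path length $d$, peeling off the first Euclidean division. The recursion \eqref{eq:fold} moves the marker from level $i$ to level $i+1$ by a piecewise map of slope $-1$. Its pieces are the intervals $[jb_i,(j+1)b_i)$, and on each piece $U_i=j$ is constant. So the boundaries in root coordinates consist of two kinds of points: the level-$0$ breakpoints $jb_0$ with $0\leqslant j\leqslant A_0$, and the pull-backs, piece by piece, of the level-$1$ boundaries of the shortened path $(a_1,b_1),\ldots,(a_d,b_d)$.

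\emph{Base case.} For $d=1$ the only quotient is $U_0$. Its boundaries in $[0,H)$ are $0,b_0,\ldots,A_0b_0$, which is $A_0+1=1+\chi_1$ points. These are exactly $kB_0\bmod H$ for $0\leqslant k\leqslant A_0$, since $kB_0<H$ there.

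\emph{Inductive step.} Assume the theorem for the suffix path. In $u_1$-coordinates its boundaries are $e'_k=kb_1\bmod a_1=kr_0\bmod b_0$ for $0\leqslant k<L'$, where $L'=1+\sum_{m=1}^{d-1}\chi'_m$ and $\chi'_m$ are the continuants of $A_1,\ldots,A_m$. Since $u_0=jb_0+b_0-1-u_1$ on piece $j$, a slot $[e',e'_{\rm next})$ in $u_1$ becomes a slot with left endpoint $jb_0+(b_0-e'_{\rm next})$ in $u_0$; here $e'_{\rm next}=b_0$ for the last slot. Hence the root boundaries are
\[
\{jb_0+s:\ 0\leqslant j\leqslant A_0,\ s\in\{0\}\cup\{b_0-e'_k:1\leqslant k<L'\}\}\cap[0,H).
\]
The residue $b_0-e'_k$ satisfies $b_0-e'_k\equiv -kr_0\equiv -ka_0\pmod{b_0}$. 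Therefore every such point $x$ satisfies $x+ka_0=mb_0$ for an integer $m$, that is, $x=mb_0\bmod H$. Two things then remain. First, the set of $m$ that occurs must be exactly the initial segment $0\leqslant m<L$. Second, the multiplicity count must match, which amounts to the continuant identity $\chi_m=A_0\chi'_{m-1}+\chi'_{m-2}$ (first-row expansion $Q(A_0)P'$) summed over $m$, so that $L=(A_0+1)+A_0(L'-1)+(\text{correction for the partial piece }j=A_0)$. For the first point, I would use that $x\mapsto x+b_0$ steps through successive multiples: $m$ is the number of times the rotation by $b_0$ on $\mathbb Z/H$ has been applied. The pieces $j=0,\ldots,A_0$ then correspond to consecutive blocks of $m$, namely $m=j+k(A_0)+\lfloor\cdot\rfloor$-type indices via $kb_0\bmod$. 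This is where the three-distance (Sturmian) structure of the orbit $kB_0\bmod H$ enters.

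\emph{Distinctness.} The residues are distinct as soon as $L\leqslant H/g$ with $g=\gcd(H,B_0)=\gcd(M,R)$. Since $R>0$, we have $M>R\geqslant g$. Also $H=\chi_dM+\pi_{12}R$, and the continuant sums satisfy $\sum_{m\leqslant d}\chi_m\leqslant\chi_d+\pi_{12}+\cdots$, which is controlled by the second row of $P$. Together these should give $L\leqslant H/g$; this needs only a short continuant inequality. The residues with $kb_0$ already in $[0,H)$ form the piece-$0$ block and are distinct trivially.

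I expect the main obstacle to be the exact index bookkeeping in the inductive step. One has to show that the pulled-back points, including the truncated last piece $[A_0b_0,H)=[A_0b_0,A_0b_0+r_0)$, produce each $m<L$ exactly once and no $m\geqslant L$. The truncation is where the suffix boundaries $b_0-e'_k\geqslant r_0$ are discarded. My first attempt would be to prove a sharper invariant by induction: for every $L''$, the first $L''$ multiples $kB_0\bmod H$ sorted split into runs aligned with the pieces. Failing that, I would instead identify the boundaries as the discontinuities of the composite fold $u_0\mapsto u_d$ and count them via the rotation by $B_0$ on the circle of length $H$, using the fact that after $m$ steps the fold is a rotation restricted to $\chi_m$ intervals.
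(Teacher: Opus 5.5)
Your backward induction, which peels off the first division and pulls the suffix boundaries back through the slope $-1$ fold, is set up correctly, and you correctly show that every root boundary equals $mB_0\bmod H$ for some integer $m$. But the content of the theorem is that the indices which occur are exactly $0\leqslant m<L$, and you leave precisely this open. The hints you give for it are also off. Write $e'_k=kr_0\bmod b_0$ and $c'_k=\floor{kr_0/b_0}$. The pulled-back point $jb_0+b_0-e'_k$ with $k\geqslant1$ has index $m=j+1+kA_0+c'_k$, while the points $jb_0$ have $m=j$. So consecutive indices come from one suffix index $k$ as $j$ varies, not from one piece $j$. For example, with $(H,B_0)=(17,7)$ the piece $j=0$ carries the boundaries $0,4,1$, with $m=0,3,5$. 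The truncated piece $j=A_0$ admits $k$ exactly when $e'_k+r_0>b_0$, that is, when $c'_{k+1}=c'_k+1$ and $(k+1)r_0\not\equiv0\pmod{b_0}$. Putting $\mu_k=1+kA_0+c'_k$, the indices coming from $k$ then fill exactly $[\mu_k,\mu_{k+1})$; the exact-hit case is excluded for $k+1<L'$ by suffix distinctness. Hence the index set is $[0,\mu_{L'})$ provided $L'r_0\not\equiv0\pmod{b_0}$, and the theorem reduces to $\mu_{L'}=L$.

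That equality is the missing nontrivial step, and the identity you propose for it, $\chi_m=A_0\chi'_{m-1}+\chi'_{m-2}$, is false. Expanding $Q(A_0)P'_{m-1}$ gives $\chi_m=A_0\chi'_{m-1}+(P'_{m-1})_{21}$, and $(P'_{m-1})_{21}$ is the continuant of $A_2,\ldots,A_{m-1}$, not of $A_1,\ldots,A_{m-2}$; for $m=3$ it is $A_2$, not $A_1$. What is actually needed is $\floor{L'r_0/b_0}=\sum_{m=1}^{d-1}(P'_m)_{21}$ with nonzero remainder. This does not follow from the bare orbit statement for the suffix. It follows by summing the return identity $\chi'_mb_1-(P'_m)_{21}a_1=(-1)^mb_{m+1}$ over $0\leqslant m<d$ and using $0<b_1-b_2+\cdots\pm b_d<a_1$, which is where $R>0$ enters. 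That return identity, in its root form $\chi_iB_0-\eta_iH=(-1)^ib_i$, is also the engine of the paper's proof, which runs the induction forward instead. Once $i$ quotients are fixed, the cut edges form the orbit prefix $k<L_i$. A shift by $\chi_iB_0$ advances the level-$i$ lift by one $b_i$-block, and the $A_i\chi_i+\chi_{i-1}=\chi_{i+1}$ new crossings take the next orbit indices. Completed as above, your route gets distinctness for free, since distinct pairs $(j,k)$ give distinct points, so your vague continuant estimate becomes unnecessary. If you keep that estimate, the right inequality is $L\leqslant2\chi_d+\chi_{d-1}\leqslant\chi_d(M/g)+\chi_{d-1}(R/g)=H/g$, where $\chi_{d-1}=\pi_{12}$ lies in the first row of $P$, not the second.
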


\begin{proof}[Proof of Theorem~\ref{thm:marker-orbit}]
Put $P_0=I$ and write every prefix matrix as
\[
 P_i=
 \begin{pmatrix}\chi_i&\chi_{i-1}\\[1mm]\eta_i&\eta_{i-1}\end{pmatrix},
 \qquad \chi_{-1}=0,\quad\chi_0=1.
\]
Thus $\chi_{i+1}=A_i\chi_i+\chi_{i-1}$ and
$\det P_i=(-1)^i$.  Since
$(H,B_0)^{\mathsf T}=P_i(a_i,b_i)^{\mathsf T}$, we have
\begin{equation}\label{eq:orbit-return}
 \chi_iB_0-\eta_iH=(-1)^i b_i.
\end{equation}

Consider the edge between two consecutive root markers $u-1$ and $u$.  As
long as their earlier marker quotients agree, the fold~\eqref{eq:fold} sends
the edge coordinate $x_i$ to
\[
 x_{i+1}=(-x_i)\bmod b_i.
\]
The quotient $U_i$ changes across that edge exactly when $x_i=0$ modulo
$b_i$.  Let
$L_i=1+\sum_{m=1}^{i}\chi_m$.

\emph{Induction claim.}
After the first $i$ marker quotients have been fixed, their cut edges are
\[
 \{kB_0\bmod H:0\leqslant k<L_i\}.
\]
For $i=0$, $L_0=1$ and the set consists only of the edge zero.

For the induction step, fix an atom $J$ of the partition made by these edges.
Unrolling the first $i$ folds on $J$ writes
\[
 u_i=(-1)^iu+c_J
\]
for an integer constant $c_J$ depending on that atom.  Consequently $U_i$
is constant until this affine lift crosses a multiple of $b_i$; at such a
crossing it changes by one.  Hence locating the new cut edges is equivalent to
listing all crossings of consecutive $b_i$-blocks by these affine lifts.
Equation~\eqref{eq:orbit-return} makes the lift of a root
translation explicit:
\[
 (-1)^i\chi_iB_0\equiv b_i\pmod H.
\]
Thus translating a root edge by $\chi_iB_0$ advances its lifted value by one
whole $b_i$-block.  Index the new root edges from the end of the old orbit
block, so that their orbit indices are $L_i+j$.  The division
$a_i=A_ib_i+b_{i+1}$ supplies $A_i$ complete blocks.  Each complete block has
$\chi_i$ lifted positions, and the wraparound recorded by the second column of
$P_i$ has $\chi_{i-1}$ positions.  Equivalently, the possible offsets $j$
split without overlap as
\[
 \{r\chi_i+\ell:0\leqslant r<A_i,
                    0\leqslant\ell<\chi_i\}
 \;\dot\cup\;
 \{A_i\chi_i+\ell:0\leqslant\ell<\chi_{i-1}\}.
\]
For fixed $r$, the first set is the complete offset block
$[r\chi_i,(r+1)\chi_i)$ and records the next crossing in each lifted
position.  The second set is exactly the incomplete wrapped block.  Thus every
displayed offset gives a crossing, and any other edge stays strictly inside
one $b_i$-block.  The two sets form, without gaps or overlaps, the interval
$0\leqslant j< A_i\chi_i+\chi_{i-1}
 =\chi_{i+1}$.  Here the last equality is the continuant recurrence.  Hence
the new crossing edges are exactly
\[
 \{(L_i+j)B_0\bmod H:0\leqslant j<\chi_{i+1}\}.
\]
They extend the preceding indices to $0\leqslant k<L_{i+1}$; the rotation
period bound verified below makes all corresponding root edges disjoint.  This
proves the claim by induction and, at $i=d$, proves
\eqref{eq:orbit-boundaries} with $L=L_d$.

The same recurrence gives
\begin{equation}\label{eq:orbit-count-bound}
 L\leqslant2\chi_d+\chi_{d-1}.
\end{equation}
For the induction step, add $\chi_{i+1}$ to
$L_i\leqslant2\chi_i+\chi_{i-1}$ and use
$\chi_{i+1}\geqslant\chi_i+\chi_{i-1}$.

The orbit points are exactly the cut edges of the marker-word map, so the word
is fixed precisely between consecutive distinct boundaries.  To verify the
distinctness statement, let $g=\gcd(M,R)$.  Unimodularity gives
$\gcd(H,B_0)=g$.  Since $R>0$, we have $M/g\geqslant2$ and $R/g\geqslant1$, and hence
the rotation period
$H/g=\chi_d(M/g)+\chi_{d-1}(R/g)$ is at least
$2\chi_d+\chi_{d-1}\geqslant L$.  Thus the first $L$
residues are distinct.  This completes the proof of the theorem.
\end{proof}

\pagebreak
\medskip
\noindent\textbf{Affine endpoints.}

Write
$P=\left(\begin{smallmatrix}\pi_{11}&\pi_{12}\\
\pi_{21}&\pi_{22}\end{smallmatrix}\right)$ and use the
interior sample pair $(M,R)=(2,1)$.  Put
\[
 H_*=2\pi_{11}+\pi_{12},
 \qquad B_*=2\pi_{21}+\pi_{22}.
\]
Let $k_0,\ldots,k_{L-1}$ order the residues $kB_*\bmod H_*$ increasingly and
put $c_k=\floor{kB_*/H_*}$.  Define
\begin{equation}\label{eq:affine-endpoint-coefficients}
 \lambda_j=k_j\pi_{21}-c_{k_j}\pi_{11},
 \qquad
 \omega_j=k_j\pi_{22}-c_{k_j}\pi_{12}.
\end{equation}

To prove that this sample order is valid throughout the cone, put $\xi=R/M$.
The rotation number is
\[
 \frac {B_0}H=\frac{\pi_{21}+\pi_{22}\xi}{\pi_{11}+\pi_{12}\xi},
\]
which lies between $\pi_{21}/\pi_{11}$ and
$(\pi_{21}+\pi_{22})/(\pi_{11}+\pi_{12})$.  These two fractions are Farey
neighbours because
$\pi_{11}\pi_{22}-\pi_{12}\pi_{21}=\det P=\pm1$.  A floor
$\floor{kB_0/H}$ with $k<L$ can change only when $B_0/H$ reaches a rational
whose reduced denominator divides $k$; the order of residues with indices
$k,\ell<L$ can change only when $(k-\ell)B_0/H$ is integral.  Either event
therefore has a reduced denominator below $L$.  By
\eqref{eq:orbit-count-bound}, $L\leqslant2\pi_{11}+\pi_{12}$, whereas no
rational with denominator below this value lies strictly between Farey
neighbours whose denominator sum is $2\pi_{11}+\pi_{12}$.  Hence all floors
and the residue order are constant for $0<R<M$ and may be computed at the
sample point $\xi=1/2$.  Expanding
\[
 k_jB_0-c_{k_j}H
\]
in $M$ and $R$ and using~\eqref{eq:affine-endpoint-coefficients} gives
\begin{equation}\label{eq:endpoint}
 e_j=\lambda_jM+\omega_jR.
\end{equation}
For $R=0$ we define the labeled boundary list by the closed specialization of
these affine forms.  Since $e_j(M,R)\leqslant e_{j+1}(M,R)$ for every
$0<R<M$, passage to the limit gives a nondecreasing list at $R=0$; some
entries may coalesce at $0$ or $H$.  Each quotient test in~\eqref{eq:fold} is
piecewise affine between consecutive cut edges; at a cut edge the floor and
the half-open convention both select the slot on its right.  Passing these
inequalities to $R=0$ therefore preserves the labeled marker word on every
positive-width limiting slot, while intervals that collapse are empty.  Thus the specialized list,
with repetitions retained and $e_L=H$, gives exactly the true labeled slots at
$R=0$.  It is this affine specialization, rather than the ordinary modular
residue multiset, that is used for zero-remainder terminal pairs.

Once the marker word is known, no further recurrence is needed at query time.
Let $b_i=\vartheta_{i,1}M+\vartheta_{i,2}R$ be the second coefficient at
depth $i$ and let $U_i$ be the fixed marker quotient of one true slot.  Unrolling
\eqref{eq:fold} gives the executable terminal-marker formula
\begin{equation}\label{eq:terminal-marker-affine}
 u_d=(-1)^d u_0+
 \sum_{i=0}^{d-1}(-1)^{d-1-i}
 \bigl((U_i+1)(\vartheta_{i,1}M+\vartheta_{i,2}R)-1\bigr).
\end{equation}
The same formula with the second marker word $(V_i)_i$ gives $v_d$.
Consequently each slot stores four integral coefficients for
$u_d=c_u u_0+c_MM+c_RR+c_0$; a slot pair and its moment operator determine
the complete terminal state in constant time.

\medskip
\noindent\textbf{Uniform grids and local corrections.}
Put $\delta=\det P$.  The coefficients in~\eqref{eq:endpoint} satisfy
\[
 \pi_{11}\omega_j-\pi_{12}\lambda_j=k_j\delta.
\]
Consequently the two endpoint specializations give the exact identities
\begin{equation}\label{eq:virtual-endpoint-identities}
 \pi_{11} e_j=\lambda_jH+k_j\delta R,
 \qquad
 (\pi_{11}+\pi_{12})e_j=(\lambda_j+\omega_j)H-k_j\delta(M-R).
\end{equation}
For the technical construction define
\[
 T_0=\pi_{11},\quad z_{0,j}=\lambda_j,
 \qquad
 T_1=\pi_{11}+\pi_{12},\quad z_{1,j}=\lambda_j+\omega_j,
\]
and let $K_\nu(z)$ be the multiset of indices $k_j$ with
$z_{\nu,j}=z$.  For a terminal pair select
\begin{equation}\label{eq:virtual-grid-selection}
 (\nu,T,E,\varepsilon)=
 \begin{cases}
  (0,\pi_{11},R,\det P),&2R\leqslant M,\\
  (1,\pi_{11}+\pi_{12},M-R,-\det P),&2R>M.
 \end{cases}
\end{equation}
The selected uniform grid divides $[0,H)$ into $T$ equal intervals; the cell
of a marker $t$ is $x=\floor{Tt/H}$.

At the two endpoint specializations of~\eqref{eq:endpoint},
$0\leqslant e_j\leqslant H$ gives
\[
 0\leqslant\lambda_j\leqslant\pi_{11},\qquad
 0\leqslant\lambda_j+\omega_j\leqslant\pi_{11}+\pi_{12}.
\]
Thus $K_0$ and $K_1$ are stored directly in arrays indexed respectively by
$0,\ldots,T_0$ and $0,\ldots,T_1$; no associative search structure is needed.

For $2R\leqslant M$, the orbit bound~\eqref{eq:orbit-count-bound} gives
\[
 k_jR<(2\pi_{11}+\pi_{12})R
 \leqslant \pi_{11}M+\pi_{12}R=H.
\]
For $2R>M$ it gives
\[
 k_j(M-R)<(2\pi_{11}+\pi_{12})(M-R)
 <\pi_{11}M+\pi_{12}R=H.
\]
Thus the displacement term in the selected identity of
\eqref{eq:virtual-endpoint-identities} is strictly smaller than one uniform
cell.

It remains to bound the number of orbit points at one grid position.
The continuant recurrence gives $0\leqslant\pi_{12}\leqslant\pi_{11}$.
If $\lambda_j=\lambda_{j'}$, then
$(k_j-k_{j'})\pi_{21}$ is divisible by $\pi_{11}$.  Unimodularity gives
$\gcd(\pi_{11},\pi_{21})=1$, so the corresponding orbit indices are
congruent modulo $\pi_{11}$.
Since $0\leqslant k_j<L\leqslant2\pi_{11}+\pi_{12}
\leqslant3\pi_{11}$, at most three indices share
one position of the first grid.  Similarly,
$\gcd(\pi_{11}+\pi_{12},\pi_{21}+\pi_{22})=1$; equality of
$\lambda_j+\omega_j$ and $\lambda_{j'}+\omega_{j'}$ makes the two indices
congruent modulo $\pi_{11}+\pi_{12}$.
The bound $L\leqslant2\pi_{11}+\pi_{12}
\leqslant2(\pi_{11}+\pi_{12})$ leaves at most two indices at one
position of the second grid.

For the selected tuple~\eqref{eq:virtual-grid-selection}, the corresponding
identity can be written
\begin{equation}\label{eq:selected-virtual-identity}
 T e_j=z_{\nu,j}H+\varepsilon k_jE,
 \qquad 0\leqslant k_jE<H.
\end{equation}
For later use define the inclusive group prefix
\[
 \operatorname{pref}_\nu(z)=\sum_{w=0}^{z}|K_\nu(w)|,
 \qquad \operatorname{pref}_\nu(-1)=0.
\]
Let $Tt=xH+\rho$ with $0\leqslant\rho<H$.  If $\varepsilon=1$, all groups at
positions below $x$ lie at or before $t$, and a member of $K_\nu(x)$ does so
exactly when $kE\leqslant\rho$.  If $\varepsilon=-1$, all groups through
position $x$ lie at or before $t$, and a member of $K_\nu(x+1)$ does so exactly
when $kE\geqslant H-\rho$.  Thus the local correction of
Definition~\ref{def:virtual-marker-codes} is explicitly
\begin{equation}\label{eq:virtual-local-rank}
 \alpha_\nu(t)=
 \begin{cases}
  \displaystyle\sum_{k\in K_\nu(x)}\mathbf 1[kE\leqslant\rho],
    &\varepsilon=1,\\[2mm]
  \displaystyle\sum_{k\in K_\nu(x+1)}\mathbf 1[kE\geqslant H-\rho],
    &\varepsilon=-1.
 \end{cases}
\end{equation}
The true slot determined by a uniform cell and a local correction is
\begin{equation}\label{eq:virtual-code-to-slot}
 i_\nu(x,\alpha)=
 \begin{cases}
  \operatorname{pref}_\nu(x-1)+\alpha-1,&\varepsilon=1,\\
  \operatorname{pref}_\nu(x)+\alpha-1,&\varepsilon=-1.
 \end{cases}
\end{equation}
Indeed, the right-hand side is the number of true boundaries at or before
$t$, minus one.  Since $e_0=0$, every code attained by a marker gives a valid
index and~\eqref{eq:virtual-code-to-slot} equals $\iota(t)$.

For a uniform rectangle $(x,y)$ and a valid correction pair define
\begin{equation}\label{eq:virtual-menu-entry}
 \mathcal M_{\nu,P}(x,y)[\alpha_u,\alpha_v]
 =\bigl(i_\nu(x,\alpha_u),i_\nu(y,\alpha_v),
   \mathsf{Op}_{P;i_\nu(x,\alpha_u),i_\nu(y,\alpha_v)}\bigr),
\end{equation}
 together with the affine terminal-marker data of the selected true slot pair.
Equation~\eqref{eq:virtual-code-to-slot} depends only on the path, the chosen
grid, and the two local corrections.  Therefore the menu may be constructed before
any stopped terminal pair is processed.  The group bounds proved above leave
at most four values of each local correction, hence at most sixteen entries per
uniform rectangle.  Equations~\eqref{eq:virtual-local-rank} and
\eqref{eq:virtual-code-to-slot} prove that the selected entry is the exact
true rectangle and operator.  This proves
Theorem~\ref{thm:virtual-rectangle-menu}.

\medskip
\noindent\textbf{Construction cost.}
Finally $H_*=2\pi_{11}+\pi_{12}\leqslant3\tau$.  The $L$ sample residues can therefore be
placed in an array of length $H_*$ and read in order in $O(\tau)$ time and
words.  Insert each ordered orbit index into
$K_0(\lambda_j)$ and $K_1(\lambda_j+\omega_j)$, then scan the two arrays to
form their prefix counts.  Their total length is
$(\pi_{11}+1)+(\pi_{11}+\pi_{12}+1)
=2\pi_{11}+\pi_{12}+2=O(\tau)$, and the group bounds above give constant-size
entries.  One fold trace per slot costs $O(Ld)$; because every nonempty prefix
contributes at least one to $L$, we have $d<L$ and thus $O(Ld)=O(L^2)$.

Finally enumerate all coarse pairs and their at most sixteen local-code pairs
using~\eqref{eq:virtual-code-to-slot}.  The two complete menu sizes are
\[
 16\pi_{11}^2+16(\pi_{11}+\pi_{12})^2=O(\tau^2).
\]
This is also their construction time and proves the final cost assertion of
Theorem~\ref{thm:virtual-rectangle-menu}.

\subsection{Constant-size operator representation}
\label{app:operator-representation}

For one fixed query, pointwise evaluation uses the six lattice moments
together with the four univariate power sums
$\operatorname{Pow}_0,\ldots,\operatorname{Pow}_3$.  For compilation, however,
one operator must be applicable before the root lengths
$q$ and $h$ are known.  The complement of the transformed child region must
therefore be retained as a boundary-correction polynomial in $(q,h)$.
Summing a degree-two lattice monomial over a rectangle can raise the total
boundary degree to four; for example,
\[
\sum_{0\leqslant t<q}\sum_{1\leqslant s\leqslant h}t^2
=h\,\frac{q(q-1)(2q-1)}6.
\]
The degree bounds for the six corrections are
\[
2,\quad3,\quad4,\quad3,\quad4,\quad4.
\]
We store each correction on the corresponding initial part of the monomial
triangle
\[
q^ih^j,\qquad i,j\geqslant0,\quad i+j\leqslant4,
\]
so the coefficient counts are $6,10,15,10,15,15$.  The number $15$ is the
dimension of the full bivariate degree-four correction space, not the size of
the recursive moment kernel, which remains six.  Under one cycle the update is
$(q,h)\mapsto(h,q-Ah-d)$ with $d=U+V+2-A$.  Substitution of these affine forms
into the corrections shows that the full degree-four triangle is closed.

\medskip
\noindent\textbf{Executable one-cycle coefficients.}
The following formulas remove any need to reconstruct the operator from the
geometric argument.  Put
\[
 S_r(h):=\sum_{k=0}^{h-1}k^r,
 \qquad
 S^+_{r,j}(h):=\sum_{k=0}^{h-1}k^r(k+1)^j
 =\sum_{\ell=0}^{j}\binom{j}{\ell}S_{r+\ell}(h).
\]
Only $S_0,S_1,S_2,S_3$ occur, with
\[
 S_0=h,\quad S_1=\frac{h(h-1)}2,\quad
 S_2=\frac{h(h-1)(2h-1)}6,\quad S_3=S_1^2.
\]
For $h\geqslant1$, this notation is related to the earlier power sums by
$S_0(h)=1+\Sigma_0(h-1)$ and
$S_r(h)=\Sigma_r(h-1)$ for $r\geqslant1$; all these sums are zero at $h=0$.
For $j=0,1,2$ define $E_{0j}$ by the first line below; for $j=0,1$ define
$E_{1j}$ by the second line; finally define $E_{20}$ by the remaining lines:
\begin{align}
 E_{0j}(q,h;A,U)
  &=(q-U-1)S^+_{0,j}-A S^+_{1,j},\label{eq:explicit-E0j}\\
 E_{1j}(q,h;A,U)
  &=\frac12\bigl(q(q-1)S^+_{0,j}-A^2S^+_{2,j}
       -A(2U+1)S^+_{1,j}-U(U+1)S^+_{0,j}\bigr),
       \label{eq:explicit-E1j}\\
 E_{20}(q,h;A,U)
  &=\frac16\bigl(q(q-1)(2q-1)S^+_{0,0}-2A^3S^+_{3,0}\nonumber\\
  &\hspace{16mm}{}-3A^2(2U+1)S^+_{2,0}
       -A(6U^2+6U+1)S^+_{1,0}\nonumber\\
  &\hspace{16mm}{}-U(U+1)(2U+1)S^+_{0,0}\bigr).
       \label{eq:explicit-E20}
\end{align}
Here every $S^+_{r,j}$ is evaluated at $h$.  In the fixed coordinate order of
\eqref{eq:lattice-staircase}, set
\[
 \mathbf E(q,h;A,U)=
 (E_{00},E_{10},E_{20},E_{01},E_{11},E_{02})^{\mathsf T}.
\]
Expanding the affine image in~\eqref{eq:affinecoords} gives the explicit
one-step moment identity
\begin{equation}\label{eq:explicit-six-moment-step}
 \mathbf L_{\rm parent}
 =\mathbf E(q,h;A,U)-\mathsf C(A,U)\mathbf L_{\rm child},
\end{equation}
where
\[
\mathsf C(A,U)=
\begin{pmatrix}
1&0&0&0&0&0\\
U&A&0&1&0&0\\
U^2&2AU&A^2&2U&2A&1\\
1&1&0&0&0&0\\
U&A+U&A&1&1&0\\
1&2&1&0&0&0
\end{pmatrix}.
\]
Equations~\eqref{eq:explicit-E0j}--\eqref{eq:explicit-six-moment-step}
are the six scalar formulas used by both pointwise evaluation and operator
compilation.

The affine state part is equally explicit.  For
$\widetilde{\mathbf s}=(a,b,q,h,u,v,1)^{\mathsf T}$,
\begin{equation}\label{eq:explicit-state-matrix}
 \widetilde{\mathbf s}'=\mathsf S(A,U,V)\widetilde{\mathbf s},\qquad
\mathsf S(A,U,V)=
\begin{pmatrix}
0&1&0&0&0&0&0\\
1&-A&0&0&0&0&0\\
0&0&0&1&0&0&0\\
0&0&1&-A&0&0&-(U+V+2-A)\\
0&U+1&0&0&-1&0&-1\\
0&V+1&0&0&0&-1&-1\\
0&0&0&0&0&0&1
\end{pmatrix}.
\end{equation}

Finally, suppose $\mathsf{Op}_1=(\mathsf A_1,\mathsf U_1,\Pi_1)$ maps the
root to an intermediate state and $\mathsf{Op}_2$ maps that state to the
terminal state.  Let $(q_1(q,h),h_1(q,h))$ be the two length coordinates
obtained from $\mathsf A_1$.  Appending the second record means
\begin{equation}\label{eq:explicit-operator-composition}
 \mathsf{Op}_2\circ\mathsf{Op}_1=
 \left(
   \mathsf A_2\mathsf A_1,
   \mathsf U_1\mathsf U_2,
   \Pi_1(q,h)+\mathsf U_1
      \Pi_2(q_1(q,h),h_1(q,h))
 \right).
\end{equation}
Thus composition consists only of two fixed matrix products and one affine
substitution in the fifteen degree-four monomials.  Formula
\eqref{eq:explicit-operator-composition} determines the order of multiplication
in this composition.

\medskip
\noindent\textbf{Optimized pointwise six-moment kernel.}
Short records and the optional terminal reciprocal step use the following
floor-moment recurrence directly.  To include the unnormalized calls produced
inside the recursion, write
$f(t)=\floor{(ct+\gamma)/m}$ on $0\leqslant t<n$.  Let
$\Phi^f_{ij}:=\Phi_{ij}(f)$, and use $\Phi^g_{ij}$ for the corresponding
moments of a child query $g$.  Thus
\[
 (\Phi^f_{01},\Phi^f_{11},\Phi^f_{21},
   \Phi^f_{02},\Phi^f_{12},\Phi^f_{03})
 =\sum_{t=0}^{n-1}(f,t f,t^2f,f^2,t f^2,f^3).
\]
If $n=0$, return six zeros before performing a division.
Reuse the already defined power sums $S_r(n)=\sum_{t=0}^{n-1}t^r$.
If $c=Q_c m+r$, $\gamma=Q_\gamma m+z$, and
$g(t)=\floor{(rt+z)/m}$, then
$f(t)=Q_ct+Q_\gamma+g(t)$ and
\begin{align}
\Phi^f_{01}&=Q_cS_1(n)+Q_\gamma n+\Phi^g_{01},\nonumber\\
\Phi^f_{11}&=Q_cS_2(n)+Q_\gamma S_1(n)+\Phi^g_{11},\nonumber\\
\Phi^f_{21}&=Q_cS_3(n)+Q_\gamma S_2(n)+\Phi^g_{21},\nonumber\\
\Phi^f_{02}&=Q_c^2S_2(n)+2Q_cQ_\gamma S_1(n)+Q_\gamma^2n
     +2Q_c\Phi^g_{11}+2Q_\gamma\Phi^g_{01}+\Phi^g_{02},
\nonumber\\
\Phi^f_{12}&=Q_c^2S_3(n)+2Q_cQ_\gamma S_2(n)+Q_\gamma^2S_1(n)
\nonumber\\
&\quad+2Q_c\Phi^g_{21}+2Q_\gamma\Phi^g_{11}+\Phi^g_{12},
\nonumber\\
\Phi^f_{03}&=Q_c^3S_3(n)+3Q_c^2Q_\gamma S_2(n)
\nonumber\\
&\quad+3Q_cQ_\gamma^2S_1(n)+Q_\gamma^3n
\nonumber\\
&\quad+3Q_c^2\Phi^g_{21}+6Q_cQ_\gamma\Phi^g_{11}
       +3Q_\gamma^2\Phi^g_{01}
\nonumber\\
&\quad+3Q_c\Phi^g_{12}+3Q_\gamma\Phi^g_{02}+\Phi^g_{03}.
\label{eq:floor-kernel-affine}
\end{align}
If $r=0$, then $g$ is identically zero and its six moments vanish.
Otherwise replace $(c,\gamma)$ by $(r,z)$; hence the remaining branch has
$0<c<m$ and $0\leqslant\gamma<m$.  Put
\[
 h=\floor{\frac{c(n-1)+\gamma}{m}},\qquad
 g(k)=\floor{\frac{mk+m+c-1-\gamma}{c}},\quad0\leqslant k<h.
\]
Discrete reciprocity gives
\begin{align}
\Phi^f_{01}&=nh-\Phi^g_{01},\nonumber\\
\Phi^f_{11}&=hS_1(n)-\frac{\Phi^g_{02}-\Phi^g_{01}}{2},\nonumber\\
\Phi^f_{21}&=hS_2(n)
 -\frac{2\Phi^g_{03}-3\Phi^g_{02}+\Phi^g_{01}}{6},\nonumber\\
\Phi^f_{02}&=nh^2-2\Phi^g_{11}-\Phi^g_{01},\nonumber\\
\Phi^f_{12}&=h^2S_1(n)-\Phi^g_{12}-\frac{\Phi^g_{02}}{2}
 +\Phi^g_{11}+\frac{\Phi^g_{01}}{2},
\label{eq:floor-kernel-reciprocal}\\
\Phi^f_{03}&=nh^3-3\Phi^g_{21}-3\Phi^g_{11}-\Phi^g_{01}.
\nonumber
\end{align}
If $h=0$, the reciprocal child has empty range and again contributes six
zeros.  Otherwise it is evaluated recursively with parameters
$(m,m+c-1-\gamma,c,h)$ in the order (coefficient, intercept, modulus,
length).
The first branch reduces $(c,\gamma)$ modulo $m$ and the second exchanges
the two positive coefficients, so the recursion has ordinary Euclidean depth.
All six outputs are formed from shared powers and power sums; no generic
polynomial evaluator is needed.

The universal one-step formulas use power sums whose denominators divide
$\operatorname{lcm}(2,4,6)=12$, followed by the fixed conversion to lattice
moments.  Expanding them shows that every
coefficient has denominator dividing $72$.  We therefore store the integral
coefficient vector of $72\Pi_{P;i,j}$.  Integral affine substitution preserves this
scale, and application performs one exact division by $72$ per returned
component.

\begin{proof}[Full proof of Lemma~\ref{lem:closure}]
Equation~\eqref{eq:cycle} is an integral affine state map, so homogeneous
affine matrices are closed under composition.  The basis
$1,t,t^2,s,ts,s^2$ is closed under an affine substitution in two variables,
so~\eqref{eq:affinecoords} induces a fixed six-dimensional map.
In~\eqref{eq:one-step-lattice}, sums of degree-two monomials over the explicit
regions have total boundary degree at most four.  Since the child staircase dimensions
are affine in $(q,h)$ by~\eqref{eq:cycle}, induction preserves this degree.
The one-step power sums have common denominator dividing
$\operatorname{lcm}(2,4,6)=12$, and the triangular moment conversion adds a
divisor of $6$; hence $72$ is a common scale.  Integral affine substitution
preserves it.  The constant-size state matrix, six affine coordinate
coefficients, one sign, and fifteen scaled monomial coefficients per component
are therefore sufficient.
\end{proof}

\subsection{Cell-operator and periodic-table bounds}
\label{app:operator-proofs}

\begin{proof}[Proof of Lemma~\ref{lem:tree}]
Let $n_j$ be the number of one-marker groups after the first $j$ coefficient
steps.  Pairing the groups leaves at most $n_j^2$ active operator states at
that step.  Apply the boundary construction to the prefix matrix $P_j$.  If
$\chi_j=(P_j)_{11}$ and $\chi_{j-1}=(P_j)_{12}$, then
\eqref{eq:orbit-count-bound} gives
\[
 n_j=1+\sum_{m=1}^{j}\chi_m
 \leqslant2\chi_j+\chi_{j-1}\leqslant3\chi_j.
\]
The continuants satisfy
\[
 \chi_{j+1}=A_j\chi_j+\chi_{j-1}\geqslant\chi_j+\chi_{j-1},
\]
so $\chi_{j+2}\geqslant2\chi_j$.  When read backward, each of the even and odd
subsequences $(\chi_j^2)_j$ is dominated by a geometric series; hence
$\sum_j\chi_j^2=O(\chi_d^2)$.  Moreover
$\chi_d\leqslant L$, and the path depth is $O(\log(2+L))$.  Hence
$\sum_jn_j^2=O(L^2)$.  Every active state performs one constant-size operator
composition, and the final $L^2$ states produce the cell operators.
\end{proof}

\subsubsection{Small-modulus periodic table}
\label{app:periodic-table}

\begin{proof}[Proof of Lemma~\ref{lem:periodic}]
If $q_{\rm term}=0$, $h_{\rm term}=0$, or $R=0$, the terminal staircase is
empty and all six moments vanish.  Assume below that
$q_{\rm term},h_{\rm term},R>0$.

\noindent\emph{Gcd lookup.}
Store the triangular table
\[
 \mathsf{GCD}[s,0]=s,\qquad
 \mathsf{GCD}[s,r]=\mathsf{GCD}[r,s\bmod r]
 \quad(1\leqslant r<s\leqslant\kappa).
\]
Constructing rows in increasing $s$ makes every entry on the right available
when it is needed, and the Euclidean recurrence gives
$\mathsf{GCD}[s,r]=\gcd(s,r)$.  Hence the table uses $O(\kappa^2)$ time and entries.
Every query passed directly to the table has $M\leqslant\kappa$ and uses
$\mathsf{GCD}[M,R]$.

\smallskip
\noindent\emph{Optional reciprocal cycle.}
If $M>\kappa$ for the terminal query,
Lemma~\ref{lem:onestep} gives $0<R<\kappa$.  Apply one reciprocal cycle as in
\eqref{eq:child-query}; its child coefficient pair is
$(R,M\bmod R)$, with both entries below $\kappa$.  A zero second entry gives
zero child moments; otherwise the child is evaluated by the table procedure
below, including reduction by $\mathsf{GCD}[R,M\bmod R]$.  In both cases,
after conversion to lattice moments, identity~\eqref{eq:one-step-lattice}
lifts the child moments to those
of the original terminal query in $O(1)$ operations.  Thus it remains to
describe the lookup for a query whose coefficients are at most $\kappa$.

\smallskip
\noindent\emph{Removing a common divisor.}
Consider a nonzero staircase passed to the table,
\[
 f(t)=\floor{\frac{Rt+\beta}{M}},
 \qquad g=\gcd(M,R),
 \qquad M=g\bar M,\quad R=g\bar R,
\]
and write $\beta=gc_\beta+r_\beta$, where $0\leqslant r_\beta<g$.  Then
\begin{equation}\label{eq:periodic-gcd-reduction}
 f(t)=\floor{\frac{\bar R t+c_\beta}{\bar M}}.
\end{equation}
Indeed, before taking the floor the omitted term is
$r_\beta/(g\bar M)<1/\bar M$, while every fractional part of
$(\bar R t+c_\beta)/\bar M$ is an integral multiple of $1/\bar M$.  The omitted
term therefore cannot cross the next integer.  This proves
\eqref{eq:periodic-gcd-reduction} and reduces the coefficients to the coprime
pair $(\bar M,\bar R)$.

\smallskip
\noindent\emph{Removing the intercept.}
For this coprime pair define the zero-intercept staircase
$f_0(t)=\floor{\bar R t/\bar M}$.  Choose the unique
$\zeta\in\{0,\ldots,\bar M-1\}$ satisfying
\[
 \zeta\equiv c_\beta\bar R^{-1}\pmod{\bar M},
 \qquad \bar R\zeta=z_0\bar M+c_\beta.
\]
Equation~\eqref{eq:periodic-gcd-reduction} becomes
\begin{equation}\label{eq:periodic-phase-shift}
 f(t)=f_0(t+\zeta)-z_0.
\end{equation}
Thus every intercept for the pair $(\bar M,\bar R)$ is obtained by a cyclic shift
and a vertical translation of the same zero-intercept staircase.

\smallskip
\noindent\emph{Stored prefixes and one query.}
For each coprime $(\bar M,\bar R)$ store
$\bar R^{-1}\bmod\bar M$ and the prefix vectors
\[
 \operatorname{Pref}(s)=\sum_{0\leqslant t<s}
 \bigl(f_0(t),t f_0(t),t^2f_0(t),f_0(t)^2,
       t f_0(t)^2,f_0(t)^3\bigr),
 \qquad 0\leqslant s\leqslant2\bar M.
\]
These are precisely the six floor-power moments in~\eqref{eq:sixmom}.
The second period is needed because a shifted fragment may start at
$\zeta$ and cross the end of the first period.  For a requested length
$q_{\rm req}$, write $q_{\rm req}=w\bar M+\ell$, where
$w=\floor{q_{\rm req}/\bar M}$ and $0\leqslant\ell<\bar M$.  The final fragment
is the difference of two stored prefix vectors.  The $w$ complete periods use
\[
 f_0(t+j\bar M)=f_0(t)+j\bar R.
\]
Here are the six constant-time shift formulas explicitly.  Extend the stored
prefix notation by
\[
 \operatorname{Pref}_{pr}(s):=\sum_{0\leqslant t<s}t^p f_0(t)^r,
 \qquad (p,r)\in
 \{(0,1),(1,1),(2,1),(0,2),(1,2),(0,3)\},
\]
where the six entries with $r>0$ are stored and
$\operatorname{Pref}_{p0}(s):=S_p(s)$ is an ordinary power sum.  For
$0\leqslant\ell\leqslant\bar M$ define the phase-corrected fragment
moments
\begin{equation}\label{eq:periodic-fragment-binomial}
 \operatorname{Frag}_{pr}^{(\ell)}=
 \sum_{i=0}^{p}\sum_{j=0}^{r}
 \binom pi\binom rj(-\zeta)^{p-i}(-z_0)^{r-j}
 \bigl(\operatorname{Pref}_{ij}(\zeta+\ell)
       -\operatorname{Pref}_{ij}(\zeta)\bigr).
\end{equation}
Thus $\operatorname{Frag}_{pr}^{(\ell)}=\sum_{0\leqslant s<\ell}
s^p(f_0(s+\zeta)-z_0)^r$.  Put
$S_1(\ell)=\ell(\ell-1)/2$ and
$S_2(\ell)=\ell(\ell-1)(2\ell-1)/6$, consistently with the power sums
already used above.  The block beginning at horizontal offset $j\bar M$ has
\begin{align}
 \operatorname{Block}_{01}(j;\ell)
 &=\operatorname{Frag}_{01}^{(\ell)}+j\bar R\ell,
   \label{eq:periodic-B01}\\
 \operatorname{Block}_{11}(j;\ell)
 &=\operatorname{Frag}_{11}^{(\ell)}
   +j\bar M\operatorname{Frag}_{01}^{(\ell)}
   +j\bar R S_1(\ell)+j^2\bar M\bar R\ell,
   \label{eq:periodic-B11}\\
 \operatorname{Block}_{21}(j;\ell)
 &=\operatorname{Frag}_{21}^{(\ell)}
   +2j\bar M\operatorname{Frag}_{11}^{(\ell)}
   +j^2\bar M^2\operatorname{Frag}_{01}^{(\ell)}
   +j\bar R S_2(\ell)\nonumber\\
 &\quad{}+2j^2\bar M\bar R S_1(\ell)
   +j^3\bar M^2\bar R\ell,\label{eq:periodic-B21}\\
 \operatorname{Block}_{02}(j;\ell)
 &=\operatorname{Frag}_{02}^{(\ell)}
   +2j\bar R\operatorname{Frag}_{01}^{(\ell)}
   +j^2\bar R^2\ell,\label{eq:periodic-B02}\\
 \operatorname{Block}_{12}(j;\ell)
 &=\operatorname{Frag}_{12}^{(\ell)}
   +j\bar M\operatorname{Frag}_{02}^{(\ell)}
   +2j\bar R\operatorname{Frag}_{11}^{(\ell)}
   +2j^2\bar M\bar R\operatorname{Frag}_{01}^{(\ell)}\nonumber\\
 &\quad{}+j^2\bar R^2S_1(\ell)
   +j^3\bar M\bar R^2\ell,\label{eq:periodic-B12}\\
 \operatorname{Block}_{03}(j;\ell)
 &=\operatorname{Frag}_{03}^{(\ell)}
   +3j\bar R\operatorname{Frag}_{02}^{(\ell)}
   +3j^2\bar R^2\operatorname{Frag}_{01}^{(\ell)}
   +j^3\bar R^3\ell.
   \label{eq:periodic-B03}
\end{align}
Consequently the answer to a table query of length
$q_{\rm req}=w\bar M+\ell$ is, componentwise,
\begin{equation}\label{eq:periodic-complete-query}
 \Phi_{pr}=\sum_{j=0}^{w-1}\operatorname{Block}_{pr}(j;\bar M)
             +\operatorname{Block}_{pr}(w;\ell).
\end{equation}
The first sum uses only the four ordinary values
$\sum1,\sum j,\sum j^2,\sum j^3$; the second term uses two stored prefix
vectors through~\eqref{eq:periodic-fragment-binomial}.  This yields an
$O(1)$ lookup algorithm for all six moments.

Both the horizontal shift $t\mapsto t+j\bar M$ and the vertical shift
$f_0\mapsto f_0+j\bar R$ are affine.  Consequently every one of the six block
moments is a polynomial of degree at most three in $j$, and the sum over all
complete periods is obtained from the ordinary power sums of
$1,j,j^2,j^3$.  Equation~\eqref{eq:periodic-phase-shift} is one further
horizontal and vertical affine shift; any pure powers introduced by the
translation are ordinary power sums.  Thus the table query is answered with a
constant number of prefix lookups and arithmetic operations, and
\eqref{eq:floor-lattice-conversion} converts the result to the six lattice
moments.  If the optional reciprocal cycle was taken, the constant-size lift
described above returns the moments of the original terminal query expected by
the rectangle operator.

\smallskip
\noindent\emph{Construction size.}
For a fixed $\bar M$ there are $\varphi(\bar M)$ possible coprime values of
$\bar R$, and each stores $2\bar M+1$ constant-size vectors.  Hence the total
number of stored vectors is
\[
\sum_{2\leqslant \bar M\leqslant\kappa}2\bar M\varphi(\bar M)=O(\kappa^3).
\]
All prefixes take the same $O(\kappa^3)$ construction time.  Computing every
stored modular inverse separately by extended Euclid costs
$O(\kappa^2\log\kappa)=O(\kappa^3)$ more operations.  Together with the
$O(\kappa^2)$ gcd table, this proves all three claims of
Lemma~\ref{lem:periodic}.
\end{proof}

\subsection{Short-query fallback}
\label{app:early-queries}

\begin{proof}[Proof of Theorem~\ref{thm:early}]
At reciprocal state $i$, let $\beta_i=a_i-u_i-1$ be its intercept and write
the length pair as $\ell_i=(q_i,h_i)^T$, the coefficient quotient as $A_i$, the two marker
quotients as $U_i,V_i$, and $Q_i:=Q(A_i)$.  Let
$e_1=(1,0)^T$ and $e_2=(0,1)^T$.  Equation~\eqref{eq:cycle} reversed is
\begin{equation}\label{eq:lengthreverse}
 \ell_i=Q_i\ell_{i+1}+d_i e_1,
 \qquad d_i=U_i+V_i+2-A_i\leqslant A_i+2.
\end{equation}
The bound follows from $0\leqslant u_i,v_i<a_i$, which gives
$0\leqslant U_i,V_i\leqslant A_i$.
Let $P_i=Q_0\cdots Q_{i-1}$, let $\mathbf c_i=P_i e_1$ be its first column,
and put $\mathbf c_{-1}=e_2$.  Then, componentwise,
\begin{equation}\label{eq:continuantrec}
 \mathbf c_{i+1}=A_i\mathbf c_i+\mathbf c_{i-1},
 \qquad
 \sum_{i=0}^{k-1}\mathbf c_i\leqslant
 \mathbf c_k+\mathbf c_{k-1}.
\end{equation}

Suppose that a query would reach zero height before the end of its compiled
coefficient path, and let $k$ be the first state with $h_k=0$.  Then
$\floor{(b_k(q_k-1)+\beta_k)/a_k}=0$, hence $q_k\leqslant A_k+1$.  Because the
coefficient path continues, the accepted columns
$\mathbf c_{k-1},\mathbf c_k,\mathbf c_{k+1}$ are componentwise at most
$\tau$.  Iterating
\eqref{eq:lengthreverse} and using~\eqref{eq:continuantrec} gives
\[
 \ell_0=P_k\ell_k+\sum_{i<k}d_i\mathbf c_i,\qquad
 P_k\ell_k=q_k\mathbf c_k
 \leqslant(A_k+1)\mathbf c_k
 \leqslant \mathbf c_{k+1}+\mathbf c_k
 \leqslant2\tau,
\]
and
\[
 \sum_{i<k}d_i\mathbf c_i
 \leqslant\sum_{i<k}(A_i+2)\mathbf c_i
 \leqslant3(\mathbf c_k+\mathbf c_{k-1})
 \leqslant3\mathbf c_{k+1}\leqslant3\tau.
\]
For the middle inequality, the recurrence telescopes to
$\sum_{i<k}A_i\mathbf c_i
=\mathbf c_k+\mathbf c_{k-1}-\mathbf c_0-\mathbf c_{-1}$, while
$2\sum_{i<k}\mathbf c_i\leqslant
2(\mathbf c_k+\mathbf c_{k-1})$ by
\eqref{eq:continuantrec}.
Therefore both root lengths satisfy $q_0,h_0\leqslant5\tau$, so every query
that would terminate along its compiled path is short.  Pointwise evaluation
has ordinary Euclidean depth $O(\log a)=O(\log D)=O(\log(2+\tau))$.

Recall $q_{\mathrm{sq}}=\floor{(N-a^2)/b}$ from
Appendix~\ref{app:half-domain-moments}.  If
$q_{\mathrm{sq}}\geqslant a-1$, the bound $q\leqslant5\tau$ leaves only
$O(\tau^2)$ pairs $(a,b)$.  Otherwise $q=q_{\mathrm{sq}}\leqslant5\tau$ and
\[
 N-a^2<b(5\tau+1)\leqslant D(5\tau+1).
\]
Since $D^2\leqslant N$, this implies $D-a<5\tau+1$.  Hence there are
$O(\tau)$ possible moduli and fewer than $D$ coefficients for each.
Altogether there are $O(D\tau)$ short queries; multiplying by the
pointwise-recursion depth proves the theorem.
\end{proof}

\subsection{Complete \texorpdfstring{$O(n\log n)$}{O(n log n)} algorithm:
formulas and pseudocode}
\label{app:complete-algorithm}

Fix a layer size $N$.  For a direction pair $(a,b)$, the half-domain
reduction produces the normalized query~\eqref{eq:floor-query}.  Its length
$q$ is the number of staircase columns.  Write
\[
 N-bq=\eta a+\beta,
 \qquad 0\leqslant\beta<a.
\]
The quotient $\eta$ is needed only when the staircase moments are converted
back to the original side coordinates.

The six moments $J_{ij}$ of~\eqref{eq:halfmoment} split as
$J_{ij}=J^{\mathrm{upper}}_{ij}-J^{\mathrm{lower}}_{ij}$ by
Appendix~\ref{app:half-domain-moments}.  The lower part and the
equality-boundary correction $E_{a,b}$ are explicit power-sum expressions,
whereas the
upper part contains the recursive floor query.  Hence record generation adds
the known contribution
\begin{equation}\label{eq:record-closed-correction}
 \mathbf c^{\mathrm{closed}}_{a,b}
 :=-8\mathcal A_{a,b}(J^{\mathrm{lower}})-E_{a,b},
\end{equation}
and, after the selected cell operator has returned the six staircase moments,
adds
\begin{equation}\label{eq:record-moment-contribution}
 \mathbf c^{\mathrm{query}}_{a,b}
 :=8\mathcal A_{a,b}(J^{\mathrm{upper}}).
\end{equation}
Their sum is exactly the contribution of $(a,b)$ to the layer triple.  A
nonempty record stores the root state of
Definition~\ref{def:affine-query-state}, its stopped terminal pair $(M,R)$,
and $\eta$.  The pointwise recursion recovers its intercept from
$\beta=a-u-1$.

For every integer $x\geqslant0$, define the three weighted M\"obius prefixes
\[
 \mathfrak M_s(x)=\sum_{d\leqslant x}\mu(d)d^s,
 \qquad s=0,1,2.
\]
Thus $\mathfrak M_s(0)=0$.  For $x\geqslant1$, the identity
$\sum_{d\mid k}\mu(d)=[k=1]$ gives
\begin{equation}\label{eq:mobius-prefix}
 \mathfrak M_s(x)=1-\sum_{j=2}^{x}j^s\mathfrak M_s(\floor{x/j}).
\end{equation}
Equal quotients in this recurrence are grouped into constant-time blocks by
the ordinary power sums.  If $\kappa(D)$ denotes the terminal threshold for
layer cap $D$, put
\[
 \kappa_{\max}(n)=
 \max_{1\leqslant D\leqslant\floor{\sqrt n}}\kappa(D).
\]
It is obtained by one scan over these $D$ values; this costs
$O(\sqrt n)$ arithmetic operations and $O(1)$ additional words.
The periodic table is constructed once at limit $\kappa_{\max}(n)$ and shared
by all layer calls.

\begin{algorithm}[H]
  \small
  \caption{Quotient-blocked $O(n\log n)$ algorithm.}
  \label{alg:rectangle-count}
  \begin{algorithmic}
    \AlgLine{0}{\textbf{procedure} $\textsc{RectangleCount}(n)$}
    \AlgLine{1}{initialize $\mathfrak M_0,\mathfrak M_1,\mathfrak M_2$
      by~\eqref{eq:mobius-prefix}}
    \AlgLine{1}{construct the read-only periodic table
      $\mathsf T_{\kappa_{\max}(n)}$}
    \AlgLine{1}{$ans\gets F_0(n)$ from~\eqref{eq:boundary-count}; $\ell\gets1$}
    \AlgLine{1}{\textbf{while} $\ell\leqslant n$ \textbf{do}}
    \AlgLine{2}{$N\gets\floor{n/\ell}$;
      $r_{\rm end}\gets\floor{n/N}$}
    \AlgLine{2}{\textbf{for} $s=0,1,2$ \textbf{do}
      $\Delta_s\gets\mathfrak M_s(r_{\rm end})-\mathfrak M_s(\ell-1)$}
    \AlgLine{2}{\textbf{if} $(\Delta_0,\Delta_1,\Delta_2)\ne(0,0,0)$
      \textbf{then}}
    \AlgLine{3}{$(A_N,B_N,C_N)\gets
      \textsc{RecursiveLayer}(N,\mathsf T_{\kappa_{\max}(n)})$}
    \AlgLine{3}{$ans\gets ans+n^2A_N\Delta_0
      -nB_N\Delta_1+C_N\Delta_2$}
    \AlgLine{2}{\textbf{end if}; $\ell\gets r_{\rm end}+1$}
    \AlgLine{1}{\textbf{end while}; \textbf{return} $ans$}
    \AlgLine{0}{\textbf{end procedure}}
  \end{algorithmic}
\end{algorithm}

\begin{algorithm}[H]
  \small
  \caption{Batched divisor layer.}
  \label{alg:linear-layer}
  \begin{algorithmic}
    \AlgLine{0}{\textbf{procedure} $\textsc{RecursiveLayer}(N,\mathsf T)$}
    \AlgLine{1}{compute $(D,\sigma,\tau,\kappa)$ by~\eqref{eq:layer-scales}}
    \AlgLine{1}{$Z\gets\mathcal D(N)$ from~\eqref{eq:diagonal-contribution}}
    \AlgLine{1}{traverse the coefficient-cone recursion $P\mapsto PQ(A)$}
    \AlgLine{1}{\textbf{for every reached cone} $P$ \textbf{do}}
    \AlgLine{2}{construct the symbolic true-marker slots and their traces as in
      Appendix~\ref{app:marker-orbit-proof}}
    \AlgLine{2}{compile all $\mathsf{Op}_{P;i,j}$ through the shared prefix tree}
    \AlgLine{2}{construct both uniform grids and their correction data}
    \AlgLine{2}{\textbf{for every stopped pair} $(M,R)\in\mathcal T(P)$
      \textbf{do}}
    \AlgLine{3}{$\binom ab\gets P\binom MR$; compute
      $\mathbf c^{\mathrm{closed}}_{a,b}$; $Z\gets Z+\mathbf c^{\mathrm{closed}}_{a,b}$;
      generate the possible query record}
    \AlgLine{3}{\textbf{if} $q>0$ \textbf{then}}
    \AlgLine{4}{\textbf{if} $q\leqslant5\tau$ \textbf{then}}
    \AlgLine{5}{obtain $\boldsymbol\Phi$ directly from
      \eqref{eq:floor-kernel-affine}--\eqref{eq:floor-kernel-reciprocal}}
    \AlgLine{4}{\textbf{else} compute the two marker codes, derive $(i,j)$ by
      \eqref{eq:virtual-code-to-slot}, and access $\mathsf{Op}_{P;i,j}$}
    \AlgLine{5}{evaluate the affine terminal coordinates; obtain
      $\mathbf L_{\mathrm{term}}$ from $\mathsf T$}
    \AlgLine{5}{$\mathbf L_{\rm root}\gets
       \mathsf U_{P;i,j}\mathbf L_{\mathrm{term}}+
       \Pi_{P;i,j}(q,h)$; convert $\mathbf L_{\rm root}$ to
       $\boldsymbol\Phi$ by~\eqref{eq:floor-lattice-conversion}}
    \AlgLine{4}{\textbf{end if}; form $J^{\mathrm{upper}}$ from
      $\boldsymbol\Phi$ and $\eta$ by~\eqref{eq:upper-moment-map}}
    \AlgLine{4}{$Z\gets Z+8\mathcal A_{a,b}(J^{\mathrm{upper}})$}
    \AlgLine{3}{\textbf{end if}}
    \AlgLine{2}{\textbf{end for}; discard the path-local objects}
    \AlgLine{1}{\textbf{end for}; \textbf{return} $Z=(A(N),B(N),C(N))$}
    \AlgLine{0}{\textbf{end procedure}}
  \end{algorithmic}
\end{algorithm}

For correctness, Lemma~\ref{lem:momentreduction} produces exactly the closed
correction and floor-moment record for every direction pair, and
Lemma~\ref{lem:terminal-bijection} assigns that pair to exactly one accepted
coefficient path.  Theorem~\ref{thm:virtual-rectangle-menu} selects its exact
true marker rectangle.  Lemmas~\ref{lem:closure} and~\ref{lem:periodic}
evaluate every long record, while Section~\ref{sec:early-termination}
handles every short record.  Thus Algorithm~\ref{alg:linear-layer} contributes
every term of~\eqref{eq:complete-layer-generation} once.  Finally,
\eqref{eq:mobius-prefix} and quotient blocking give~\eqref{eq:layers}, so
Algorithm~\ref{alg:rectangle-count} returns~\eqref{eq:full-geometric-count}.

\medskip
\noindent\textbf{Executable cone traversal.}
Algorithm~\ref{alg:visit-cone} expands the ``traverse'' line of
Algorithm~\ref{alg:linear-layer}.  The predicate
$\operatorname{Root}(P,M,R)$ abbreviates
\[
 1\leqslant \pi_{21}M+\pi_{22}R
 <\pi_{11}M+\pi_{12}R\leqslant D.
\]
An intersection with this predicate is an integer interval: its endpoints are
obtained by signed ceiling and floor divisions as follows.  For fixed $R$,
$a\leqslant D$ gives
$M\leqslant\floor{(D-\pi_{12}R)/\pi_{11}}$; if $\pi_{21}>0$, then $b\geqslant1$
gives $M\geqslant\ceil{(1-\pi_{22}R)/\pi_{21}}$; and if
$\pi_{11}>\pi_{21}$, then $b<a$ gives
\[
 M\geqslant
 \floor{\frac{(\pi_{22}-\pi_{12})R}{\pi_{11}-\pi_{21}}}+1.
\]
If $\pi_{21}=0$, the condition $b\geqslant1$ is checked directly; if
$\pi_{11}=\pi_{21}$, the condition $b<a$ reduces to
$\pi_{22}R<\pi_{12}R$.  Empty intervals are ignored.  Procedure
$\textsc{Emit}$ executes the complete inner-loop action of
Algorithm~\ref{alg:linear-layer} for one stopped pair: it adds the closed
correction and, when the query is nonempty, evaluates either its short or its
long record.  The traversal starts with $\textsc{VisitCone}(I,\bot)$, where
$I$ is the identity matrix and the absent last quotient is never read.

\begin{algorithm}[H]
  \small
  \caption{Depth-first visit of one coefficient cone.}
  \label{alg:visit-cone}
  \begin{algorithmic}
    \AlgLine{0}{\textbf{procedure} $\textsc{VisitCone}(P,A_{\rm last})$}
    \AlgLine{1}{$e\gets\operatorname{ext}(P)$}
    \AlgLine{1}{construct the path slots, traces, operators, grids, and correction data}
    \AlgLine{1}{\textbf{for} $1\leqslant M\leqslant\tau$ and
      $0\leqslant R<M$ \textbf{do}}
    \AlgLine{2}{\textbf{if} $\operatorname{Root}(P,M,R)$ and
      ($P=I$ or $A_{\rm last}M+R>\tau$) \textbf{then}
      $\textsc{Emit}(P,M,R)$}
    \AlgLine{1}{\textbf{end for}}
    \AlgLine{1}{\textbf{if} $P\ne I$ and $A_{\rm last}\geqslant2$
      \textbf{then}}
    \AlgLine{2}{\textbf{for} $\tau<M\leqslant\floor{D/\pi_{11}}$ with
      $\operatorname{Root}(P,M,0)$ \textbf{do} $\textsc{Emit}(P,M,0)$}
    \AlgLine{1}{\textbf{end if}; $A_{\rm stop}\gets e+1$}
    \AlgLine{1}{\textbf{for} $1\leqslant R\leqslant
      \floor{D/(\pi_{11}A_{\rm stop}+\pi_{12})}$ \textbf{do}}
    \AlgLine{2}{$\mathcal I_R\gets
      [\max\{\tau+1,A_{\rm stop}R,R+1\},
      \floor{(D-\pi_{12}R)/\pi_{11}}]$}
    \AlgLine{2}{$\mathcal I_R\gets\mathcal I_R
      \cap\{M:\operatorname{Root}(P,M,R)\}$}
    \AlgLine{2}{\textbf{for} $M\in\mathcal I_R\cap\mathbb Z$ \textbf{do}
      $\textsc{Emit}(P,M,R)$}
    \AlgLine{1}{\textbf{end for}}
    \AlgLine{1}{discard the path-local objects}
    \AlgLine{1}{\textbf{for} $1\leqslant A\leqslant e$ \textbf{do}
      $\textsc{VisitCone}(PQ(A),A)$}
    \AlgLine{0}{\textbf{end procedure}}
  \end{algorithmic}
\end{algorithm}

The three emission blocks are respectively the small-modulus,
zero-remainder, and overflow pieces of $\mathcal T(P)$.  The recursive calls
are precisely the accepted continuations, so
Lemma~\ref{lem:terminal-bijection} proves that the visitor emits every root
pair once.  At a visited path, the path-local setup is also explicit: order
the $L$ sample residues used in~\eqref{eq:orbit-boundaries}; fold one representative
of every nonempty slot to obtain its marker word; pair the two word lists in a
prefix tree and compose the corresponding cycle operators; then form the
groups $K_0,K_1$ and their prefix counts.  The implementation evaluates
\eqref{eq:virtual-local-rank}--\eqref{eq:virtual-code-to-slot} on demand instead
of materializing~\eqref{eq:virtual-menu-entry}.  Only after this setup does
$\textsc{Emit}$ stream the stopped pairs of that path.

\medskip
\noindent\textbf{Worked record.}
Take the layer $N=100$.  Then
\[
 (D,\sigma,\tau,\kappa)=(10,4,2,5).
\]
For $(a,b)=(4,3)$, the half-domain formulas give
\[
 q_{\rm sq}=28,\qquad q=13,\qquad
 N-bq=61=15a+1,
\]
so $\eta=15$, $\beta=1$, and the root state is
\[
 (a,b;q,h;u,v)=(4,3;13,9;2,1).
\]
Its first coefficient quotient is one.  Thus
$P=Q(1)=\left(\begin{smallmatrix}1&1\\1&0\end{smallmatrix}\right)$ is
accepted, the terminal pair is $(M,R)=(3,1)$, and the next quotient three is
rejected because $PQ(3)$ has an entry four, exceeding $\tau$.

Here $L=1+\chi_1=2$ and the true boundary residues are
$0$ and $3$ modulo $H=4$.  Since $2R\leqslant M$, grid zero is selected;
it has $T=\pi_{11}=1$ cell.  Both markers have code $(0,0)$, so the code-to-slot
map selects the unique nonempty rectangle containing $(u,v)=(2,1)$ and its
operator with quotient triple $(A,U,V)=(1,0,0)$.  One cycle gives
\[
 (3,1;9,3;0,1).
\]
Because $M=3\leqslant\kappa$, the terminal table answers
$f'(k)=\floor{(k+2)/3}$ for $0\leqslant k<9$ with
\[
 \mathbf L_{\rm term}=(15,81,507,24,141,46)^{\mathsf T}.
\]
The one-cycle operator~\eqref{eq:one-step-lattice} lifts this to
\[
 \mathbf L_{\rm root}=(57,477,4475,204,1887,996)^{\mathsf T},
\]
and~\eqref{eq:floor-lattice-conversion} gives
\[
 \boldsymbol\Phi=(57,477,4475,351,3297,2433)^{\mathsf T}.
\]
With $\eta=15$, Equation~\eqref{eq:upper-moment-map}, in the order
$J_{00},J_{01},J_{02},J_{10},J_{11},J_{20}$, yields
\[
 J^{\rm upper}=(252,1629,13991,2619,15645,36061).
\]
Hence the deferred contribution is
\[
 8\mathcal A_{4,3}(J^{\rm upper})
 =(2016,237888,7933992).
\]
For completeness, the explicit lower moments are
$(94,823,8287,463,4561,3207)$ in the same order, and
$E_{4,3}=(12,504,5376)$.  Thus
$\mathbf c^{\rm closed}_{4,3}=(-764,-72520,-2021000)$, and the complete
contribution of this record to the layer accumulator is
\[
 (1252,165368,5912992).
\]

\subsection{Detailed \texorpdfstring{$O(n\log n)$}{O(n log n)} complexity accounting}
\label{app:one-value-complexity}

From~\eqref{eq:layer-scales},
\[
 D^2=O(N),\qquad \tau^4=O(N),\qquad
 \kappa^3=O(N^{3/4}).
\]
The scales are linked.  The half-domain reduction needs every pair up to
$D=\floor{\sqrt N}$.  The coefficient-cone and marker-table work is linear
only if $\tau^4=O(D^2)$, hence $\tau=O(\sqrt D)$.  Lemma~\ref{lem:onestep}
needs $\tau\kappa\geqslant D$, so then $\kappa=\Omega(\sqrt D)$.  Thus
$\tau,\kappa=\Theta(\sqrt D)$ is the common boundary that preserves linear
time and minimizes the terminal-table memory within this construction.

\begin{center}
\begin{tabular}{lr}
\toprule
object & arithmetic operations\\
\midrule
coefficient-cone nodes and stopped pairs & $O(D^2+\tau^4)$\\
all boundary grids and correction data & $O(\tau^4)$\\
all cell-operator compilations & $O(\tau^4)$\\
record generation and long-record applications & $O(D^2)$\\
reduced periodic table & $O(\kappa^3)$\\
short-query pointwise evaluations & $O(D\tau\log(2+\tau))$\\
\bottomrule
\end{tabular}
\end{center}

There are $O(\tau^2)$ paths.  Each has $L\leqslant3\tau$, so its true
intervals, $L^2$ operators, and two grids with their prefix data together cost
$O(\tau^2)$; over all paths this is $O(\tau^4)$.  Long records take
constant time and Theorem~\ref{thm:early} gives
$D\tau\log(2+\tau)=o(N)$.  Processing one path at a time uses
$O(\tau^2)$ live words, while the periodic prefixes use $O(\kappa^3)$ words
and dominate.  This completes the detailed proof of Theorem~\ref{thm:layer}.

\subsubsection{Weighted M\"obius prefix computation}
\label{app:mobius-prefix-memory}

Algorithms for isolated values of the unweighted Mertens function provide the
standard context for this quotient-block approach~\cite{deleglise-rivat}; here
the same decomposition is applied to the three weighted prefixes required by
\eqref{eq:mobius-prefix}.  Let $S_\mu$ be a sieve cutoff.  Sieving all arguments up to $S_\mu$ costs
$O(S_\mu)$ time and storage.  The larger distinct arguments have the form
$\floor{n/j}$ with $j\leqslant n/S_\mu$.  Evaluate each such argument once,
in increasing order, and memoize its three weighted prefix values.  If
$x=\floor{n/j}$, every nontrivial recursive child, with $k\geqslant2$, satisfies
\[
 \floor{\floor{n/j}/k}=\floor{n/(jk)}.
\]
It is therefore either covered by the sieve or is another member of the same
memoized distinct-quotient set, and it is already available in this order.
Since the quotient-block recurrence for one argument $x$ has $O(\sqrt x)$
blocks, the total cost after memoization is bounded by
\[
 \sum_{j\leqslant n/S_\mu}\sqrt{n/j}
 =O(nS_\mu^{-1/2}).
\]
Thus the combined bound is $O(S_\mu+n/\sqrt{S_\mu})$.  Balancing its two terms
gives $S_\mu=n/\sqrt{S_\mu}$ and hence $S_\mu=n^{2/3}$.  With
$S_\mu=\ceil{n^{2/3}}$, the prefix computation takes $O(n^{2/3})$ time and
storage.

\begin{proof}[Proof of Lemma~\ref{lem:one-value-operands}]
Root coefficient, marker, and length coordinates are at most $n$.  An
accepted continuant matrix has entries at most $\tau$.  If
$A_0,\ldots,A_{d-1}$ are the coefficient quotients of its path, then
\[
 \prod_i A_i\leqslant\pi_{11}\leqslant\tau,
 \qquad d=O(\log(2+\tau)).
\]
Every local affine-state coefficient is $O(A_i+2)$ because
$0\leqslant U_i,V_i\leqslant A_i$.  Composition in a fixed degree-four
polynomial basis enlarges coefficient magnitudes by at most
$C(A_i+2)^C$ for an absolute constant $C$.  Since every $A_i\geqslant1$,
\[
 \prod_i C(A_i+2)^C
 \leqslant C^d3^{Cd}\Bigl(\prod_iA_i\Bigr)^C
 \leqslant C^d3^{Cd}\tau^C=\tau^{O(1)},
\]
where the last equality uses $d=O(\log(2+\tau))$.  Thus every coefficient
stored in an affine endpoint, terminal marker, moment map, or boundary
correction is $n^{O(1)}$.  More explicitly,
$|\mathfrak M_s(x)|\leqslant\sum_{d\leqslant x}d^s\leqslant n^{s+1}$ for
$s=0,1,2$.  Every floor or periodic moment sums at most $n$ terms whose
coordinates are at most $n$, and hence is $n^{O(1)}$.  Each layer and final
accumulator combines only polynomially many such terms; multiplication by the
fixed-dimensional operator coefficients therefore preserves a polynomial
bound.  Finally $F(n)\leqslant\binom{n^2}{4}=O(n^8)$.  Thus every intermediate
sum and product is $n^{O(1)}$ and has $O(\log n)$ bits.
\end{proof}
\bibliography{references}

\end{document}